\theoremstyle{plain}
\newtheorem{theorem}{Theorem}[section]
\newtheorem{lemma}[theorem]{Lemma}
\newtheorem{corollary}[theorem]{Corollary}
\theoremstyle{definition} 
\newtheorem{definition}[theorem]{Definition}
\newcommand{\Case}[2]{\smallskip\par{\it Case #1:\/ #2}}
\newcommand{\refeq}[1]{(\ref{eq:#1})}
\newcommand{\of}[1]{\left( #1 \right)}
\newcommand{\Set}[1]{\left\{\, #1 \,\right\}}
\newcommand{\setdef}[2]{\left\{ \hspace{0.5mm} #1 : \hspace{0.5mm} #2 \right\}}
\newcommand{\msetdef}[2]{\left\{\!\!\left\{ \hspace{0.5mm} #1 : \hspace{0.5mm} #2 \right\}\!\!\right\}}
\newcommand{\A}{\forall}
\newcommand{\E}{\exists}
\newcommand{\und}{\wedge}
\newcommand{\function}[2]{:#1 \rightarrow #2}
\newcommand{\WL}[1]{\ensuremath{#1\text{-}\mathrm{WL}}\xspace}
\newcommand{\kWL}{\WL k}
\newcommand{\sub}[2]{\mathrm{sub}(#1,#2)}
\newcommand{\hhomo}[2]{\mathrm{hom}(#1,#2)}
\newcommand{\eqkwl}{\equiv_{\kWL}}
\newcommand{\eqkkwl}[1]{\equiv_{#1\text{-}\mathrm{WL}}}
\newcommand{\cclass}[1]{\ensuremath{\mathcal{C}(#1)}\xspace}
\newcommand{\rclass}[1]{\ensuremath{\mathcal{R}(#1)}\xspace}
\newcommand{\forb}[1]{\mathit{Forb}(#1)}
\newcommand{\Pa}{{\mathcal P}}
\newcommand{\colclass}[1]{
\makeatletter
\xdef\clcl@verts{}
\foreach \v in {#1}{\xdef\clcl@verts{\clcl@verts(\v)}}
\begin{scope}[on background layer]
\node[fit/.expand once={\clcl@verts},inner sep=4pt,rectangle,fill=black!20,rounded corners=8pt,draw=none] {};
\end{scope}
\makeatother
}
\newcommand{\mset}[1]{\left\{\!\!\left\{ \hspace{0.5mm} #1 \hspace{0.5mm} \right\}\!\!\right\}}
\newcommand{\alg}[4]{\mathrm{WL}_{#1}^{#2}(#3,#4)}
\newcommand{\algg}[3]{\mathrm{WL}_{#1}^{#2}(#3)}
\newcommand{\baru}{{\bar u}}
\newcommand{\tw}[1]{\mathit{tw}(#1)}
\newcommand{\htw}[1]{\mathit{htw}(#1)}
\newcommand{\probl}[1]{\textsc{\small #1}}
\title{On Weisfeiler-Leman Invariance:\\
Subgraph Counts and Related Graph Properties}
\author{V.~Arvind\thanks{The Institute of Mathematical Sciences (HBNI), Chennai, India.}, 
Frank Fuhlbrück\thanks{Institut für Informatik, Humboldt-Universität zu Berlin, Germany.}, 
Johannes Köbler${}^\dagger$, Oleg Verbitsky${}^\dagger$\,\thanks{Supported by DFG grant KO 1053/8--1. On leave from the IAPMM, Lviv, Ukraine.}}
\date{}
\begin{document}

\maketitle

\begin{abstract}
The $k$-dimensional Weisfeiler-Leman algorithm (\kWL) is a fruitful
approach to the Graph Isomorphism problem. \WL2 corresponds to the
original algorithm suggested by Weisfeiler and Leman over 50 years
ago. \WL1 is the classical color refinement routine.
Indistinguishability by \kWL is an equivalence relation on graphs that
is of fundamental importance for isomorphism testing, descriptive
complexity theory, and graph similarity testing which is also of some
relevance in artificial intelligence.  Focusing on dimensions
$k=1,2$, we investigate subgraph patterns whose counts are \kWL
invariant, and whose occurrence is \kWL invariant. We achieve a
complete description of all such patterns for dimension $k=1$ and
considerably extend the previous results known for $k=2$.
\end{abstract}

\section{Introduction}

\emph{Color refinement} is a classical procedure widely used in
isomorphism testing and other areas.  It initially colors each vertex
of an input graph by its degree and refines the vertex coloring in
rounds, taking into account the colors appearing in the neighborhood
of each vertex. This simple and efficient procedure successfully
canonizes almost all graphs in linear time \cite{BabaiES80}. Combined
with \emph{individualization}, it is the basis of the most successful
practical algorithms for the graph isomorphism problem; see
\cite{McKayP14} for an overview and historical comments.

The first published work on color refinement dates back at least to
1965 (Morgan \cite{Morgan65}). In 1968 Weisfeiler and Leman
\cite{WLe68} gave a procedure that assigns colors to \emph{pairs} of
vertices of the input graph. The initial colors are \emph{edge},
\emph{nonedge}, and \emph{loop}. The procedure refines the coloring in
rounds by assigning a new color to each pair $(u,v)$ depending on the
color types of the 2-walks $uwv$, where $w$ ranges over the vertex
set. The procedure terminates when the color partition of the set of
all vertex pairs stabilizes. The output coloring is an isomorphism
invariant of the input graph. It yields an edge-colored complete
directed graph with certain highly regular properties. This object,
known as a \emph{coherent configuration}, has independently been
discovered in other contexts in statistics (Bose \cite{BoseM59}) and
algebra (Higman~\cite{Higman64}).

A natural extension of this idea, due to Babai (see
\cite{Babai16,CaiFI92}), is to iteratively classify $k$-tuples of
vertices. This is the \emph{$k$-dimensional Weisfeiler-Leman
  procedure}, abbreviated as \kWL. Thus, \WL2 is the original
Weisfeiler-Leman algorithm \cite{WLe68}, and \WL1 is color refinement.
The running time of \kWL is $n^{O(k)}$, where $n$ denotes the number
of vertices in an input graph. Cai, Fürer, and Immerman \cite{CaiFI92}
showed that there are infinitely many pairs of nonisomorphic graphs
$(G_i,H_i)$ such that $\kWL$ fails to distinguish between them for any
$k=o(n)$. Nevertheless, the Weisfeiler-Leman procedure, as an
essential component in isomorphism testing, can hardly be
overestimated. A constant dimension often suffices to solve the
isomorphism problem for important graph classes. A striking result
here (Grohe \cite{Grohe12}) is that for any graph class excluding a
fixed minor (like bounded genus or bounded treewidth graphs)
isomorphism can be tested using $\kWL$ for a constant $k$ that only
depends on the excluded minor. Moreover, Babai's quasipolynomial-time
algorithm \cite{Babai16} for general graph isomorphism crucially uses
$\kWL$ for logarithmic~$k$.

We call two graphs $G$ and $H$ \emph{\kWL-equivalent} and write
$G\eqkwl H$ if they are indistinguishable by \kWL; formal definitions
are given in Sections \ref{s:CR1} ($k=1$) and \ref{s:Ck} ($k\ge2$).
By the Cai-Fürer-Immerman result~\cite{CaiFI92}, we know for any given
$k$ that the $\eqkwl$-equivalence is coarser than the isomorphism
relation on graphs.

\begin{definition}
A graph property (i.e., an isomorphism-invariant family of graphs)
$\mathcal{P}$ is \emph{\kWL-invariant} if for any pair of graphs $G$
and $H$:
$$
G\in\mathcal{P}\text{ and }G\eqkwl H\text{ implies }H\in\mathcal{P}.
$$
In particular, a graph parameter
$\pi$ is \emph{\kWL-invariant} if 
$\pi(G)=\pi(H)$ whenever $G\eqkwl H$.
\end{definition}

The broad question of interest in this paper is which graph properties
(and graph parameters) are \kWL-invariant for a specified~$k$.
The motivation for this natural question comes from various areas.
Understanding the power of \kWL, even for small values of $k$, is
important for both isomorphism testing and graph similarity testing.
For example, the largest eigenvalues of \WL1-equivalent graphs are
equal~\cite{ScheinermanU97}. Moreover, \WL2-equivalent graphs are
cospectral \cite{DawarSZ17,Fuerer10}. Consequently, by Kirchhoff's
theorem, \WL2-equivalent graphs have the same number of spanning
trees. Also the \WL2-invariance of certain metric graph parameters
such as diameter is easy to show.  Fürer \cite{Fuerer17} recently
asked which basic combinatorial parameters are \WL2-invariant. While
it is readily seen that \WL2-equivalence preserves the number of
3-cycles, F\"urer pointed out, among other interesting observations,
that also the number of $s$-cycles is \WL2-invariant for
each $s\le6$. More recently, Dell, Grohe, and Rattan \cite{DellGR18}
characterized \kWL-equivalence in terms of homomorphism profiles.
Specifically, they show that $G\eqkwl H$ if and only if the number of
homomorphisms from $F$ to $G$ and to $H$ are equal for all graphs $F$
of treewidth at most~$k$.

As a heuristic for graph similarity testing, the Weisfeiler-Leman
procedure has been applied in artificial intelligence; see
\cite{ShervashidzeSLMB11} for an \WL1-based application and
\cite{MorrisKM17} for a multidimensional version. It is noteworthy
that \WL1 turns out to be exactly as powerful as graph neural networks
\cite{Morris+18}.  Comparing subgraph frequencies is also widely used
for testing graph similarity and detecting structure of large
real-life graphs; see,
e.g.,~\cite{GaoL17,GrochowK07,Milo+02,UganderBK13}. For example, just
knowing the number of triangles is valuable information about a social
network; see, e.g.,~\cite{HasanD18}. Important structural information
can also be found from the number of paths of length 2 and from the
degree distribution, i.e., the statistics of star subgraphs;
see~\cite{Newman03}. This poses a natural question on how much the
two approaches --- one based on \kWL-equivalence and one based on
subgraph statistics --- are related to each other.

Finally, \kWL-equivalence 
is of fundamental importance for finite and algorithmic
model theory.  
A graph property $\mathcal{P}$ is
\kWL-invariant exactly when $\mathcal{P}$ is definable in the
$(k+1)$-variable infinitary counting logic.
Showing a graph property $\mathcal{P}$ to be not $\eqkwl$-invariant
for any $k$ will imply $\mathcal{P}$ is not definable in fixed-point
logic with counting (FPC); see, e.g., the survey \cite{Dawar15}.  A
systematic study of \kWL-invariant constraint satisfaction problems
was undertaken by Atserias, Bulatov, and Dawar~\cite{AtseriasBD09}.

\subsection*{Our results}

Let $F$ be a fixed pattern graph and $G$ be any given graph. The main
focus of our paper is to investigate the \kWL-invariance of: (a) the
property that $G$ contains $F$ as a subgraph, and (b) the number of
subgraphs of $G$ isomorphic to $F$. We use \emph{$\sub F\cdot$} to
denote the subgraph count function.  Thus, $\sub FG$ denotes the
number of subgraphs of $G$ isomorphic to~$F$.

\begin{definition}\label{def:crclasses}
   Let \cclass k denote the class of all pattern graphs $F$ for which
   the subgraph count $\sub F\cdot$ is $\eqkwl$-invariant.
   Furthermore, \rclass k consists of all pattern graphs $F$ such that
   the property of a graph containing $F$ as a subgraph is
   $\eqkwl$-invariant.
\end{definition}

The concepts of \cclass k and \rclass k correspond
to algorithmic \emph{counting} and \emph{recognition} problems respectively.
Note that $\cclass k\subseteq\rclass k$. We use this notation to state
some consequences of prior work. The \kWL-equivalence
characterization~\cite{DellGR18}, stated above, can be used to show
that \cclass k contains every $F$ such that all homomorphic images of
$F$ have treewidth no more than $k$. We say that such an $F$ has
\emph{homomorphism-hereditary treewidth} at most $k$; see Section
\ref{s:Ck} for details.  The striking result by Anderson, Dawar, and
Holm \cite{AndersonDH15} on the expressibility of the matching number
in FPC
implies
that there is some $k$ such that \rclass k contains \emph{all}
matching graphs $sK_2$, where $sK_2$ denotes the disjoint union of $s$
edges.  On the other hand, there is no $k$ such that \rclass k
contains all cycle graphs $C_s$. This readily follows from the result
by Dawar~\cite{Dawar98,Dawar15} that the property of a graph having a
Hamiltonian cycle is not $\eqkwl$-invariant for any~$k$.

Our results are as follows.

\paragraph{\textit{Complete description of \cclass1 and \rclass1
(invariance under color refinement).}}

We prove that, up to adding isolated vertices, \cclass1 consists of
all star graphs $K_{1,s}$ and the 2-matching graph~$2K_2$. Hence,
\cclass1 contains exactly the pattern graphs of homomorphism-hereditary
treewidth equal to 1. Another noteworthy consequence is that, for
every $F\in\cclass1$, the subgraph count $\sub FG$ is determined just
by the degree sequence of a graph~$G$.

We obtain a complete description of \rclass1 by proving that this
class consists of the graphs in \cclass1 and three forests $P_3+P_2$,
$P_3+2P_2$, and $2P_3$, where $P_s$ denotes the path graph on $s$
vertices.

\paragraph{\textit{Case study for \cclass2 and \rclass2
(invariance under the original Weisfeiler-Leman algorithm).}}

An explicit characterization of \cclass2 and \rclass2 appears
challenging.  Indeed, it is not a priori clear whether testing
membership in these graph classes is possible in polynomial time.
While it is unknown whether \cclass2 consists exactly of graphs with
homomorphism-hereditary treewidth bounded by $2$, we prove that this
is indeed the case for some standard graph sequences.  These results
are related to questions that have been discussed in the literature.

\begin{itemize}
\item 
Beezer and Farrell \cite{BeezerF00} proved that the first five
coefficients of the matching polynomial of a strongly regular graph
are determined by its parameters.\footnote{The result of \cite{BeezerF00}
is actually stronger and applies even to distance-regular graphs: The first five
coefficients of the matching polynomial of such a graph
are determined by the intersection array of the graph.}
I.e., if $G$ and $H$ are strongly regular graphs with the same
parameters, then $\sub{sK_2}G=\sub{sK_2}H$ for $s\le5$.  We prove that
$sK_2\in\cclass2$ if and only if $s\le5$. It follows that the
Beezer-Farrell result extends to $\WL2$-equivalent graphs. I.e., if
$G$ and $H$ are any two $\WL2$-equivalent graphs, then the first five
coefficients of their matching polynomials coincide. Moreover, this
result is tight and cannot be extended to a larger~$s$.  Note that
strongly regular graphs with the same parameters are the simplest
example of \WL2-equivalent graphs.\footnote{Two distance-regular graphs
with the same intersection array are also \WL2-equivalent.}
\item 
Fürer~\cite{Fuerer17} proved that $C_s\in\cclass2$ for $3\le s\leq 6$
and $C_s\notin\cclass2$ for $8\le s\leq 16$.  We close the gap and
show that $C_7$ is the largest cycle graph in \cclass2.  We also prove
that \cclass2 contains $P_1,\ldots,P_7$ and no other path graphs.
The result on cycles admits the following generalization. First, we
observe that the girth $g(G)$ of a graph $G$ is a \WL2-invariant
parameter.  Then, we prove that if $G\eqkkwl2H$, then
$\sub{C_s}G=\sub{C_s}H$ for each $3\le s\le 2\,g(G)+1$. Neither the
factor of 2, nor the additive term of 1 can here be improved.
\end{itemize}

\noindent
Characterization of \rclass2 appears to be still harder. Fürer
\cite{Fuerer17} has shown that \rclass2 does not contain the complete
graph with 4 vertices. Building on that, we show that \rclass2 also
does not contain any graph $F$ with a unique 4-clique.  In view of
this result, it is natural to conjecture that \rclass2 does not
contain any graph of clique number more than~3.
We also show that
\rclass2 contains only finitely many cycle graphs $C_s$. Moreover,
following Dawar's approach~\cite{Dawar98}, for each $k$ we show that
\rclass k contains only finitely many~$C_s$.

\paragraph{Notation.}
The \emph{girth} $g(G)$ is the minimum length of a cycle in $G$.
If $G$ is acyclic, then $g(G)=\infty$. We denote the
vertex set of $G$ by $V(G)$ and the edge set by $E(G)$.  Furthermore,
$v(G)=|V(G)|$ and $e(G)=|E(G)|$. The set of vertices adjacent to a
vertex $u\in V(G)$ forms its neighborhood $N(u)$.  
The subgraph of $G$
induced by a subset of vertices $X\subseteq V(G)$ is denoted by
$G[X]$. For two disjoint vertex subsets $X$ and $Y$, we denote by
$G[X,Y]$ the bipartite graph with vertex classes $X$ and $Y$ and all
edges of $G$ with one vertex in $X$ and the other in $Y$.  
The vertex-disjoint union of graphs $G$ and $H$ is denoted by
$G+H$. Furthermore, we write $mG$ for the disjoint union of $m$ copies
of $G$.  We use the standard notation $K_n$ for complete graphs, $P_n$
for paths, and $C_n$ for cycles on $n$ vertices.  Furthermore,
$K_{s,t}$ denotes the complete bipartite graph whose vertex classes
have $s$ and $t$ vertices. Likewise, $K_{1,1}=K_2=P_2$,
$K_{1,2}=P_3$, $C_3=K_3$ etc.

\section{Color refinement invariance}\label{s:CR1}

Given a graph $G$, the \emph{color-refinement} algorithm (abbreviated
as \WL1) iteratively computes a sequence of colorings $C^i$ of $V(G)$.
The initial coloring $C^0$ is monochromatic, that is $C^0(u)$ is the
same for all vertices~$u$.  Then,
\begin{equation}
  \label{eq:Ci}
 C^{i+1}(u)=\left(C^i(u),\msetdef{C^i(a)}{a\in N(u)}\right),
\end{equation}
where $\{\!\!\{ \ldots \}\!\!\}$ denotes a multiset (i.e., the multiplicity of each element counts). 

If $\phi$ is an isomorphism from $G$ to $H$, then a straightforward
inductive argument shows that $C^i(u)=C^i(\phi(u))$ for each vertex
$u$ of~$G$.  This readily implies that, if graphs $G$ and $H$ are
isomorphic, then
\begin{equation}
  \label{eq:CR-check}
 \msetdef{C^i(u)}{u\in V(G)} = \msetdef{C^i(v)}{v\in V(H)}
\end{equation}
for all $i\ge0$.  
We write $G\eqkkwl1H$ exactly when this condition is met.

The following fact is a direct consequence of the definition.

\begin{lemma}\label{lem:AABB}
If $A\eqkkwl1B$ and $A'\eqkkwl1B'$, then $A+A'\eqkkwl1B+B'$.
\end{lemma}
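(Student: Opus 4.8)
The plan is to reduce the claim to the \emph{locality} of color refinement: in a disjoint union no new edges are created, so the color of a vertex depends only on its own component. First I would make precise the observation that if $u\in V(A)$, then the color assigned to $u$ by color refinement run on $A+A'$ coincides with the color assigned to $u$ when the algorithm is run on $A$ alone, and symmetrically for vertices of $A'$. I would prove this by induction on the round number $i$ using the update rule \refeq{Ci}. The base case is immediate, since the initial coloring $C^0$ is the same constant on every vertex of every graph, so it agrees on $A$ and on $A+A'$. For the inductive step the crucial point is that $N(u)$ is the same whether $u$ is viewed inside $A$ or inside $A+A'$; hence the multiset $\msetdef{C^i(a)}{a\in N(u)}$ entering \refeq{Ci} is unchanged by the induction hypothesis, and so is $C^i(u)$, giving $C^{i+1}(u)$ the same value in both runs.

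Granting this, the multiset of colors of $A+A'$ after $i$ rounds is simply the multiset sum of the colors of $A$ and the colors of $A'$ computed separately, and likewise for $B+B'$; that is, $\msetdef{C^i(u)}{u\in V(A+A')}$ equals $\msetdef{C^i(u)}{u\in V(A)}\uplus\msetdef{C^i(u)}{u\in V(A')}$. Here I rely on the fact, implicit already in the definition \refeq{CR-check}, that the colors produced by color refinement live in a common universe independent of the input graph (they are nested tuples of multisets built up from the single initial color), so that comparing and combining color multisets originating from different graphs is meaningful.

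The conclusion then follows directly from the hypotheses. Since $A\eqkkwl1B$ gives $\msetdef{C^i(u)}{u\in V(A)}=\msetdef{C^i(v)}{v\in V(B)}$ for every $i$, and $A'\eqkkwl1B'$ gives the analogous equality for $A'$ and $B'$, I would take the multiset sum of the two equalities and apply the locality identity of the previous paragraph to each side. This yields $\msetdef{C^i(u)}{u\in V(A+A')}=\msetdef{C^i(v)}{v\in V(B+B')}$ for all $i\ge0$, which is exactly $A+A'\eqkkwl1B+B'$.

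The only step requiring genuine care is the locality claim of the first paragraph; everything after it is a formal manipulation of multiset identities. The claim is so natural that one is tempted to treat it as obvious, but it is precisely the property that makes the disjoint-union behavior of color refinement clean, and I would therefore spell out the one-line induction explicitly rather than leave it to the reader.
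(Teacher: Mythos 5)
Your proof is correct, and it spells out exactly the reasoning the paper leaves implicit: the paper states Lemma~\ref{lem:AABB} without proof as ``a direct consequence of the definition,'' and your locality induction on the update rule \refeq{Ci} is precisely the argument that justifies this. Nothing is missing; the one point you flag as needing care (that colors computed in $A+A'$ agree with those computed in $A$ alone) is handled correctly by your induction.
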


\WL1 \emph{distinguishes} graphs $G$ and $H$ if $G\not\eqkkwl1H$.
In fact, the algorithm does not need to check \refeq{CR-check} for
infinitely many $i$: If Equality~\refeq{CR-check} is false for some
$i$ then it is false for $i=n$, where $n$ denotes the number of
vertices in each of the graphs.  
By this reason, we call the coloring $C^n$ \emph{stabilized}.

The partition $\Pa_G$ of $V(G)$ into color classes of $C^n$ is called
the \emph{stable partition} of $G$. We call the elements of $\Pa_G$
\emph{cells}.

The stable partition $\Pa_G$ is \emph{equitable}. I.e., for any two
(possibly equal) cells $X$ and $Y$, all vertices in $X$ have equally
many neighbors in $Y$ and vice versa. The number of neighbors that a
vertex of $X$ has in $Y$ will be denoted by $d(X,Y)$.  Thus, for each
cell $X$ the graph $G[X]$ induced by $X$ is \emph{regular}, that is,
all vertices in $G[X]$ have the same degree, namely $d(X,X)$.
Moreover, for all pairs of cells $X,Y$ the bipartite graph $G[X,Y]$
induced by $X$ and $Y$ is \emph{biregular}, that is, all vertices in
$X$ have equally many neighbors in $Y$ and vice versa.

The \emph{degree matrix} of $\Pa_G$ is defined as
$$
D_G=\big(d(X,Y)\big)_{X,Y\in\Pa_G}
$$ 
and indexed by the stabilized colors of the cells; that is, the index
$X$ of $D_G$ is the color $C^n(x)$ of any vertex $x\in X$.

\begin{lemma}\label{lem:degree}
$G\eqkkwl1H$ if and only if $D_G=D_H$.  
\end{lemma}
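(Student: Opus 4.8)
The plan is to prove the two implications separately, using throughout the fact that each coloring $C^i$ is produced by the deterministic rule \refeq{Ci} started from the fixed monochromatic $C^0$. A consequence worth isolating at the outset is that the values $C^i(u)$ are abstract objects that do not depend on the host graph, so the colors occurring in $G$ and in $H$ may be compared directly; in particular the index sets of $D_G$ and $D_H$ (the colors actually realized) are drawn from one common universe, which is what makes the equality $D_G=D_H$ meaningful. I would read $D_G$ as recording, for each color, both the degree entries $d(X,Y)$ and the size $|X|$ of the corresponding cell, i.e.\ the multiplicity with which each color-index occurs. Before either direction I would record two structural observations. First, at stabilization $C^{n}=C^{n+1}$, and since $C^{n+1}(u)=\big(C^{n}(u),\msetdef{C^{n}(a)}{a\in N(u)}\big)$, the stable color of a vertex already determines the multiset of stable colors of its neighbors; hence each entry $d(X,Y)$ is a function of the color $X$ alone. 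Second, because each $C^{i}$ is literally the first coordinate of $C^{i+1}$, every earlier coloring is recoverable from the stable one: there is a fixed, graph-independent map $f_i$ with $C^{i}(u)=f_i\big(C^{n}(u)\big)$.

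For the forward implication I would start from the definition of $G\eqkkwl1 H$: the multisets $\msetdef{C^{n}(u)}{u\in V(G)}$ and $\msetdef{C^{n}(v)}{v\in V(H)}$ coincide. Reading this off, exactly the same colors occur in both graphs and with the same multiplicities, so $D_G$ and $D_H$ share both their index set and their cell sizes. By the first observation the entry $d(X,Y)$ is determined by the color $X$, which is the same object in each graph, so the matrices agree entrywise as well; hence $D_G=D_H$.

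For the converse I would assume $D_G=D_H$, meaning the two stable partitions realize the same colors, with the same cell sizes and the same degree entries. The cell-size part says precisely that for every color $c$ the number of vertices colored $c$ agrees between $G$ and $H$, i.e.\ $\msetdef{C^{n}(u)}{u\in V(G)}=\msetdef{C^{n}(v)}{v\in V(H)}$. Applying the fixed map $f_i$ of the second observation elementwise turns this into $\msetdef{C^{i}(u)}{u\in V(G)}=\msetdef{C^{i}(v)}{v\in V(H)}$ for every $i\le n$, and for $i\ge n$ the colorings are stable so the equality persists. Thus \refeq{CR-check} holds for all $i$, which is exactly $G\eqkkwl1 H$.

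The step I expect to be the main obstacle is the bookkeeping of cell sizes. The bare quotient matrix of an equitable partition does not by itself pin down the sizes of the cells (for disconnected graphs one can rescale the sizes within separate components while keeping all entries $d(X,Y)$ fixed), so the backward direction genuinely fails if $D_G=D_H$ is read as equality of degree entries alone. The point that must be made precise is therefore that $D_G$ is recorded together with the multiplicities $|X|$ of its color-indices; once this is fixed, the equivalence falls out of the two observations above. A secondary subtlety I would state carefully is the graph-independence of the color alphabet and of the maps $f_i$, since this is what legitimizes comparing colors and multisets across two distinct graphs in the first place.
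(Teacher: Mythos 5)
Your argument reaches the right conclusion and is essentially sound, but it takes a different route from the paper's. The paper proves the forward direction in contrapositive form: if $D_G\ne D_H$ but the multisets of stable colors were equal, the discrepancy in some entry $d(X,Y)$ would make the color multisets differ at round $n+1$, contradicting the fact (stated just before the lemma) that any difference detectable by \WL1 is already visible at round $n$. You instead argue directly that the stable color $X$ determines the row $d(X,\cdot)$ graph-independently. For the backward direction the paper only asserts that $D_G=D_H$ ``readily implies'' \refeq{CR-check} at $i=n$, whereas you make the propagation to all rounds explicit via the truncation maps $f_i$; that part of your write-up is cleaner than the original. Your closing remark that $D_G=D_H$ must be read as including the cell sizes $|X|$ is a genuine catch: as literally defined, $D_G$ records only the entries $d(X,Y)$, and under that reading the backward direction fails --- for instance, $2C_3+P_3$ and $C_3+2P_3$ realize the same stable colors with identical degree matrices but different cell sizes, and \WL1 distinguishes them. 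The paper's one-line proof silently assumes the multiplicities.

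The one step you should tighten is the justification of your first observation. It is not literally true that $C^n=C^{n+1}$: the two colorings induce the same partition of $V(G)$, but their values are nested tuples of different depths. What the recursion actually gives is that $C^n(u)$ contains the multiset of \emph{round-$(n-1)$} colors of the neighbors of $u$. Upgrading this to the multiset of \emph{stable} colors requires knowing that the passage from round-$(n-1)$ colors to round-$n$ colors acts the same way in $G$ as in $H$ on all commonly realized colors --- within a single graph this follows from stabilization, but across two graphs it does not come for free, and it is close to the very statement being proved. The gap closes easily under the hypothesis of the forward direction: $G\eqkkwl1H$ gives equality of the round-$(n+1)$ multisets as well, and since the only round-$(n+1)$ colors with first coordinate $X$ arise from round-$n$-color-$X$ vertices, their common second coordinate --- the multiset of stable neighbor colors of an $X$-vertex --- must agree in $G$ and $H$, which is exactly $d_G(X,\cdot)=d_H(X,\cdot)$. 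With that substitution your proof is complete.
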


Indeed, the equality $D_G=D_H$ readily implies the equality \refeq{CR-check} for
$i=n$.  On the other hand, the inequality $D_G\ne D_H$ implies that
the multisets of colors in \refeq{CR-check} for $i=n$ are different. If they
were the same, then they would become distinct in the next refinement
round, i.e., for $i=n+1$ (whereas we know that if $\WL1$ can detect
such a distinction, it is detected by the $n$-th round).

Let $F$ be a graph and $s$ be a positive integer.  Note that $F$
belongs to \cclass k or \rclass k if and only if the graph $F+sK_1$
belongs to this class. Therefore, we will ignore isolated vertices.

\begin{theorem}\label{thm:CR1}
Up to adding isolated vertices, the classes \cclass1 and \rclass1
are formed by the following graphs.

\begin{enumerate}[\bf 1.]
\item  
\cclass1 consists of the
star graphs $K_{1,s}$ for all $s\ge1$ and the 2-matching graph~$2K_2$.
\item 
\rclass1 consists of the graphs in \cclass1 and the following three
forests:
\begin{equation}
  \label{eq:forests}
P_3+P_2,\,P_3+2P_2,\text{ and }2P_3.  
\end{equation}
\end{enumerate}
\end{theorem}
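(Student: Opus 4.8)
The plan is to prove the two characterizations separately, in each case showing that the listed graphs lie in the class and that nothing else does. \emph{Inclusions.} For $\cclass1$ I would verify the two required counts directly as functions of the degree sequence: $\sub{K_{1,s}}{G}=\sum_{v\in V(G)}\binom{\deg v}{s}$ and $\sub{2K_2}{G}=\binom{e(G)}{2}-\sum_{v\in V(G)}\binom{\deg v}{2}$. Since the multiset $\msetdef{C^1(u)}{u\in V(G)}$ encodes exactly the degree sequence, Equation \refeq{CR-check} (already at $i=1$) shows the degree sequence is $\eqkkwl1$-invariant, hence so are both counts. Equivalently, every homomorphic image of $K_{1,s}$ or of $2K_2$ is a forest, so by the Dell--Grohe--Rattan characterization each count is a fixed combination of the $\eqkkwl1$-invariant numbers $\hhomo{F'}{\cdot}$. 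Because $\cclass1\subseteq\rclass1$, these graphs also lie in $\rclass1$.

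\emph{Exclusion for $\cclass1$.} Here I would pass to homomorphism counts through the standard expansion $\sub FG=\sum_{F'}\alpha_{F'}\hhomo{F'}G$, the sum running over the homomorphic images $F'$ of $F$ with all $\alpha_{F'}\neq0$. The forest terms are $\eqkkwl1$-invariant, so if $\sub F\cdot$ is invariant then the non-forest part $\sum_{F'\text{ non-forest}}\alpha_{F'}\hhomo{F'}{\cdot}$ must be invariant too. The combinatorial core is a \emph{folding lemma}: if $F$ has no isolated vertices and is neither a star $K_{1,s}$ nor $2K_2$, then $F$ has a homomorphic image containing a triangle. Indeed, a non-forest $F$ is its own cyclic image; a forest with a $P_4$ folds the two ends of that $P_4$ together into a triangle; and a star forest that is neither a single star nor $2K_2$ carries at least three edges over at least two components, which route onto the three sides of a triangle. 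The remaining---and main---difficulty is to conclude non-invariance from the mere existence of a non-forest image, i.e.\ to rule out cancellation among the non-forest terms. I would prove that homomorphism-count functions of pairwise non-isomorphic non-forests are linearly independent modulo $\eqkkwl1$-invariance, exploiting the multiplicativity $\hhomo{F_0+F_1}{\cdot}=\hhomo{F_0}{\cdot}\,\hhomo{F_1}{\cdot}$ and a stock of equivalent pairs obtained from $2C_3\eqkkwl1 C_6$ via Lemma \ref{lem:AABB}; in the cleanest excluded cases the triangle is the unique non-forest image and the pair $2C_3\eqkkwl1 C_6$ already separates the count, as $\hhomo{C_3}{2C_3}=12\neq0=\hhomo{C_3}{C_6}$.

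\emph{The class $\rclass1$.} It remains to treat occurrence for patterns whose count is not invariant. For the three extra forests $P_3+P_2$, $P_3+2P_2$ and $2P_3$ the plan is to rewrite ``$G$ contains $F$'' as an $\eqkkwl1$-invariant predicate on the degree matrix $D_G$: each of these patterns needs only a bounded amount of matching structure disjoint from one cherry, and I would show this amount is forced (or forbidden) by the degree data alone, hence is constant on every $\eqkkwl1$-class. For every other pattern $F$ I would instead exhibit a pair $G\eqkkwl1 H$ with $G\supseteq F$ but $H\not\supseteq F$; the workhorses are the $2$-regular pairs $C_{3m}\eqkkwl1 mC_3$, whose large-girth side contains long paths and large matchings that the triangle side lacks, with Lemma \ref{lem:AABB} used to enlarge the pair so as to absorb the remaining components of $F$.

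\emph{Main obstacle.} The crux is twofold. In the $\cclass1$ exclusion it is the non-cancellation statement, which is exactly what upgrades the folding lemma (existence of a cyclic homomorphic image) into genuine non-invariance of $\sub F\cdot$. In the $\rclass1$ inclusion it is locating the sharp boundary: showing that occurrence of $P_3+2P_2$ and $2P_3$ is determined by $D_G$ while that of $P_3+3P_2$ or $3P_3$ is not. This boundary reflects the fact that graphs with equal degree matrices can have different matching numbers---for instance $C_6$ and $2C_3$ have matching numbers $3$ and $2$---so a pattern demanding a $3$-matching disjoint from a cherry can be separated, whereas the smaller disjoint demands of the three listed forests cannot.
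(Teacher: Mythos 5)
Your high-level architecture (inclusions via the degree sequence, exclusions via explicit $\eqkkwl1$-equivalent witness pairs, a separate invariance argument for the three extra forests in \rclass1) matches the paper's, and the membership part for \cclass1 is correct and essentially identical to the paper's. But three load-bearing steps are missing or would fail. First, your \cclass1 exclusion hinges on the claim that homomorphism counts of pairwise non-isomorphic non-forests are linearly independent \emph{modulo} $\eqkkwl1$-equivalence, i.e.\ that no nonzero combination of non-forest terms can be \WL1-invariant. This is a genuinely hard statement (far stronger than the classical Erd\H{o}s--Lov\'asz--Spencer independence over all graphs), you do not prove it, and the multiplicativity of $\hhomo\cdot\cdot$ plus the single pair $2C_3\eqkkwl1C_6$ does not yield it. The paper avoids this entirely: it exhibits concrete witness pairs ($C_6$ vs.\ $2C_3$, $C_7$ vs.\ $C_4+C_3$) for the three forests in \refeq{forests} and derives non-membership of everything else in \cclass1 from non-membership in \rclass1. (Also, your folding lemma as stated is false: no homomorphic image of $C_4$ contains a triangle; you need ``contains a cycle.'')

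Second, the membership of $P_3+P_2$, $P_3+2P_2$, $2P_3$ in \rclass1 is the heart of the theorem, and your proposal reduces it to the assertion that containment of these patterns ``is forced or forbidden by the degree data alone'' --- which is a restatement of the goal, not an argument. The paper proves this by showing (Lemma \ref{lem:F-free-am}, via a full structural classification of $\forb F$ in the appendix) that every $F$-free graph is \emph{amenable}, i.e.\ identified up to isomorphism by \WL1, except for $2C_3$ and $C_6$, which are too small to matter; nothing in your sketch substitutes for that classification. Third, your witness families for the \rclass1 exclusion cannot work for the star forests: to rule out $K_{1,s}+K_{1,1}$ ($s\ge3$) you need a pair $G\eqkkwl1H$ where the star and the extra edge interact, since each component separately lies in \rclass1 (so no disjoint-union trick à la Lemma \ref{lem:AABB} applied to component-wise pairs can succeed), and your $2$-regular pairs $C_{3m}$ vs.\ $mC_3$ contain no vertex of degree $3$ at all. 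The paper needs dedicated constructions here ($K_{s,s}$ versus $2K_s$ plus a perfect matching, the Wagner graph versus $2K_4$, etc.), a covering-map construction to exclude everything containing $P_4$, and two derivation rules (deleting a component; cutting a leaf off every component) to reduce an arbitrary excluded star forest to one of finitely many base cases. None of this machinery is recoverable from your sketch.
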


\noindent
The proof is spread over the next four subsections.

\subsection{Membership in \cclass1}

If two graphs are indistinguishable by color refinement, they have the
same degree sequence.  Notice that
$$
\sub{K_{1,s}}G=\sum_{v\in V(G)}{\deg v\choose s},
$$
where $\deg v$ denotes the degree of a vertex $v$.
This equality shows that $K_{1,s}\in\cclass1$.
Since any two edges constitute either $2K_2$ or $K_{1,2}$, we have
\begin{equation}\label{eq:2K2}
\sub{2K_2}G={e(G)\choose2}-\sub{K_{1,2}}G.  
\end{equation}
Taking into account that $e(G)=\frac12\sum_{v\in V(G)}\deg v$,
this implies that $2K_2\in\cclass1$.

Note that the equality \refeq{2K2} has been reported in several sources;
see, e.g., \cite[Lemma~1]{FarrellGC91} and the comments therein.

\subsection{Non-membership in \cclass1}

To prove that a graph $F$ is not in \cclass1, one needs to exhibit
\WL1-equivalent graphs $G$ and $H$ such that $\sub FG\ne\sub FH$. 
Table \ref{fig:not-in-C1} provides a list of such
witnesses for each of the three forests in~\refeq{forests}.
The non-membership of all other graphs in \cclass1 follows from their
non-membership in \rclass1, which will be proved in the corresponding
subsection below.

\begin{table}[h]
\begin{center}
\renewcommand{\arraystretch}{1.0}
\begin{tabular}{l|lllll}
$F$       & $P_3+P_2$ & $2P_3$ & $P_3+2P_2$ \\
\hline
$G$       & $C_6$        & $C_6$     & $C_7$         \\
$\sub FG$ & 12           & 3         & 7            \\
$H$       & $2C_3$       & $2C_3$    & $C_4+C_3$     \\
$\sub FH$ & 18           & 9         & 6         
\end{tabular}
\end{center}
\caption{Witnesses to non-membership in \cclass1: Each pair $G$ and $H$ consists
of regular graphs with the same number of vertices and of the same degree.}\label{fig:not-in-C1}
\end{table}

\subsection{Membership in \rclass1}

We call a graph $H$ \emph{amenable} if color refinement distinguishes
$H$ from any other nonisomorphic graph $G$. For each of the three
forests $F$ in~\refeq{forests}, we are able to explicitly describe the
class $\forb F$ of $F$-free graphs. Based on this description, we can show that,
with just a few exceptions, every $F$-free graph is amenable.

\begin{lemma}\label{lem:F-free-am}\hfill
  \begin{enumerate}[\bf 1.]
  \item 
    Every $(P_3+P_2)$- or $2P_3$-free graph $H$ is amenable.
  \item 
  Every $(P_3+2P_2)$-free graph $H$ is amenable unless $H=2C_3$ or $H=C_6$.
  \end{enumerate}
\end{lemma}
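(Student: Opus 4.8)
The plan is to prove both parts in two stages: first an explicit structural description of the $F$-free graphs, and then a verification of amenability read off from the stable partition via Lemma~\ref{lem:degree}.

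\emph{Structure of $\forb F$.} Each of the three patterns is disconnected and contains a $P_3$, and I would exploit this. For $2P_3$ the observation is that if two distinct components of $H$ each contain a $P_3$, then $H$ already contains $2P_3$; hence at most one component may contain a $P_3$, the remaining components have maximum degree $1$, and the single rich component must itself avoid two vertex-disjoint $P_3$'s. The same idea, together with the fact that a $P_3$ plus one independent edge already forces $P_3+P_2$, handles the $P_3+P_2$ case and shows that a connected $(P_3+P_2)$-free graph is a star $K_{1,s}$ or has at most four vertices. The pattern $P_3+2P_2$ is more delicate: since it has seven vertices, every graph on at most six vertices is $(P_3+2P_2)$-free, and in general $H$ is $(P_3+2P_2)$-free precisely when deleting the three vertices of any $P_3$ leaves matching number at most $1$. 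I would turn this into an explicit (finite) catalogue of admissible building blocks: a matching $sK_2$, isolated vertices, and a connected or small dense core (stars, triangles, and the graphs on at most six vertices).

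\emph{Amenability.} For each graph $H$ in the resulting list I would compute its stable partition $\Pa_H$ and apply Lemma~\ref{lem:degree}: it suffices to show that any $G$ with $G\eqkkwl1 H$, i.e.\ with $D_G=D_H$ and the same cell sizes, is isomorphic to $H$. The matching part reconstructs uniquely, since a cell $X$ of degree-$1$ vertices with $d(X,X)=1$ forces $G[X]$ to be a perfect matching with no edges leaving $X$; Lemma~\ref{lem:AABB} then lets me glue this matching to the reconstructed core. For stars and for the small dense cores the stable partition, although coarse, is rigid: the distinguished high-degree vertices, their neighbourhoods, and the isolated vertices occupy separate cells whose mutual degrees $d(X,Y)$ determine every adjacency, so $G\cong H$ follows.

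\emph{Exceptions and the main obstacle.} Amenability can fail only when a cell of $\Pa_H$ induces a regular graph that is not the unique regular graph of its degree and order, for then a non-isomorphic partner with the same degree matrix exists. For $(P_3+P_2)$- and $2P_3$-free graphs I would check that no such ambiguous cell occurs --- in particular every $2$-regular graph on at least six vertices contains $2P_3$ --- so these two classes have no exceptions. For $P_3+2P_2$, however, the free class contains all six-vertex graphs, and the $2$-regular ones among them yield the non-isomorphic, $\eqkkwl1$-equivalent pair $C_6$ and $2C_3$ (a single cell with $d(X,X)=2$); these are the stated exceptions. I expect the main obstacle to lie exactly here: making the catalogue of $(P_3+2P_2)$-free cores exhaustive and then certifying that, apart from the listed graphs, every member has a stable partition rigid enough to exclude any non-isomorphic CR-indistinguishable partner. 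This careful treatment of the regular six-vertex graphs is the delicate heart of the argument, since it is precisely among regular graphs that $\eqkkwl1$-equivalence without isomorphism can arise.
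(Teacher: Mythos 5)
Your overall strategy --- first describe $\forb F$ explicitly, then certify amenability by showing that the stable partition together with its degree matrix determines the graph up to isomorphism --- is exactly the strategy of the paper's proof (Lemmas~\ref{lem:forb} and \ref{lem:P32am}--\ref{lem:2P3am}). However, two steps of your plan have genuine gaps. The first concerns how you locate the exceptions. You assert that amenability can fail \emph{only} when some cell of $\Pa_H$ induces a regular graph that is not the unique regular graph of its degree and order. This principle is false: ambiguity can equally well live in the off-diagonal blocks. For distinct cells $X,Y$ the bipartite graph $H[X,Y]$ is merely biregular, and a biregular bipartite graph is in general not determined by its bidegrees (e.g.\ $C_8$ versus $2C_4$ as $2$-regular bipartite graphs on $4+4$ vertices). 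Hence certifying that $2C_3$ and $C_6$ are the \emph{only} exceptions requires verifying, for every pair of cells in every graph of your catalogue, that $H[X,Y]$ is forced by $d(X,Y)$ and $d(Y,X)$; in the paper this is done by checking that each such block is complete, empty, or a perfect matching (see the proofs of Lemmas~\ref{lem:vc2am} and \ref{lem:2P3am}). You gesture at this, but your completeness argument for the exception list rests on the diagonal-only principle and is therefore not sound as written.

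The second gap is that your catalogue of $F$-free graphs is not exhaustive in the cases that matter most. A connected $(P_3+2P_2)$-free graph on at least $7$ vertices need not be a star: it is $3P_2$-free, and by Lemma~\ref{lem:3P2} it can be any subgraph of $K_1*(K_3+sK_1)$ or of the complete split graph $K_2*sK_1$ (for instance $K_{2,5}$). Similarly, the connected $2P_3$-free graphs include the net graph and all subgraphs of the windmills $K_1*sK_2$. These infinite families lie outside ``stars, triangles, and graphs on at most six vertices,'' and they are precisely where the amenability verification takes real work in the paper. Your ``matching plus core'' decomposition also needs care: $K_{1,s}+2K_2$ already contains $P_3+2P_2$, so the matching attachable to a star core has at most one edge. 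Finally, Lemma~\ref{lem:AABB} composes equivalences and cannot be used to ``glue'' a matching back onto a reconstructed core; what you need is the converse decomposition statement that $H+sK_2$ is amenable if and only if $H$ is (Lemma~\ref{lem:am-forests}), which requires its own proof.
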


Proving that $F\in\rclass1$ means proving the following implication:
\begin{equation}\label{eq:forb}
G\eqkkwl1H\ \ \&\ \ H\in\forb F \implies G\in\forb F.
\end{equation}
This implication is trivial whenever $H$ is an amenable graph because
then $G\cong H$. By Part 1 of Lemma \ref{lem:F-free-am}, we
immediately conclude that the graphs $P_3+P_2$ and $2P_3$ are in
$\rclass1$.  Part 2 ensures \refeq{forb} for all $(P_3+2P_2)$-free
graphs except $2C_3$ and $C_6$. However, the implication \refeq{forb}
holds true also for each exceptional graph $H\in\{2C_3,C_6\}$ by the
following trivial reason.  Since $H$ has 6 vertices, any
\WL1-indistinguishable graph $G$ must have also 6 vertices and hence
cannot contain a $P_3+2P_2$ subgraph.

The proof of Lemma \ref{lem:F-free-am} is lengthy and relies on
an explicit description of the class of $F$-free graphs for each
$F\in\{P_3+P_2,2P_3,P_3+2P_2\}$. Obtaining such a description requires
a scrupulous combinatorial analysis, and we postpone the proof to Appendix~\ref{app}.

\subsection{Non-membership in \rclass1}

We begin with proving that \rclass1 can contain only forests of stars.

\begin{lemma}%
[{see Bollob\'as \cite[Corollary 2.19]{Bollobas-b} or  Wormald \cite[Theorem~2.5]{Wor99}}]%
\label{lem:random-reg}
Let $d,g\ge3$ be fixed, and $dn$ be even.
Let $\mathcal{G}_{n,d}$ denote a random $d$-regular graph on $n$ vertices.
Then the probability that
$\mathcal{G}_{n,d}$ has girth $g$ converges to a non-zero limit as $n$ grows large.
\end{lemma}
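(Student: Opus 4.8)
The plan is to abandon the uniform model directly and work instead in the \emph{configuration (pairing) model} of Bollob\'as, where short cycles are tractable. One places $d$ half-edges at each of the $n$ vertices, for $dn$ half-edges in total, and chooses a uniformly random perfect matching of them; identifying matched pairs yields a random $d$-regular multigraph $\Gamma_n$. Conditioning $\Gamma_n$ on being \emph{simple} (no loop and no repeated edge) recovers exactly the uniform distribution $\mathcal{G}_{n,d}$. The crucial enabling fact is that $\Pr[\Gamma_n\text{ simple}]$ converges to a strictly positive constant, so this conditioning is legitimate in the limit and every statement about $\mathcal{G}_{n,d}$ becomes a conditional statement about $\Gamma_n$.

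The technical heart is a Poisson approximation for short cycles. For fixed $\ell\ge1$ let $X_\ell$ count the $\ell$-cycles of $\Gamma_n$, where a loop is a $1$-cycle and a parallel pair of edges is a $2$-cycle. I would prove that, for every fixed $m$, the vector $(X_1,\dots,X_m)$ converges in distribution to a vector of \emph{independent} Poisson variables with $X_\ell\to\mathrm{Poisson}(\lambda_\ell)$, where $\lambda_\ell=(d-1)^\ell/(2\ell)$. The proof is by the method of factorial moments: one counts the expected number of ordered tuples of distinct cycles of prescribed lengths and shows it converges to the corresponding product of the $\lambda_\ell$. For a single $\ell$-cycle the count of embeddings is $\sim (d(d-1))^\ell n^\ell/(2\ell)$ (the $2\ell$ being the rotational and reflectional symmetry of a cycle), while the probability that all $\ell$ required half-edge pairs are simultaneously matched is $\sim (dn)^{-\ell}$; the product is exactly $\lambda_\ell$. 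The dominant contribution in the higher factorial moments comes from vertex-disjoint families of cycles, which factor into independent Poisson moments; any overlapping or self-intersecting configuration uses fewer than the generic number of free vertices and hence contributes $o(1)$.

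Granting the Poisson limit, the event ``$\mathcal{G}_{n,d}$ has girth $g$'' becomes a cylinder event on the cycle counts. Simplicity is $\{X_1=X_2=0\}$, and within a simple graph, girth exactly $g$ means $\{X_3=\cdots=X_{g-1}=0,\ X_g\ge1\}$, since every cycle then has length at least $3$. Using the joint Poisson limit together with the asymptotic independence of the $X_\ell$, the conditional probability satisfies
$$
\Pr[\text{girth}=g]\ \longrightarrow\ \frac{e^{-(\lambda_1+\cdots+\lambda_{g-1})}\bigl(1-e^{-\lambda_g}\bigr)}{e^{-(\lambda_1+\lambda_2)}}\;=\;e^{-(\lambda_3+\cdots+\lambda_{g-1})}\bigl(1-e^{-\lambda_g}\bigr),
$$
where the empty sum (for $g=3$) is read as $0$. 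Since $\lambda_\ell=(d-1)^\ell/(2\ell)>0$ for all $\ell\ge3$ whenever $d\ge3$, every factor on the right is strictly positive, so the limit is a non-zero constant, as required.

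The main obstacle is the Poisson convergence lemma, and specifically the bookkeeping in the factorial-moment computation: one must check, uniformly over the fixed multiplicities, that exactly the vertex-disjoint placements survive in the limit and that all non-disjoint configurations contribute negligibly. Once this is in place, passing from $\Gamma_n$ to $\mathcal{G}_{n,d}$ by conditioning on $\{X_1=X_2=0\}$ and evaluating the limiting product are routine. This is precisely the content of the cited results of Bollob\'as and Wormald.
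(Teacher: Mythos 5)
Your proposal is correct and is exactly the standard configuration-model/Poisson-approximation argument underlying the results of Bollob\'as and Wormald that the paper cites for this lemma; the paper itself gives no proof, deferring entirely to those references. The key computations (the means $\lambda_\ell=(d-1)^\ell/(2\ell)$, the conditioning on simplicity via $\{X_1=X_2=0\}$, and the identification of girth exactly $g$ with the cylinder event $X_3=\cdots=X_{g-1}=0$, $X_g\ge1$) are all handled correctly, so nothing further is needed.
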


\begin{lemma}\label{lem:no-cycle}
\rclass1 can contain only acyclic graphs.
\end{lemma}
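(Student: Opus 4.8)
The plan is to prove the contrapositive: if a pattern graph $F$ contains a cycle, then $F\notin\rclass1$. To witness this I would exhibit two \WL1-equivalent graphs $G$ and $H$ on the same number of vertices such that $H$ contains $F$ as a subgraph while $G$ does not. The observation that makes this clean is that any two $d$-regular graphs on the same number of vertices are \WL1-equivalent: on a regular graph the stable partition is trivial (a single cell), so the degree matrix is the $1\times1$ matrix $(d)$, and $G\eqkkwl1 H$ follows from Lemma~\ref{lem:degree}. Hence it suffices to realize the ``contains $F$'' side and the ``$F$-free'' side by $d$-regular graphs of the same order.

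Fix $d=\max(3,\Delta(F))$, where $\Delta(F)$ is the maximum degree of $F$; the lower bound $3$ is needed to invoke Lemma~\ref{lem:random-reg}. For the $F$-free side I apply Lemma~\ref{lem:random-reg} with target girth $g=v(F)+1\ge4$: for all sufficiently large $n$ with $dn$ even there is a $d$-regular graph $G$ of girth exactly $v(F)+1$. Since any copy of $F$ inside $G$ would force a cycle of length at most $v(F)<g(G)$, such a $G$ is necessarily $F$-free. For the ``contains $F$'' side I first embed $F$ as a subgraph of some $d$-regular graph $H_0$ on $n_0$ vertices; this is possible because $d\ge\Delta(F)$ and every graph of maximum degree at most $d$ embeds into a $d$-regular graph (e.g.\ by taking two disjoint copies of $F$ and joining them with a bipartite graph realizing the degree deficiencies).

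It then remains to match the orders. Writing $H=H_0+t\,K_{d+1}$, a disjoint union with $t$ copies of the $d$-regular graph $K_{d+1}$, I obtain a $d$-regular graph containing $F$ on $n_0+t(d+1)$ vertices, for every $t\ge0$. These orders form an arithmetic progression with the same parity as $n_0$, and a short parity check (if $d$ is odd, then $d\,n_0$ even forces $n_0$ even) shows that $d\,n$ is even for all of them. Since Lemma~\ref{lem:random-reg} supplies a girth-$(v(F)+1)$ $d$-regular graph for every sufficiently large admissible order, I can choose $t$ large enough that $G$ and $H$ have the same number of vertices. Then $G\eqkkwl1 H$, the graph $H$ contains $F$, and $G$ does not, so $F\notin\rclass1$.

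The main obstacle I anticipate is bookkeeping rather than conceptual: reconciling the order of the $F$-free graph $G$, which Lemma~\ref{lem:random-reg} only guarantees for large admissible $n$, with the order of the padded graph $H$, all while respecting the parity constraint that $dn$ be even. The embedding of $F$ into a $d$-regular graph is standard, but stating it cleanly and choosing the padding gadget so that the two sets of admissible orders intersect is the part that needs care.
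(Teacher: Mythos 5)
Your proof is correct and follows essentially the same approach as the paper: both arguments rest on the observations that $d$-regular graphs of the same order are \WL1-equivalent and that Lemma~\ref{lem:random-reg} supplies a high-girth $d$-regular graph for the $F$-free side. The paper sidesteps all of the embedding and order-matching bookkeeping you anticipate by taking $d=v(F)-1$, so that $K_{d+1}$ contains $F$ trivially, and by comparing $H=(d+1)X$ with $G=v(X)\,K_{d+1}$, two $d$-regular graphs that automatically have the same number of vertices.
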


\begin{proof}
  Assume that a graph $F$ has a cycle of length $m$. We show that it
  cannot belong to \rclass1.  Let $d=v(F)-1$.  Lemma
  \ref{lem:random-reg} ensures that there exists a $d$-regular graph
  $X$ of girth strictly more than $m$. Then $F$ does not appear as a
  subgraph in $H=(d+1)X$ but clearly does in $G=v(X)\,K_{d+1}$. It
  remains to notice that $G$ and $H$ are both $d$-regular and have the
  same number of vertices.
\end{proof}

\begin{lemma}\label{lem:only-sf}
\rclass1 can contain only forests of stars.  
\end{lemma}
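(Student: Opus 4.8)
The plan is to combine Lemma~\ref{lem:no-cycle} with a witness construction tailored to the unique obstruction to being a forest of stars. By Lemma~\ref{lem:no-cycle} it suffices to treat forests, and I would use the elementary characterization that a forest is a forest of stars if and only if it is $P_4$-free, equivalently no component has two adjacent vertices of degree at least~$2$. So assume $F$ is a forest that is not a forest of stars and fix a component $T$ of $F$ that is not a star; then $T$ is a tree with $v(T)\ge4$ and $\Delta(T)\le v(T)-2$. My goal is to produce graphs $G\eqkkwl1 H$ with $F\subseteq G$ but $F\not\subseteq H$, which shows $F\notin\rclass1$. The basic engine is the observation (immediate from Lemma~\ref{lem:degree}) that any two $\delta$-regular graphs on the same number of vertices are \WL1-equivalent, since color refinement leaves each of them monochromatic and their degree matrices both equal $(\delta)$.

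Concretely, I would set $\delta=v(T)-2\ge2$ and take $X$ to be a $\delta$-regular graph of girth greater than $v(T)$, which exists on $m(\delta+1)$ vertices for suitable large $m$ by Lemma~\ref{lem:random-reg} when $\delta\ge3$ (and as a long cycle when $\delta=2$), and $Y=mK_{\delta+1}$, which is $\delta$-regular of the same order. Then $X\eqkkwl1 Y$. Letting $R$ be the disjoint union of all components of $F$ other than $T$, so that $F=R+T$, I set $G=R+X$ and $H=R+Y$, whence $G\eqkkwl1 H$ by Lemma~\ref{lem:AABB}. The inclusion $F\subseteq G$ is easy: embed $R$ onto the $R$-summand and $T$ into $X$, the latter being possible because $\Delta(T)\le\delta$ and, within a ball of radius $v(T)$, the high-girth graph $X$ is a $\delta$-regular tree into which $T$ embeds injectively.

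The heart of the argument, and the step I expect to be the main obstacle, is showing $F\not\subseteq H$; the difficulty is that $F$ may be disconnected and a star component of large degree may force $\delta$ to be large, so $H$ cannot simply be arranged to have all components too small to host $F$. I would handle this by a counting argument over the \emph{big} components, those with at least $v(T)$ vertices. Since each clique of $Y$ has only $\delta+1=v(T)-1$ vertices, no big component embeds into the $Y$-part, so in any embedding $F\hookrightarrow H$ every big component of $F$ maps (being connected) into a single big component of $H$. But the big components of $H$ are exactly those of $R$, i.e.\ those of $F$ with one copy of $T$ removed; hence their total number of vertices is $v(T)$ less than that of the big components of $F$. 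As the images must be disjoint, this is impossible, so $F\not\subseteq H$. The only remaining technical point is matching the orders of $X$ and $Y$, where a short parity check shows that $m(\delta+1)$ is an admissible vertex count for Lemma~\ref{lem:random-reg}.
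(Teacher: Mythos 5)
Your proof is correct, but it takes a genuinely different route from the paper's. The paper keeps every component of $F$ and performs local surgery on each non-star component $T$: it identifies the endpoints $v_1,v_4$ of a $P_4$ lying on a diametral path to obtain a unicyclic graph $T'$, then compares $H_T=2T'$ with a ``twisted'' version $G_T$ obtained by an edge swap between the two copies; the equivalence $G_T\eqkkwl1 H_T$ is proved via a covering-map argument, and $F\not\subseteq H$ follows because every non-star component of $H$ has strictly fewer vertices than the largest non-star component of $F$. You instead modify only one non-star component $T$, replacing it by the pair consisting of a high-girth $\delta$-regular graph $X$ and $Y=mK_{\delta+1}$ with $\delta=v(T)-2$, so that \WL1-equivalence is immediate from regularity (the same device the paper already uses for Lemma~\ref{lem:no-cycle}), and you exclude an embedding $F\hookrightarrow H$ by a vertex count over the components of size at least $v(T)$; that count is sound, since each such component of $F$ must land in a single component of $H$ of size at least $v(T)$, and the total size of those shrinks by exactly $v(T)$ when $T$ is replaced by cliques on $v(T)-1$ vertices. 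Your approach buys uniformity --- it recycles Lemma~\ref{lem:random-reg} instead of introducing covering maps --- at the cost of invoking a probabilistic existence result where the paper's witness is explicit. One small inaccuracy to flag: a ball of radius $v(T)$ in a graph of girth $v(T)+1$ need \emph{not} induce a tree (that would require girth roughly $2v(T)$), so your stated justification for $T\hookrightarrow X$ is not literally right; the embedding nevertheless exists with girth $>v(T)$ by the standard greedy argument (placing a new leaf at an occupied vertex would close a cycle of length at most $v(T)$, and $\deg_T(p)-1\le\delta-1$ neighbors are blocked), or you can simply demand girth $>2v(T)$ from Lemma~\ref{lem:random-reg}.
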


\begin{proof}
Suppose that $F\in\rclass1$. By Lemma \ref{lem:no-cycle},
$F$ is a forest. In order to prove that every connected
component of $F$ is a star, it is sufficient and necessary
to prove that $F$ does not contain $P_4$ as a subgraph.
Assume, to the contrary, that $F$ has $P_4$-subgraphs.

Let $T$ be a connected component of $F$ containing $P_4$.  Consider a
diametral path $v_1v_2v_3\ldots v_d$ in $T$, where $d\ge4$. Note that
$v_1$ is a leaf. Let $T'$ be obtained from $T$ by identifying the
vertices $v_1$ and $v_4$. Thus, $T'$ is a unicyclic graph, where the
vertices $v_2$, $v_3$, and $v_4=v_1$ form a cycle $C_3$.  Obviously,
$v(T')<v(T)$.

Consider now the graph $H_T=2T'$. Identify one component of $H_T$ with
$T'$ and fix an isomorphism $\alpha$ from this to the other component
of $H_T$.  Let $G_T$ be obtained from $H_T$ by removing the edges
$v_2v_4$ and $\alpha(v_2)\alpha(v_4)$ and adding instead the new edges
$v_2\alpha(v_4)$ and $v_4\alpha(v_2)$. Note that, by construction,
$V(T')\subset V(H_T)=V(G_T)$.
Note that $G_T$ contains a subgraph isomorphic to~$T$.
We now prove that
\begin{equation}\label{eq:GTHT}
G_T\eqkkwl1H_T. 
\end{equation}

Indeed, define a map $\phi\function{V(G_T)}{V(T')}$ by
$\phi(u)=\phi(\alpha(u))=x$ for each $u\in V(T')\subset V(G_T)$.  Note
that $\phi$ is a \emph{covering map} from $G_T$ to $T'$, that is, a
surjective homomorphism whose restriction to the neighborhood of each
vertex of $G_T$ is surjective.  A straightforward inductive argument
shows that $\phi$ preserves the coloring produced by \WL1, that is,
$C^i(\phi(u))=C^i(u)$ for all $i$, where $C^i$ is defined by
\refeq{Ci}.  Thus, the multiset $\msetdef{C^i(u)}{u\in V(G_T)}$ is
obtained from the multiset $\msetdef{C^i(u)}{u\in V(T')}$ by doubling
the multiplicity of each color. Since $H_T$ consists of two disjoint
copies of $T'$, this readily implies that $G_T$ and $H_T$ are
indistinguishable by \WL1, and \refeq{GTHT} follows.

If a connected component $T$ of $F$ does not contain $P_4$, we set
$G_T=H_T=2T$. The equivalence \refeq{GTHT} is true also in
this case.  Define $G=\sum_TG_T$ and $H=\sum_TH_T$ where the disjoint
union is taken over all connected components $T$ of $F$.  We have
$G\eqkkwl1H$ by Lemma \ref{lem:AABB}.  Since each $G_T$ contains a
subgraph isomorphic to $T$, the graph $G$ contains a subgraph
isomorphic to $F$.  On the other hand, $H$ does not contain any
subgraph isomorphic to~$F$.  To see this, let $F_0$ be a non-star
component of $F$ with maximum number of vertices. Then $H$ cannot
contain even $F_0$ because every non-star component of $H$ has fewer
vertices than $F_0$.  Thus, we get a contradiction to the
assumption that $F\in\rclass1$.
\end{proof}

\noindent
Lemma \ref{lem:only-sf} reduces our task to proving that every star forest
that is not listed in Theorem \ref{thm:CR1}, that is, different from any of
\begin{equation}
  \label{eq:in-R1}
K_{1,s}\ (s\ge1),\,
2K_{1,1},\,
K_{1,2}+K_{1,1},\,
2K_{1,2},\,
K_{1,2}+2K_{1,1}
\end{equation}
does not belong to \rclass1.
Our proof of this fact sticks to the following scheme.
First, we will give a direct proof of non-membership for a small
amount of \emph{basic} star forests. Then we will establish
two \emph{derivation rules} based on some closure properties of \rclass1.
Finally, we will show that these derivation rules can be used,
for each star forest $F$ under consideration, to refute
the hypothesis $F\in\rclass1$ by deriving from it the membership
in \rclass1 of one of the basic star forests.

\begin{lemma}[Basic star forests]\label{lem:basic}
None of the star forests 
$K_{1,s}+K_{1,1}$ for any $s\ge3$, 
$K_{1,3}+K_{1,2}$,
$2K_{1,3}$, and 
$2K_{1,s}+K_{1,1}$ for any $s\ge1$
belongs to~\rclass1.
\end{lemma}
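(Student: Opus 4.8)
The plan is to prove, for each of the four families of basic star forests, non-membership in $\rclass1$ by exhibiting a pair of graphs $G$ and $H$ with $G\eqkkwl1H$ such that $H$ contains no subgraph isomorphic to $F$ while $G$ does. Following the general scheme, the witnesses should be built from regular or equitably-partitioned graphs so that the degree-matrix criterion of Lemma~\ref{lem:degree} confirms $\eqkkwl1$-equivalence while one graph avoids the pattern and the other realizes it.

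First I would dispose of the forests containing a $K_{1,1}$ component, namely $K_{1,s}+K_{1,1}$ for $s\ge3$ and $2K_{1,s}+K_{1,1}$ for $s\ge1$. The presence of an extra edge forces the host graph to have an edge ``far away'' from a large star (or two stars). The natural idea is to take $H$ to be a single connected vertex-transitive or regular graph of high girth and large vertex count in which every edge is surrounded by many neighbors, so that stitching a disjoint edge onto a star subgraph is impossible; and to take $G$ to be a disjoint union (for instance of a star-rich regular piece and a separate $K_2$-like piece, realized via a suitable regular graph) so that the components can be assembled into $F$. The $\eqkkwl1$-equivalence will follow by checking that $G$ and $H$ share the same degree matrix, either because both are $d$-regular with the same number of vertices, or by invoking the covering-map argument used in the proof of Lemma~\ref{lem:only-sf}: a common quotient graph $T'$ gives $G\eqkkwl1H$ whenever both $G$ and $H$ cover $T'$ with the same multiplicities.

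Next I would handle the two ``pure'' two-star forests $K_{1,3}+K_{1,2}$ and $2K_{1,3}$. Here the obstruction is subtler, since there is no lone edge to isolate; instead one must prevent two vertices of prescribed high degree from sitting in disjoint stars. The plan is to choose $G$ and $H$ to be regular graphs of the same degree $d\ge3$ and the same order, where $H$ is connected and dense enough (small diameter, high girth locally) that any two vertices of degree $d$ share or nearly share neighbors, precluding two vertex-disjoint stars $K_{1,3}$ (resp.\ $K_{1,3}$ and $K_{1,2}$), whereas $G$ is disconnected, say $G=2X$ for a suitable $d$-regular graph $X$ on half as many vertices, so that each component independently supplies one star and the two stars are automatically vertex-disjoint. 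Since $2X$ and a connected $d$-regular $H$ of the same order are both $d$-regular, Lemma~\ref{lem:degree} gives $\eqkkwl1$-equivalence outright provided the stable partitions produce identical degree matrices, which for regular graphs reduces to matching orders and degrees.

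The hard part will be engineering, for each case simultaneously, a connected witness $H$ that genuinely \emph{cannot} contain the pattern $F$; ruling out a subgraph in a dense regular graph requires a counting or extremal argument rather than a local inspection, and the thresholds $s\ge3$ and the specific star sizes $K_{1,3},K_{1,2}$ suggest the bound is tight and must be argued carefully. The cleanest route is likely to pick $H$ to be a small explicit graph (a complete graph $K_{d+1}$ or a complete bipartite graph) whose vertex count is forced by $d=v(F)-1$, exactly as in Lemma~\ref{lem:no-cycle}: then any $\eqkkwl1$-equivalent $G$ must have the same small order, and a direct check shows $H$ has too few vertices to host the disjoint stars of $F$ while $G$, built as a disjoint union of smaller cliques or regular pieces of the same total order and degree, does host them. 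I expect the entire lemma to follow from a short table of such $(G,H)$ pairs, one per family, with the verification of $G\eqkkwl1H$ reduced to checking regularity and order, and the verification of $F\not\subseteq H$ reduced to a vertex-count inequality.
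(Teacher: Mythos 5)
Your proposal correctly identifies the overall strategy, which is indeed the paper's: for each basic star forest $F$, exhibit a pair $G\eqkkwl1H$ with $F\subseteq G$ and $F\not\subseteq H$, certifying the equivalence by regularity or by the degree-matrix criterion of Lemma~\ref{lem:degree}. But what you have written is a plan for finding the witnesses, not the witnesses themselves, and the entire mathematical content of this lemma lies in those explicit constructions. The paper gives them concretely: for $K_{1,s}+K_{1,1}$ take $H=K_{s,s}$ and $G=2K_s$ plus a perfect matching between the two cliques; for $2K_{1,3}$ take $G=2K_4$ and $H$ the Wagner graph; for $K_{1,3}+K_{1,2}$ take $G=2C_4$ plus a connecting edge and $H=C_8$ plus a chord between antipodal vertices; for $2K_{1,s}+K_{1,1}$ take the two graphs obtained from $2K_{1,s+1}$ by adding two edges in two different ways. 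Until you produce comparable pairs and verify both halves ($F$-freeness of $H$ and the \WL1-equivalence), the lemma is not proved.

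Moreover, several of the heuristics you offer for locating the witnesses point in the wrong direction. For $K_{1,s}+K_{1,1}$ you suggest taking $H$ to be a connected regular graph of \emph{high girth and large vertex count}; but any such graph with degree at least $s$ and many vertices trivially contains a $K_{1,s}$ together with a far-away disjoint edge, hence contains $F$. The correct witness $K_{s,s}$ works for the opposite reason: it is small and dense, and after deleting the closed neighborhood of any degree-$s$ vertex the remaining $s-1$ vertices form an independent set. Your fallback of reducing the verification to ``a vertex-count inequality'' also fails on the actual witnesses: $K_{s,s}$ has $2s\ge s+3$ vertices, and the Wagner graph has exactly the $8$ vertices needed to host $2K_{1,3}$, so in both cases $F$-freeness is a structural fact requiring inspection, not a counting one. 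Finally, your template of ``$G$ disconnected, $H$ connected and regular'' does not cover the families $K_{1,3}+K_{1,2}$ and $2K_{1,s}+K_{1,1}$, whose witnesses are non-regular and whose \WL1-equivalence must be checked via their stable partitions rather than by matching order and degree alone.
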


\begin{proof}
In order to prove that a graph $F$ is not in \rclass1,
one needs to exhibit \WL1-indistinguishable graphs $G$ and $H$
such that $G$ contains $F$ as a subgraph while $H$ does not. Below we provide 
such witnesses $G$ and $H$ for each basic star forest $F$ listed in the lemma;
see also Fig.~\ref{fig:GH-for-basic-sfs}.

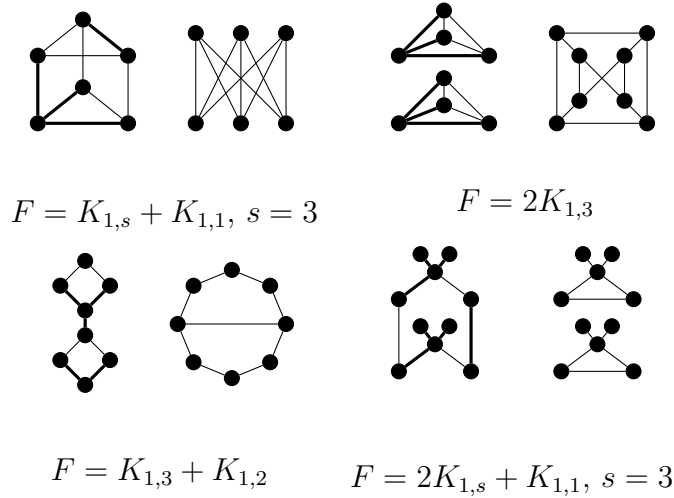
\begin{figure}[t]
  \centering
\begin{tikzpicture}[every node/.style={circle,draw,inner sep=2pt,fill=black},scale=.6]
  \begin{scope}
  \begin{scope}
   \path (0,0) node (a1) {}
    (2,0) node (a3) {}  edge[very thick] (a1)
      (1,.8) node (a4) {}  edge[very thick] (a1) edge (a3)

 (0,1.5) node (b1) {} edge[very thick] (a1)
    (2,1.5) node (b3) {}  edge (b1)  edge (a3)
      (1,2.3) node (b4) {}  edge (b1) edge[very thick] (b3) edge (a4);
 \end{scope}
  \begin{scope}[xshift=35mm]
    \path (0,0) node (a1) {}
      (1,0) node (a2) {} 
   (2,0) node (a3) {} 
    (0,2) node (b1) {} edge (a1) edge (a2) edge (a3)
      (1,2) node (b2) {} edge (a1) edge (a2) edge (a3)
     (2,2) node (b3) {} edge (a1) edge (a2) edge (a3);
 \end{scope}
\node[draw=none,fill=none,below] at (2.8,1.5) {$F=K_{1,s}+K_{1,1}$, $s=3$};
  \end{scope}

  \begin{scope}[xshift=80mm]
  \begin{scope}
    \path (0,0) node (a1) {}
      (1,.4) node (a2) {} edge[very thick] (a1)
   (2,0) node (a3) {}  edge[very thick] (a1) edge (a2)
      (1,1) node (a4) {}  edge[very thick] (a1) edge (a2) edge (a3)

 (0,1.5) node (b1) {}
      (1,1.9) node (b2) {}  edge[very thick] (b1)
   (2,1.5) node (b3) {}  edge[very thick] (b1) edge (b2)
      (1,2.5) node (b4) {}  edge[very thick] (b1) edge (b2) edge (b3);
 \end{scope}
  \begin{scope}[xshift=40mm,yshift=5mm]
    \path (0,0) node (a1) {}
      (1,0) node (a2) {}
   (1,1) node (a3) {}   edge (a1) edge (a2)
      (0,1) node (a4) {}  edge (a1) edge (a2)

  (-.5,-.5) node (b1) {} edge (a1)
      (1.5,-.5) node (b2) {} edge (a2)  edge (b1)
   (1.5,1.5) node (b3) {}  edge (a3) edge (b2)
      (-.5,1.5) node (b4) {}  edge (b1) edge (a4) edge (b3);
 \end{scope}
\node[draw=none,fill=none,below] at (2.8,-.1) {$F=2K_{1,3}$};
  \end{scope}

  \begin{scope}[xshift=5mm,yshift=-47mm]
  \begin{scope}[scale=.55,xshift=10mm]
   \path (0,1) node (a1) {}
    (1,2) node (a2) {}  edge[very thick] (a1)
      (0,3) node (a3) {}  edge (a2) 
      (-1,2) node (a4) {}  edge[very thick] (a1) edge (a3)

(0,0) node (b1) {} edge[very thick] (a1)
    (1,-1) node (b2) {}  edge (b1)
      (0,-2) node (b3) {}  edge[very thick] (b2) 
      (-1,-1) node (b4) {}  edge (b1) edge[very thick] (b3);
 \end{scope}
\newcommand{\ra}{1.2cm}
  \begin{scope}[xshift=38mm,yshift=2.5mm]
    \path (0:\ra) node (a1) {}
      (45:\ra) node (a2) {} edge (a1)
      (90:\ra) node (a3) {} edge (a2)
      (135:\ra) node (a4) {} edge (a3)
      (180:\ra) node (a5) {} edge (a4) edge (a1)
      (225:\ra) node (a6) {} edge (a5)
      (270:\ra) node (a7) {} edge (a6)
      (315:\ra) node (a8) {} edge (a7) edge (a1);
 \end{scope}
\node[draw=none,fill=none,below] at (2.2,-.5) {$F=K_{1,3}+K_{1,2}$};
  \end{scope}

  \begin{scope}[xshift=80mm,yshift=-55mm]
  \begin{scope}[scale=.8]
    \path (0,0) node (a1) {}
      (2,0) node (a2) {} 
   (1,.75) node (a3) {}  edge[very thick] (a1) edge (a2)
      ($(a3)+(-.4,.5)$) node (a4) {} edge[very thick] (a3)
      ($(a3)+(.4,.5)$) node (a5) {} edge[very thick] (a3)
($(a1)+(0,2)$) node (b1) {}  edge (a1)
 ($(a2)+(0,2)$) node (b2) {}  edge[very thick] (a2)
 ($(a3)+(0,2)$) node (b3) {}  edge[very thick] (b1) edge (b2)
 ($(a4)+(0,2)$) node (b4) {} edge[very thick] (b3)
 ($(a5)+(0,2)$) node (b5) {} edge[very thick] (b3);
\end{scope}
  \begin{scope}[scale=.8,xshift=45mm]
    \path (0,0) node (a1) {}
      (2,0) node (a2) {}  edge (a1)
   (1,.75) node (a3) {}  edge (a1) edge (a2)
      ($(a3)+(-.4,.5)$) node (a4) {} edge (a3)
      ($(a3)+(.4,.5)$) node (a5) {} edge (a3)
($(a1)+(0,2)$) node (b1) {}
 ($(a2)+(0,2)$) node (b2) {}   edge (b1) 
 ($(a3)+(0,2)$) node (b3) {}  edge (b1) edge (b2)
 ($(a4)+(0,2)$) node (b4) {} edge (b3)
 ($(a5)+(0,2)$) node (b5) {} edge (b3);
 \end{scope}
\node[draw=none,fill=none,below] at (2.5,1.3) {$F=2K_{1,s}+K_{1,1}$, $s=3$};
  \end{scope}
\end{tikzpicture}

\vspace{-15mm}

  \caption{$G$/$H$-certificates for each basic star forest $F$.} 
  \label{fig:GH-for-basic-sfs}
\end{figure}

\begin{description}
\item[$K_{1,s}+K_{1,1}$, $s\ge3$:]
$H=K_{s,s}$ \, and \, $G$ is obtained from $2K_s$ by adding a perfect matching between 
the two $K_s$ parts. 
\item[$2K_{1,3}$:]
$G=2K_4$ and $H$ is the \emph{Wagner graph} (or \emph{4-Möbius ladder}).
\item[$K_{1,3}+K_{1,2}$:]
$G$ is obtained from $2C_4$ by adding an edge between the two $C_4$ parts,
and $H$ is obtained from $C_8$ by adding an edge between two antipodal vertices of the 8-cycle in~$H$.
\item[$2K_{1,s}+K_{1,1}$, $s\ge1$:]
Both graphs $G$ and $H$ are obtained from $2K_{1,s+1}$ by adding two edges $e$. 
Let $a$ and $b$ be two leaves of the fist copy of $K_{1,s+1}$, 
and let $a'$ and $b'$ be two leaves of the other copy of $K_{1,s+1}$. 
Then $G$ additionally contains two edges $aa'$ and $bb'$,
whereas $H$ additionally contains two edges $ab$ and~$a'b'$.
\end{description}

\noindent
In the first two cases, the graphs $G$ and $H$ in each witness pair
are indistinguishable by color refinement as they are regular graphs
of the same degree with the same number of vertices. In the last two
cases, the \WL1-indistinguishability of $G$ and $H$ is easily seen
directly or by computing their stable partitions and applying
Lemma~\ref{lem:degree}.  
\end{proof}

\begin{lemma}[Derivation rules]\label{lem:rules}\hfill
  \begin{enumerate}[\bf 1.]
  \item 
If $K_{1,i_1}+\ldots+K_{1,i_s}+K_{1,i_{s+1}}\in\rclass1$, then $K_{1,i_1}+\ldots+K_{1,i_s}\in\rclass1$.
\item 
If $K_{1,i_1+1}+\ldots+K_{1,i_s+1}\in\rclass1$, then $K_{1,i_1}+\ldots+K_{1,i_s}\in\rclass1$.
  \end{enumerate}
\end{lemma}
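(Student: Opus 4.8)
The plan is to prove both rules in contrapositive form: from a pair of witnesses showing that the smaller forest is not in \rclass1, I construct a pair of witnesses for the larger one. The only combinatorial fact I need about stars is elementary: in any vertex-disjoint packing, a single star $K_{1,t}$ can host at most one star subgraph, because every edge of $K_{1,t}$ is incident to its center; moreover any star packed into $K_{1,t}$ has at most $t$ leaves. So fix $F=K_{1,i_1}+\dots+K_{1,i_s}$ and assume $F\notin\rclass1$, witnessed by graphs $G\eqkkwl1 H$ such that $G$ contains $F$ as a subgraph while $H$ does not.

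For Part 1, with $t=i_{s+1}$, I take $\hat G=G+K_{1,t}$ and $\hat H=H+K_{1,t}$. Since $K_{1,t}\eqkkwl1 K_{1,t}$ trivially, Lemma~\ref{lem:AABB} gives $\hat G\eqkkwl1\hat H$, and $\hat G$ visibly contains $F+K_{1,t}$. The point is to check that $\hat H$ does not. If it did, the $s+1$ pattern stars would embed disjointly into the disjoint union $H+K_{1,t}$, and the fresh $K_{1,t}$ component could absorb at most one of them, of size at most $t$; the remaining $s$ stars embed disjointly into $H$. Either the absorbed star is the extra $K_{1,t}$, leaving exactly a copy of $F$ in $H$, or it is some $K_{1,i_p}$ with $i_p\le t$, in which case the copy of $K_{1,t}$ lying in $H$ already contains a $K_{1,i_p}$; in both cases $H$ contains $F$, a contradiction. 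The reason this works for an \emph{arbitrary} removed star (not only the largest) is that I add exactly $K_{1,t}$ rather than a large star: a larger added star could absorb a pattern star of size exceeding $t$ and break the argument.

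For Part 2 I use the corona operation: let $X^{\bullet}$ be the graph obtained from $X$ by attaching one fresh pendant vertex to every vertex of $X$, and set $\hat G=G^{\bullet}$, $\hat H=H^{\bullet}$. The subgraph bookkeeping is routine. If $G$ contains $F$, then each star center gains its private pendant as an extra leaf, upgrading every $K_{1,i_j}$ to $K_{1,i_j+1}$ while preserving disjointness, so $G^{\bullet}$ contains $K_{1,i_1+1}+\dots+K_{1,i_s+1}$. Conversely, in any copy of $\sum_j K_{1,i_j+1}$ inside $H^{\bullet}$ every center has degree at least $2$ and hence lies in $V(H)$ (pendants have degree $1$), and each star is adjacent to only the single pendant of its center, so deleting the pendant leaves from the copy recovers a disjoint copy of $F$ inside $H$ — contradicting that $H$ contains no $F$.

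The hard part will be verifying that the corona preserves \WL1-equivalence, i.e.\ $G\eqkkwl1 H\Rightarrow G^{\bullet}\eqkkwl1 H^{\bullet}$. I plan to prove this by induction on the refinement round, exhibiting functions $\alpha_i,\beta_i$ that depend only on $i$ (not on the graph) such that, for every graph $X$, the round-$i$ color of an original vertex $v$ in $X^{\bullet}$ equals $\alpha_i$ of its round-$i$ color in $X$, and the round-$i$ color of its pendant equals $\beta_i$ of the same. The inductive step follows directly from the refinement rule~\refeq{Ci}, using that the round-$(i{+}1)$ color of $v$ already records its round-$i$ color and its neighborhood multiset, and that $v$'s neighbors in $X^{\bullet}$ are exactly its neighbors in $X$ together with its pendant. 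It follows that the round-$i$ color multiset of $X^{\bullet}$ is a fixed function of that of $X$, so equality of the color multisets of $G$ and $H$ at every round transfers to $G^{\bullet}$ and $H^{\bullet}$, giving $G^{\bullet}\eqkkwl1 H^{\bullet}$. Alternatively, one can display the equitable partition of $X^{\bullet}$ whose cells are the cells of the stable partition of $X$ together with their attached pendants, and read off its degree matrix from $D_X$, invoking Lemma~\ref{lem:degree}.
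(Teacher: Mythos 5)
Your proof is correct and follows essentially the same route as the paper: Part~1 adds a disjoint copy of $K_{1,i_{s+1}}$ to both witness graphs and invokes Lemma~\ref{lem:AABB}, and Part~2 uses the same pendant-attachment (corona) construction $X\mapsto X'$. You additionally spell out the two verifications the paper leaves implicit --- that $H+K_{1,i_{s+1}}$ still avoids the larger star forest, and that the corona preserves $\eqkkwl1$ via an induction on refinement rounds --- and both of these are carried out correctly.
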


\begin{proof}
  \textit{1.}
Suppose that $K_{1,i_1}+\ldots+K_{1,i_s}\notin\rclass1$.  Let $G$ and
$H$ be two graphs witnessing this, that is, $G\eqkkwl1H$ and $G$
contains this star forest while $H$ does not.  Then the graphs
$G+K_{1,i_{s+1}}$ and $H+K_{1,i_{s+1}}$, which are
\WL1-indistinguishable by Lemma \ref{lem:AABB}, witness that
$K_{1,i_1}+\ldots+K_{1,i_s}+K_{1,i_{s+1}}\notin\rclass1$.

  \textit{2.}
Suppose that $K_{1,i_1}+\ldots+K_{1,i_s}\notin\rclass1$ and this is witnessed by $G$ and $H$.
Given a graph $X$, let $X'$ denote the result of attaching a new degree-1 vertex $x'$
to each vertex $x$ of $X$ (thus, $v(X')=2v(X)$). Then the graphs $G'$ and $H'$ witness
that $K_{1,i_1+1}+\ldots+K_{1,i_s+1}\notin\rclass1$. 
Indeed, it is easy to see that $X$ contains $K_{1,i_1}+\ldots+K_{1,i_s}$
if and only if $X'$ contains $K_{1,i_1+1}+\ldots+K_{1,i_s+1}$ as a subgraph.
The equivalence $G'\eqkkwl1H'$ follows from the equivalence $G\eqkkwl1H$.
\end{proof}

\noindent
Now, let $F$ be a star forest not listed in \refeq{in-R1}.
Assume that $F\in\rclass1$. Lemma \ref{lem:rules} provides us
with two derivations rules:
\begin{itemize}
\item 
if a star forest $X$ is in \rclass1, then the result of removing
one connected component from $X$ is also in~\rclass1;
\item 
if a star forest $X$ is in \rclass1, then the result of cutting off
one leaf in each connected component of $X$ is also in~\rclass1.
\end{itemize}

\noindent
Note that, applying these derivation rules, $F$ can be reduced to one
of the basic star forests. By Lemma \ref{lem:basic}, we get a
contradiction, which completes the proof of Theorem~\ref{thm:CR1}.

\section{Weisfeiler-Leman invariance}\label{s:Ck}

The original algorithm described by Weisfeiler and Leman in \cite{WLe68},
which is nowadays more often referred to as the \emph{2-dimensional Weisfeiler-Leman
algorithm}, operates on the Cartesian square $V^2$ of the vertex set of an input graph $G$.
Initially it assigns each pair $(u,v)\in V^2$ one of three colors, namely
\emph{edge} if $u$ and $v$ are adjacent, \emph{nonedge} if $u\ne v$ and $u$ and $v$ are non-adjacent,
and \emph{loop} if $u=v$. Denote this coloring by $C^0$.
The coloring of $V^2$ is then refined step by step. The coloring after the $i$-th
refinement step is denoted by $C^i$ and is computed as
\begin{equation}
  \label{eq:refine-2}
  C^i(u,v)=C^{i-1}(u,v)\mid\mset{C^{i-1}(u,w)\mid C^{i-1}(w,v)}_{w\in V},
\end{equation}
where $\mset{}$ denotes the multiset
and $\mid$ denotes the string concatenation (an appropriate encoding is assumed).

The $k$-dimensional version of the algorithm, \kWL, operates on $V^k$. The initial
coloring of a tuple $(u_1,\ldots,u_k)$ encodes its equality type
and the isomorphism type of the subgraph of $G$ induced by
the vertices $u_1,\ldots,u_k$. The color refinement is performed similarly to \refeq{refine-2}.
For example, if $k=3$, then
$$
C^{i}(u_1,u_2,u_3)=C^{i-1}(u_1,u_2,u_3)\mid
\big\{\!\!\big\{ \hspace{0.5mm}C^{i-1}(w,u_2,u_3)
\mid C^{i-1}(u_1,w,u_3)\mid C^{i-1}(u_1,u_2,w) \hspace{0.5mm} \big\}\!\!\big\}_{w\in V}.  
$$
Generally, we write $\mathrm{WL}_k^r(G,u_1,\allowbreak \ldots,u_k)$ to denote the color of
the tuple $(u_1,\ldots,u_k)$ produced by the $k$-dimensional  Weisfeiler-Leman
algorithm after performing $r$ refinement steps. The length of $\alg krG{u_1,\ldots,u_k}$
grows exponentially as $r$ increases, which is remedied by renaming the tuple colors after each step
and retaining the corresponding color substitution tables. However, 
in our analysis of the algorithm we will use $\alg krG{u_1,\ldots,u_k}$ 
in its literal, iteratively defined meaning.

Let $\algg krG=\msetdef{\alg krG\baru}{\baru\in V^k}$ denote the color palette
observed on the input graph $G$ after $r$ refinement rounds.
We say that the $k$-dimensional  Weisfeiler-Leman algorithm \emph{distinguishes}
graphs $G$ and $H$ if $\algg krG\ne\algg krH$ after some number of rounds $r$.
The standard color stabilization argument shows that if $n$-vertex graphs $G$ and $H$
are distinguishable by \kWL, then they are distinguished
after $n^k$ refinement rounds at latest. If this does not happen, we say that
$G$ and $H$ are \emph{\kWL-equivalent} and write $G\eqkwl H$.

Obviously, isomorphic graphs are \kWL-equivalent for every $k$.
Recall also that any two strongly regular graphs with the same parameters
are \WL2-equivalent. The smallest pair of non-isomorphic strongly regular graphs
with the same parameters consists of the $4\times4$-rook's graph and the Shrikhande graph
(these graphs are depicted in Fig.~\ref{fig:shrikhande-rooks}).
The \WL2-equivalence of these graphs will be used several times below.

Note that the above description of the $k$-dimensional  Weisfeiler-Leman algorithm
and the \kWL-equivalence relation for $k\ge2$ are meaningful as well for
vertex-colored graphs (the initial coloring $C^0$ includes now also vertex colors).
We will need this more general framework only once, namely
in the proof of Theorem \ref{thm:one4clique} below.

\begin{theorem}[Dell, Grohe, and Rattan \cite{DellGR18}]\label{thm:DGR}
Let $\hhomo FG$ denote the number of homomorphisms from a graph $F$ to a graph $G$.
For each $F$ of treewidth $k$, the homomorphism count $\hhomo F\cdot$ is $\eqkwl$-invariant.
\end{theorem}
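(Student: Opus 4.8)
The plan is to establish the (easier) implication that $G\eqkwl H$ forces $\hhomo FG=\hhomo FH$ whenever $\tw F\le k$, which is exactly what the statement asserts. I fix a tree decomposition $(T,\beta)$ of $F$ of width at most $k$, so every bag $\beta(t)$ has at most $k+1$ vertices, root it at an arbitrary node, and assume it is \emph{nice}, i.e.\ assembled from leaf, introduce, forget, and join nodes. For a node $t$ let $F_t$ be the subgraph of $F$ spanned by the vertices occurring in the subtree rooted at $t$; then $\beta(t)$ separates $F_t$ from the rest of $F$. For an assignment $a\function{\beta(t)}{V(G)}$ I write $\hom_t(a)$ for the number of homomorphisms $F_t\to G$ extending $a$. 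Fixing an order on $\beta(t)$ turns $a$ into a tuple $\bar a\in V(G)^{m}$ with $m=|\beta(t)|\le k+1$, and the whole proof reduces to a single inductive claim: $\hom_t(a)$ depends only on the stable $\kWL$-information of $\bar a$.

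I would formalise the ``$\kWL$-information of an $m$-tuple $\bar a$'' as its stable color $\alg k\infty G{\bar a}$ when $m\le k$, and, when $m=k+1$, as the tuple of stable colors of the $k+1$ sub-tuples obtained from $\bar a$ by deleting one coordinate. The induction then runs bottom-up over the four node types. Leaf and introduce nodes only add one vertex together with the $F$-edges joining it to the current bag, so $\hom_t$ is multiplied by an indicator of the atomic (adjacency/equality) type of $\bar a$; a join node glues two children along a shared bag, giving the pointwise product $\hom_t(a)=\hom_{t_1}(a)\cdot\hom_{t_2}(a)$; and a forget node drops a vertex $y$, giving $\hom_t(a)=\sum_{w\in V(G)}\hom_{t'}\of{a\cup\{y\mapsto w\}}$. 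For each node I must check that the operation carries ``$\kWL$-determined'' functions to ``$\kWL$-determined'' functions. The product and indicator steps are routine, since the color of a tuple determines the colors of all of its sub-tuples and its atomic type.

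The genuinely delicate point --- and the place where ``$\tw F\le k$'' meets ``$\kWL$'' --- is the forget step at a bag that already has size $k+1$: there I sum a function of $(k+1)$-tuples down to a function of $k$-tuples, while $\kWL$ colors only $k$-tuples directly. The key lemma I will need is the standard extension property of the stabilized $\kWL$ coloring: the stable color of a $k$-tuple $\bar v$ determines, for every pattern $(c_1,\dots,c_k)$ of stable colors, the number of vertices $w$ for which the one-point substitutions $\bar v[1/w],\dots,\bar v[k/w]$ receive colors $c_1,\dots,c_k$; and conversely those substitution colors, together with the atomic type of $\bar v$, recover the atomic type of the $(k+1)$-tuple $(\bar v,w)$. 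This is precisely what the refinement rule \refeq{refine-2} and its $k$-ary analogue guarantee at stabilization, and it reflects the coincidence of $\kWL$-equivalence with equivalence in the $(k+1)$-variable counting logic. I expect this lemma, and the bookkeeping that lets me apply it uniformly at every forget node, to be the main obstacle; the rest is just propagating the invariant.

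Finally I would close at the root $r$, where $\hhomo FG=\sum_{a}\hom_r(a)$ with $a$ ranging over all assignments of the root bag. Grouping these summands by the $\kWL$-information of $\bar a$ and invoking $G\eqkwl H$ --- which by definition gives $\algg k\infty G=\algg k\infty H$ as multisets, hence equally many tuples of each stable color in $G$ and in $H$ --- yields $\hhomo FG=\hhomo FH$. As a shortcut one could instead bypass the recursion entirely, appealing to the Cai--F\"urer--Immerman correspondence between $\kWL$ and counting logic with $k+1$ variables together with Dvo\v{r}\'ak's theorem that this logic preserves homomorphism counts from graphs of treewidth at most $k$; the argument above is essentially an explicit unfolding of that fact.
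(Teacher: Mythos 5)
The paper does not prove this statement at all: it is imported as a black box from Dell, Grohe, and Rattan \cite{DellGR18} (and the one direction asserted here goes back to Dvo\v{r}\'ak), so there is no internal proof to compare against. Your sketch is essentially the standard argument for that direction --- dynamic programming over a nice tree decomposition, with the inductive invariant that the partial counts $\hom_t(a)$ are determined by the stable \kWL data of the bag tuple, closed off at the root by the equality of color multisets --- and you correctly identify the forget step at a full bag as the crux and the one-point-substitution (extension) property of the stable coloring as the key lemma. Two points deserve more care than you give them. First, the extension lemma must be stated \emph{across} the two graphs: you need that if $\bar a\in V(G)^k$ and $\bar b\in V(H)^k$ receive the same stable color, then the multisets $\mset{(C(\bar a[1/w]),\dots,C(\bar a[k/w]))}_{w\in V(G)}$ and $\mset{(C(\bar b[1/w]),\dots,C(\bar b[k/w]))}_{w\in V(H)}$ coincide; this follows from running the stabilization argument on $G+H$ or from using the literal iterated colors as the paper does, but it is not literally the within-one-graph statement you wrote. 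Second, your recovery of the atomic type of $(\bar v,w)$ from the colors of its $k$-subtuples fails verbatim for $k=1$: deleting the old coordinate from a pair leaves a $1$-tuple carrying no adjacency information, so the adjacency of $v$ and $w$ is not recoverable. The theorem is still true for $k=1$ (and the paper uses it there via Corollary \ref{cor:C2}), but the argument must be rerouted through the \WL1 refinement rule \refeq{Ci}, which aggregates over neighbors rather than over all vertices, together with a tree decomposition of the tree $F$ whose size-two bags are edges. With those repairs your proposal is a correct and self-contained proof of the direction the theorem asserts.
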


\begin{definition}
We define the \emph{homomorphism-hereditary treewidth} of a graph $F$,
denoted by $\htw F$, to be the maximum treewidth $\tw{F'}$ over all
homomorphic images $F'$ of~$F$.
\end{definition}

The following result follows directly from Theorem \ref{thm:DGR}
and the fact established by Lov{\'{a}}sz \cite[Section 5.2.3]{hombook}
that the subgraph count $\sub FG$ is expressible as a function of
the homomorphism counts $\hhomo{F'}G$ where $F'$ ranges over homomorphic
images of $F$ (see also \cite{CurticapeanDM17},
where algorithmic consequences of this relationship are explored).

\begin{corollary}\label{cor:C2}
\cclass k contains all $F$ with $\htw F\le k$.  
\end{corollary}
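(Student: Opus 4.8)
The plan is to combine Theorem~\ref{thm:DGR} with the Lov\'asz expansion of subgraph counts into homomorphism counts over homomorphic images.
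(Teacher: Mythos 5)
Your plan is exactly the paper's own argument: since $\htw F\le k$ means every homomorphic image $F'$ of $F$ has treewidth at most $k$, Theorem~\ref{thm:DGR} makes each $\hhomo{F'}G$ a $\eqkwl$-invariant, and Lov\'asz's expansion of $\sub FG$ as a function of these counts then gives the invariance of the subgraph count. This matches the paper's proof, which is stated at the same level of detail.
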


It is easy to see that $\htw F=1$ if and only if $F$ is a star graph
or the matching graph $2K_2$ (up to adding isolated vertices). Thus,
Theorem \ref{thm:CR1} implies that \cclass1 consists exactly of the
pattern graphs $F$ with $\htw F=1$.
We now characterize the class of graphs $F$ with $\htw F\le2$.

Given a graph $G$ and a partition $P$ of the vertex set $V(G)$, we define
the \emph{quotient graph} $G/P$ as follows. The vertices of  $G/P$ are
the elements of $P$, and $X\in P$ and $Y\in P$ are adjacent in $G/P$
if and only if $X\ne Y$ and there are vertices $x\in X$ and $y\in Y$
adjacent in $G$.

\begin{lemma}\label{lem:htw2}
$\htw F>2$ if and only
 if there is a partition $P$ of $V(F)$ such that $F/P\cong K_4$.  
\end{lemma}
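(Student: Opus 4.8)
The plan is to reduce the statement to a clean minor-theoretic fact and then treat the two implications separately. First I would record the dictionary between homomorphic images and quotients: a graph $F'$ is a homomorphic image of $F$ exactly when $F'\cong F/Q$ for some partition $Q$ of $V(F)$ all of whose blocks are independent (the fibres of a homomorphism of simple graphs must be independent, since an edge inside a fibre would force a loop). Hence $\htw F=\max_Q\tw{F/Q}$, where $Q$ ranges over independent-block partitions. Combined with the classical excluded-minor theorem---a graph has treewidth at most $2$ if and only if it has no $K_4$ minor---the lemma becomes the assertion that some independent-block quotient $F/Q$ has a $K_4$ minor if and only if some \emph{arbitrary} partition $P$ of $V(F)$ satisfies $F/P\cong K_4$.

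For the forward direction I would start from an independent-block partition $Q$ such that $F/Q$ has a $K_4$ minor, witnessed by four pairwise disjoint branch sets $B_1,\dots,B_4\subseteq V(F/Q)$ that are pairwise joined by an edge. Pulling each $B_i$ back along the quotient map gives four disjoint vertex sets $\tilde B_1,\dots,\tilde B_4\subseteq V(F)$ that are still pairwise joined by an edge of $F$. I would then define $P$ by throwing all remaining vertices of $F$ into $\tilde B_1$; this produces a partition into exactly four blocks, every two of which are adjacent, so $F/P\cong K_4$. Connectivity of the branch sets is not even needed here, so this direction is routine.

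The reverse direction is the main obstacle, because a partition $P$ witnessing $F/P\cong K_4$ may have blocks $X_1,\dots,X_4$ that are neither independent (so $F/P$ need not itself be a homomorphic image) nor connected (so $P$ need not give a $K_4$ minor of $F$ directly). The key idea is a representative-identification trick producing an honest homomorphic image with a $K_4$ minor: in each block $X_i$ pick one vertex from every connected component of the induced subgraph $F[X_i]$, and let $Q$ be the partition whose nontrivial blocks $R_i$ are exactly these sets of representatives, all other vertices being singletons. Each $R_i$ is independent (representatives from distinct components of $F[X_i]$ are non-adjacent, since any edge of $F$ between two vertices of $X_i$ lies inside $F[X_i]$), so $F/Q$ is a genuine homomorphic image of $F$.

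I would then verify that the images $B_i$ of the blocks $X_i$ in $F/Q$ form a $K_4$ minor: identifying the chosen representatives glues the components of $F[X_i]$ at a common vertex, so each $B_i$ is connected; the $B_i$ are pairwise disjoint because $Q$ never merges vertices from different $X_i$; and each of the six edges witnessing $F/P\cong K_4$ survives as a genuine (non-loop) edge between the appropriate $B_i$ and $B_j$, for the same reason. Thus $\tw{F/Q}\ge 3$ and $\htw F>2$. The only point requiring care throughout is tracking which identifications are legal---never merging two adjacent vertices---and checking that the six cross-edges are never collapsed to loops; everything else is bookkeeping.
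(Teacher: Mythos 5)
Your proof is correct and follows essentially the same route as the paper: translate homomorphic images into independent-block quotients, use the excluded-minor characterization $\mathcal{S}_2=\{K_4\}$ of treewidth at most $2$, and show that an arbitrary quotient witnessing $K_4$ can be factored into an independent-block quotient followed by a connected-branch-set minor. The paper compresses this factorization into the phrase ``these observations imply the following fact,'' whereas your representative-identification trick (one representative per component of $F[X_i]$) is exactly the missing detail made explicit, so your write-up is if anything more complete than the original.
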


\begin{proof}
Let us make two basic observations. First,
$H$ is a homomorphic image of $G$ if and only if there is a partition
$P$ of $V(G)$ into independent sets such that $H\cong G/P$.  
Second, $H$ is a minor of $G$ if and only there is 
a partition $P$ of $V(G)$ such that the graph $G[X]$ is connected
 for every $X\in P$ and $H$ is isomorphic to a subgraph of $G/P$. 

These observations imply the following fact, which is more general
than stated in the lemma.
Let $\mathcal S_k$ be the set of the minimal forbidden minors for 
the class of graphs with treewidth at most $k$. Note that, since the last class
of graphs is minor-closed, $\mathcal S_k$ exists and is finite by the Robertson–Seymour theorem.
Then $\htw F > k$ if and only if $V(F)$ admits a partition $P$ such that $G/P$
contains a subgraph isomorphic to a graph in~$\mathcal S_k$.

The lemma now follows from the well-known fact \cite[Chapter 12]{Diestel} that $\mathcal S_2=\{K_4\}$.
Note that, if $F/P$ contains $K_4$ as a subgraph, then $V(F)$ admits a partition $P'$ 
such that $F/P'$ is itself isomorphic to $K_4$ as the superfluous nodes of $F/P$ can be merged.  
\end{proof}

Whether or not $\htw F \leq 2$ is a necessary condition for the membership
of $F$ in \cclass2, is open.
We now show the equivalence of $F\in\cclass 2$ and $\htw F \leq 2$
for several standard graph sequences.

\begin{theorem}\label{thm:C2}
  \cclass2 contains 
  \begin{enumerate}[\bf 1.]
  \item
$K_2,2K_2,3K_2,4K_2,5K_2$ and no other matching graphs;
\item 
 $C_3,\ldots,C_7$ and no other cycle graphs;
\item 
 $P_1,\ldots,P_7$ and no other path graphs.
  \end{enumerate}
\end{theorem}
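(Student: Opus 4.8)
The plan is to prove Theorem~\ref{thm:C2} in two directions for each of the three families, splitting each claim into a \emph{membership} part (the listed small graphs lie in \cclass2) and a \emph{non-membership} part (all larger graphs in the family lie outside \cclass2). For the membership part, my first move is to invoke Corollary~\ref{cor:C2}: it suffices to verify that each listed pattern $F$ satisfies $\htw F\le 2$. For the matchings $sK_2$ with $s\le 5$, the cycles $C_3,\dots,C_7$, and the paths $P_1,\dots,P_7$, I would check this directly through Lemma~\ref{lem:htw2}, i.e.\ by arguing that no vertex partition $P$ produces a quotient $F/P\cong K_4$. Paths and matchings are forests, so every homomorphic image is a graph obtained by merging independent sets, and I would bound the treewidth of all such images by a short case analysis on how many vertices can be identified before a $K_4$ minor appears; the edge count obstruction ($K_4$ has six edges) already rules out the smallest patterns. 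For cycles $C_s$ with $s\le 7$, the same quotient analysis applies, and Fürer's prior result already covers $C_3,\dots,C_6$, so only $C_7\in\cclass2$ is genuinely new and I would give it either via $\htw{C_7}\le 2$ or via the girth bound announced in the introduction (since $\sub{C_s}G=\sub{C_s}H$ whenever $G\eqkkwl2H$ and $3\le s\le 2g(G)+1$).

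The non-membership direction is where the real work lies, and it follows the same template as Section~\ref{s:CR1}: for each pattern $F$ beyond the listed range I must exhibit a pair of \WL2-equivalent graphs $G,H$ with $\sub FG\ne\sub FH$. The natural source of such pairs is the pair of strongly regular graphs with the same parameters flagged in the text, namely the $4\times4$ rook's graph and the Shrikhande graph, which are \WL2-equivalent but non-isomorphic. So the core computation is to show $\sub{6K_2}{\text{rook}}\ne\sub{6K_2}{\text{Shrikhande}}$ for the matching claim, and analogously that these two graphs (or a related small \WL2-equivalent pair) separate the counts $\sub{C_8}\cdot$ and $\sub{P_8}\cdot$. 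Since both graphs have $16$ vertices and are $6$-regular with the same local parameters, all counts of subgraphs determined by parameters up to the Beezer--Farrell threshold agree, so I expect $6K_2$, $C_8$, and $P_8$ to be exactly the first patterns where a discrepancy surfaces; verifying the discrepancy is a finite but delicate count on these two specific $16$-vertex graphs. Having separated the first graph of each family, I would then propagate non-membership upward: for matchings and paths an additive/extension argument (attaching fixed gadgets to the separating pair, in the spirit of Lemma~\ref{lem:rules}) promotes a witness for $F$ into a witness for a larger $F'$, so it suffices to separate the single smallest out-of-range pattern and inherit the rest. For cycles the upward step is already handled by the stated result that \cclass2 contains only finitely many $C_s$ together with Fürer's explicit witnesses for $8\le s\le 16$.

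The main obstacle I anticipate is twofold. First, on the membership side, confirming $\htw{C_7}\le 2$ and $\htw{P_7}\le 2$ requires checking that \emph{no} identification of independent vertex sets yields a $K_4$ quotient; for $C_7$ in particular one must rule out clever non-consecutive identifications, and I would organize this by the number of parts in $P$ and the resulting degree sequence, using that $K_4$ forces four mutually adjacent blocks each receiving contributions from $C_7$'s seven vertices—a parity/counting contradiction. Second, and more seriously, the non-membership arguments hinge on producing and certifying \WL2-equivalent witness pairs whose relevant subgraph counts genuinely differ; establishing \WL2-equivalence is straightforward for strongly regular pairs, but computing $\sub{6K_2}\cdot$, $\sub{C_8}\cdot$, and $\sub{P_8}\cdot$ on the rook/Shrikhande pair and confirming the inequality is an error-prone enumeration that I would want to cross-check, ideally against the matching-polynomial coefficient viewpoint (since $\sub{sK_2}G$ is precisely the $s$-th matching-polynomial coefficient, and the Beezer--Farrell result tells us equality must fail first at $s=6$). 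The propagation lemmas that lift a single witness to an infinite family are routine once the base case is in hand, so I expect essentially all the difficulty to concentrate in these base-case separations and in the exhaustive quotient analysis for the membership boundary.
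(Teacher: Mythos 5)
Your membership direction and your treatment of the matchings coincide with the paper's proof: Corollary~\ref{cor:C2} plus the quotient criterion of Lemma~\ref{lem:htw2} establish $\htw F\le2$ for $sK_2$ ($s\le5$), $C_3,\dots,C_7$ and $P_1,\dots,P_7$ (with exactly the Eulerian-trail obstruction you gesture at for $C_7$), and the rook's/Shrikhande pair together with the disjoint-union identity $\sub{(s+1)K_2}{G+K_2}=\sub{(s+1)K_2}G+\sub{sK_2}G$ pushes non-membership from $6K_2$ to every larger matching, just as in the paper.

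The gap is in the non-membership direction for cycles and paths. First, the finiteness statement you invoke (only finitely many $C_s$ lie in \rclass{k}, hence in \cclass{2}) says only that all but finitely many cycles are excluded; it does not exclude any \emph{particular} $C_s$ with $s\ge17$, so combined with F\"urer's witnesses for $8\le s\le16$ it still leaves the claim ``no other cycle graphs'' open for every $s\ge17$. Second, your proposed propagation --- separate only the smallest out-of-range pattern ($C_8$, $P_8$) and lift it by attaching fixed gadgets ``in the spirit of Lemma~\ref{lem:rules}'' --- fails for connected patterns: a disjoint gadget cannot convert a witness for $P_s$ or $C_s$ into one for $P_{s+1}$ or $C_{s+1}$, and there is no clean one-step recursion analogous to the matching identity. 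The paper instead computes all base cases $8\le s\le16$ individually on the rook's/Shrikhande pair (the table of $\sub{P_s}{\cdot}$ values, and F\"urer's computations for $C_s$), and for $s>16$ uses a specific construction: append a path $P_{s-16}$ and join both of its endpoints (for cycles), or one endpoint (for paths, $s>17$), to every original vertex; this preserves $\eqkkwl2$-equivalence and reduces $\sub{C_s}{G_s}$ (resp.\ $\sub{P_s}{G_s}$) to $\sub{P_{16}}{G}$, which the table shows differs between the two graphs. The case $P_{17}$ needs yet another gadget (a pendant neighbour added to every vertex), giving $\sub{P_{17}}{G_{17}}=2\sub{P_{16}}G+\sub{P_{15}}G$. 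None of this is recoverable from a single separation at $s=8$, so your plan as stated does not close Parts~2 and~3.
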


Theorem \ref{thm:C2} is related to some questions
that have earlier been discussed in the literature.
Beezer and Farrell \cite{BeezerF00} proved that the first five coefficients
of the matching polynomial of a strongly regular graph are determined
by its parameters. In other terms, if $G$ and $H$ are strongly regular
graphs with the same parameters (in fact, even distance-regular graphs
with the same intersection array), then $\sub{sK_2}G=\sub{sK_2}H$ for $s\le5$.
Part 1 of Theorem \ref{thm:C2} implies that this is true in a much more general
situation, namely when $G$ and $H$ are arbitrary $\WL2$-equivalent graphs.
Moreover, this cannot be extended to larger~$s$.

Fürer~\cite{Fuerer17} classified all $C_s$ for $s\leq 16$, except the $C_7$,
with respect to membership in \cclass2. Part 2 of Theorem \ref{thm:C2}
fills this gap and also shows that
the positive result for $C_7$ is optimal.

\begin{figure}[t]
\begin{center}
\begin{tikzpicture}[every node/.style={circle,fill,inner sep=2.0pt}]
\begin{scope}
 \graph[branch down=8mm,grow right=8mm,empty nodes] {
 r00 -- r01 -- r02 -- r03;
 r10 -- r11 -- r12 -- r13;
 r20 -- r21 -- r22 -- r23;
 r30 -- r31 -- r32 -- r33;
 r00 --[bend left=40] { r03};
 r10 --[bend left=40] { r13};
 r20 --[bend right=40] { r23};
 r30 --[bend right=40] { r33};
 r00 --[bend left=30,dashed] { r02}; r01 --[bend left=30,dashed] r03;
 r10 --[bend left=30,dashed] { r12}; r11 --[bend left=30,dashed] r13;
 r20 --[bend right=30,dashed] { r22}; r21 --[bend right=30,dashed] r23;
 r30 --[bend right=30,dashed] { r32}; r31 --[bend right=30,dashed] r33;
 r00 -- r10 -- r20 -- r30;
 r01 -- r11 -- r21 -- r31;
 r02 -- r12 -- r22 -- r32;
 r03 -- r13 -- r23 -- r33;
 r00 --[bend right=40] { r30};
 r01 --[bend right=40] { r31};
 r02 --[bend left=40] { r32};
 r03 --[bend left=40] { r33};
 r00 --[bend right=30,dashed] { r20}; r10 --[bend right=30,dashed] r30;
 r01 --[bend right=30,dashed] { r21}; r11 --[bend right=30,dashed] r31;
 r02 --[bend left=30,dashed] { r22}; r12 --[bend left=30,dashed] r32;
 r03 --[bend left=30,dashed] { r23}; r13 --[bend left=30,dashed] r33;
 };
\end{scope}

\begin{scope}[xshift=4cm]
 \graph[branch down=8mm,grow right=8mm,empty nodes] {
 s00 -- s01 -- s02 -- s03;
 s10 -- s11 -- s12 -- s13;
 s20 -- s21 -- s22 -- s23;
 s30 -- s31 -- s32 -- s33;
 s00 --[bend left=40] { s03};
 s10 --[bend left=40] { s13};
 s20 --[bend right=40] { s23};
 s30 --[bend right=40] { s33};
 s00 -- s10 -- s20 -- s30;
 s01 -- s11 -- s21 -- s31;
 s02 -- s12 -- s22 -- s32;
 s03 -- s13 -- s23 -- s33;
 s00 --[dashed] s11 --[dashed] s22 --[dashed] s33 --[bend right=20,looseness=0.75,dashed] s00;
 s01 --[dashed] s12 --[dashed] s23[dashed] --[dashed] s30 --[dashed] s01;
 s02 --[dashed] s13 --[dashed] s20 --[dashed] s31 --[dashed] s02;
 s03 --[dashed] s10 --[dashed] s21 --[dashed] s32 --[dashed] s03;
 s00 --[bend right=40] { s30};
 s01 --[bend right=40] { s31};
 s02 --[bend left=40] { s32};
 s03 --[bend left=40] { s33};
 };
\end{scope}

\begin{scope}[xshift=11cm,yshift=-1.1cm]
\path (0,0) node[draw=none,fill=none] (a1) {$\begin{array}{|c|r|r|}\hline
n & \sub {P_n}G & \sub {P_n}H \\
\hline
 8 &   275616&   274560\\
 9 &   880128&   877440\\
10 &  2506752&  2512512\\
11 &  6239232&  6283392\\
12 & 13189248& 13293696\\
13 & 22631040& 22754688\\
14 & 29376000& 29457408\\
15 & 25532928& 25560576\\
16 & 11197440& 11115264\\\hline
\end{array}$};
\end{scope}

\end{tikzpicture}
\end{center}\vspace{-10mm}
\caption{The 4x4 rook's graph $G$ and the Shrikhande graph $H$
(some edges are dashed just to ensure readability). The table shows
the counts of $P_n$, $8\le n\le16$, in $G$ and $H$.}
\label{fig:shrikhande-rooks}
\end{figure}

\begin{proof}
\textit{1.}
Lemma \ref{lem:htw2} makes it obvious that $\htw{sK_2}\le2$ exactly for $s\le5$.
This gives the positive part by Corollary \ref{cor:C2}.
It remains to prove that $sK_2\notin\cclass2$ for all $s\ge6$.
For $6K_2$, let $G$ be the $4\times4$-rook's graph and $H$ be the Shrikhande graph;
see Fig.~\ref{fig:shrikhande-rooks}.
Being strongly regular graphs with the same parameters $(16,6,2,2)$,
$G$ and $H$ are \WL2-equivalent. As calculated in \cite{BeezerF00},
$\sub{6K_2}G=96000$ while $\sub{6K_2}H=95872$, which certifies the
non-membership of $6K_2$ in \cclass2. In order to extend this to $sK_2$ for $s>6$,
note that
$$
\sub{(s+1)K_2}{G+K_2}=\sub{(s+1)K_2}G+\sub{sK_2}G.
$$
This equality implies that if $\sub{sK_2}G \ne \sub{sK_2}H$, then
it holds also one of the inequalities
$\sub{(s+1)K_2}G \ne \sub{(s+1)K_2}H$ or
$\sub{(s+1)K_2}{G+K_2} \ne \sub{(s+1)K_2}{H+K_2}$.
Thus, if a pair $G,H$ is a certificate for $sK_2\notin\cclass2$,
then  $(s+1)K_2\notin\cclass2$ is certified by the same pair $G,H$
or by the pair $G+K_2,H+K_2$. In the latter case we need to remark
that $G+K_2\eqkkwl2 H+K_2$ whenever $G\eqkkwl2 H$.

\textit{2 and 3.}
We have $\htw{C_s}\le2$ if $s\le7$. 
We can use Lemma \ref{lem:htw2} to show this.
For $s\le5$ this is obvious because $K_4$ has 6 edges.
This is easy to see also for $s=6$: we cannot get $K_4$ by merging just two vertices,
while merging more than two vertices results in loss of one edge.
Let $s=7$. The argument for $s=6$ shows that $K_4$ cannot be obtained
from $C_7$ if at least one edge is contracted.
The assumption $C_7/P\cong K_4$ would, therefore, mean that
$K_4$ has a closed walk that uses one edge twice and every other edge once. 
Such a walk contains an Eulerian trial in
$K_4$, which is impossible because $K_4$ has all four vertices of degree~3.

Corollary \ref{cor:C2}, therefore, implies that \cclass2 contains all cycle graphs $C_s$
up to $s=7$. It contains also all paths $P_s$ up to $s=7$, as 
the class of graphs $\setdef F{\htw F\le2}$ is closed under taking subgraphs,
which is easily seen from Lemma~\ref{lem:htw2}.


In order to obtain the negative part, we again use the Shrikhande
graph $G$ and the $4\times4$ rook's graph $H$.  For $s\leq 16$ see the
table in Fig.~\ref{fig:shrikhande-rooks} for $P_s$ and \cite{Fuerer17}
for $C_s$.  For $s > 16$ construct the graphs $G_s$ and $H_s$ by
adding a vertex-disjoint path $P_{s-16}$ to $G$ and $H$ respectively
and by connecting both end vertices of this path to all original
vertices of $G$ and $H$. Then
$$
\sub {C_s}{G_s}=\sub {P_{16}}{G}\neq \sub {P_{16}}{H} = \sub {C_s}{H_s},
$$ 
while still $G_s \eqkkwl 2 H_s$.

For paths, we use almost the same construction of $G_s$ and $H_s$, where we connect only
one end vertex of $P_{s-16}$ to the original graph. Then
$$
\sub {P_s}{G_s}=2\,\sub {P_{16}}{G}\neq 2\,\sub {P_{16}}{H} = \sub {P_s}{H_s},
$$ 
and the pair $G_s,H_s$ certifies that $P_s\notin\cclass2$.
This works for all $s > 17$.
If $s = 17$, we construct graphs $H_{17}$ and $G_{17}$
by adding a new neighbor of degree one to each vertex in $G$ and $H$.
Then 
$$\sub {P_{17}}{G_{17}}=2\,\sub {P_{16}}{G}+\sub {P_{15}}{G}$$
and analogously for $H$ and $H_{17}$. It remains to use the table in
Fig.~\ref{fig:shrikhande-rooks} to see that this sum is different for
$G$ and~$H$.  
\end{proof}

Part 2 of Theorem \ref{thm:C2} can be generalized as follows.
Recall that $g(G)$ denotes the girth of a graph~$G$.

\begin{theorem}\label{thm:g}
Suppose that $G\eqkkwl2H$. Then

\begin{enumerate}[\bf 1.]
\item
$g(G)=g(H)$.
\item 
$\sub{C_s}G=\sub{C_s}H$ for each $3\le s\le 2\,g(G)+1$.
\end{enumerate}
\end{theorem}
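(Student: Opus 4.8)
The plan is to reduce both parts to a homomorphism-counting identity and a single extremal lemma on covering walks. By Theorem~\ref{thm:DGR}, $\hhomo{F}{\cdot}$ is $\eqkkwl2$-invariant whenever $\tw F\le2$; in particular this holds for every cycle $C_s$ (treewidth~$2$) and every forest (treewidth~$1$). I will combine this with Lov\'asz's formula, which expresses $\hhomo FG$ as a nonnegative integer combination $\hhomo FG=|\Aut(F)|\,\sub FG+\sum_{F'}a_{F'}\sub{F'}G$, where $F'$ ranges over the \emph{proper} homomorphic images of $F$ and the coefficients $a_{F'}$ depend only on $F$ and $F'$. For $F=C_s$ the leading coefficient is $|\Aut(C_s)|=2s$, so the identity can be solved for $\sub{C_s}G$ once the lower-order subgraph counts are under control.

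The heart of the argument is the following extremal claim, call it~$(\star)$: if $F'$ is a homomorphic image of $C_s$ with $s\le 2g+1$ and $F'$ has girth at least $g$ (equivalently, $F'$ embeds into some graph of girth~$g$), then $F'$ has no $K_4$-minor, so $\tw{F'}\le2$ by Lemma~\ref{lem:htw2}. To prove $(\star)$, suppose $F'$ had a $K_4$-minor; since $K_4$ has maximum degree~$3$, $F'$ would then contain a subdivided $K_4$, say~$S$, as a subgraph. The four branch vertices of $S$ have odd degree~$3$, and since every cycle of $S$ has length at least $g$, a double count over the four triangular faces of the underlying $K_4$ shows that $e(S)\ge 2g$. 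Being a homomorphic image of $C_s$, the graph $F'$ admits a closed walk of length $s$ traversing each of its edges; this walk must enter and leave each branch vertex an even number of times while using all three incident branch-edges, hence visits each branch vertex at least four times. Accounting for these forced repetitions pushes the walk length to at least $2g+2$, contradicting $s\le 2g+1$. This step is the main obstacle, and it is exactly here that the bound $2\,g(G)+1$---both the factor~$2$ and the additive~$1$---is forced.

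For Part~1, note first that the shortest odd closed walk of a graph has length equal to its odd girth, and closed-walk counts $\hhomo{C_s}{\cdot}$ are $\eqkkwl2$-invariant, so $G$ and $H$ have equal odd girth. For the even girth I argue by contradiction. Suppose $g(H)=t<g(G)$ with $t$ chosen minimal, and expand $\hhomo{C_t}G=\hhomo{C_t}H$ by Lov\'asz's formula. Any cyclic homomorphic image of $C_t$ has a cycle of length at most $t$; a cycle shorter than $t$ lies below the girth of both graphs and hence embeds into neither, while length exactly $t$ forces the image to be $C_t$ itself. Thus the only cyclic image contributing to either side is $C_t$, which contributes $2t\sub{C_t}H>0$ to $H$ and nothing to $G$, all remaining images being forests with fewer than $t$ edges. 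An inner induction on the number of vertices shows that these forest contributions agree for $G$ and $H$: each such forest has treewidth at most~$2$, so its homomorphism count is invariant, and every cyclic image arising in its own Lov\'asz expansion again lies below both girths and so vanishes on both sides. This forces $2t\sub{C_t}H=0$, a contradiction; hence $g(H)=g(G)$.

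For Part~2 set $g=g(G)=g(H)$ and prove, by induction on $v(F')$, that $\sub{F'}G=\sub{F'}H$ for every homomorphic image $F'$ of $C_s$ with $3\le s\le 2g+1$. If $F'$ embeds into neither graph, both counts are~$0$. Otherwise $F'$ embeds into a graph of girth~$g$, so $g(F')\ge g$ and $(\star)$ gives $\tw{F'}\le2$; hence $\hhomo{F'}G=\hhomo{F'}H$ by Theorem~\ref{thm:DGR}. Solving the Lov\'asz identity for the leading term, $\sub{F'}G$ is determined by $\hhomo{F'}G$ together with the counts $\sub{F''}G$ of the proper homomorphic images $F''$ of $F'$; each such $F''$ is again a homomorphic image of $C_s$ with strictly fewer vertices, so by the induction hypothesis (or because it embeds into neither graph) $\sub{F''}G=\sub{F''}H$. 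Therefore $\sub{F'}G=\sub{F'}H$, and specializing $F'=C_s$ completes the proof. I emphasize that this induction never appeals to homomorphism-hereditary treewidth: although $C_s$ itself may satisfy $\htw{C_s}>2$ for $s$ near $2g+1$, only the ordinary treewidth bound $(\star)$ is used, and the girth hypothesis is precisely what eliminates the otherwise troublesome $K_4$-based images.
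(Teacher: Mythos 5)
Your proof is correct. For Part~2 you follow essentially the same route as the paper: the same extremal lemma (a homomorphic image of $C_s$ containing a subdivision $S$ of $K_4$ forces $e(S)\ge 2g$ by the double count over the four lifted triangles, and the parity of the covering closed walk at the four degree-$3$ branch vertices forces $s\ge e(S)+2\ge 2g+2$), combined with Theorem~\ref{thm:DGR} and the Lov\'asz inversion; you merely spell out the induction on homomorphic images that the paper leaves implicit, and you correctly note that Part~1 is needed to handle images embedding only into $H$. Part~1, however, is genuinely different. The paper proves girth invariance via the Cai--F\"urer--Immerman logical characterization: a $3$-pebble Ehrenfeucht--Fra\"{\i}ss\'e game for odd girth and an explicit $3$-variable counting formula about distances for even girth. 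You instead derive it purely from homomorphism counts: expanding $\hhomo{C_t}{\cdot}$ for $t=\min(g(G),g(H))$, observing that every proper homomorphic image of $C_t$ is either a forest (whose subgraph count is matched by an inner induction) or contains a cycle shorter than $t$ (hence embeds into neither graph), so the leading term $2t\,\sub{C_t}{\cdot}$ must agree --- a contradiction if the girths differ. Your route is more uniform, since both parts then rest on the same Dell--Grohe--Rattan plus Lov\'asz machinery, and it needs no game-theoretic or logical apparatus; the paper's route is more elementary in its toolkit and yields the slightly finer information that girth is detected by $3$-variable (counting) logic directly, with the odd case even handled without counting quantifiers. One cosmetic point: your claim~$(\star)$ should cite the standard fact that $\tw{F'}\le 2$ iff $F'$ has no $K_4$-minor (used inside the proof of Lemma~\ref{lem:htw2}), rather than Lemma~\ref{lem:htw2} itself, which is a statement about $\htw{\cdot}$.
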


\begin{proof}
\textit{1.}
The proof uses the logical characterization of the $\eqkwl$-equivalence in~\cite{CaiFI92}.
According to this characterization, $G\eqkwl H$ if and only if $G$ and $H$ satisfy the same
sentences in the first-order $(k+1)$-variable logic with counting quantifiers $\E^{\ge t}$,
where an expression $\E^{\ge t}x\,\Phi(x)$ for any integer $t$ means that there are at least $t$ 
vertices $x$ with property $\Phi(x)$.

Assume that $g(G)<g(H)$ and show that then $G\not\eqkkwl2H$. It is enough
to show that $G$ and $H$ are distinguishable in 3-variable logic with
counting quantifiers.

\Case1{$g(G)$ is odd.}
In this case, $G$ and $H$ are distinguishable even in the standard 3-variable logic (with
quantifiers $\E$ and $\A$ only). As it is well known \cite{dcbook}, 
two graphs $G$ and $H$ are distinguishable in first-order $k$-variable logic
if and only if Spoiler has a winning strategy in the \emph{$k$-pebble Ehrenfeucht-Fra{\"\i}ss{\'e} game}
on $G$ and $H$. In the 3-pebble game, the players \emph{Spoiler} and \emph{Duplicator}
have equal sets of $3$ pebbles $\{a,b,c\}$.
In each round, Spoiler takes a pebble and puts it on a vertex in $G$ or in $H$;
then Duplicator has to put her copy of this pebble on a vertex
of the other graph.
Duplicator's objective is to ensure that the pebbling determines a partial
isomorphism between $G$ and $H$ after each round; when she fails, she immediately loses.

Spoiler wins the game as follows.  Let $C$ be a cycle of length $g(G)$
in $G$. In the first three rounds, Spoiler pebbles a 3-path along $C$
by his pebbles $a$, $b$, and $c$ in this order.  Then, keeping the
pebble $a$ fixed, Spoiler moves the pebbles $b$ and $c$, in turns,
around $C$ so that the two pebbled vertices are always adjacent. In
the end, there arises a pebbled $acb$-path, which is impossible
in~$H$.

\Case2{$g(G)$ is even.} Let $g(G)=2m$.
Consider the following statement in the 3-variable logic with counting quantifiers:
$$
\E x\E y\of{
\mathit{dist}(x,y)=m
\und
\E^{\ge2}z(z\sim y \und \mathit{dist}(z,x)=m-1)
},
$$
where $\mathit{dist}(x,y)=m$ is a 3-variable formula expressing the fact
that the distance between vertices $x$ and $y$ is equal to $m$.
This statement is true on $G$ and false on~$H$.

\smallskip

\textit{2.}
The proof of this part is based on the result by Dell, Grohe, and Rattan stated
above as Theorem \ref{thm:DGR} and Lov{\'{a}}sz' result \cite[Section 5.2.3]{hombook}
on the expressibility of $\sub FG$ through
the homomorphism counts $\hhomo{F'}G$ for homomorphic images $F'$ of $F$.
By these results, it suffices to prove that, if $s\le 2\,g(G)+1$ and $h$ is a homorphism from $C_s$
to $G$, then the subgraph $h(C_s)$ of $G$ has treewidth at most 2.
Assume, to the contrary, that $h(C_s)$ has treewidth more than 2 or,
equivalently, $h(C_s)$ contains $K_4$ as a minor. Since $K_4$ has maximum degree 3,
$h(C_s)$ contains $K_4$ even as a topological minor \cite[Section 1.7]{Diestel}.
Let $M$ be a subgraph of $h(C_s)$ that is a subdivision of $K_4$.
Obviously, $s\ge e(h(C_s))\ge e(M)$. Moreover, $s\ge e(M)+2$.
Indeed, the homomorphism $h$ determines a walk of length $s$ via
all edges of the graph $h(C_s)$. By cloning the edges traversed more than once,
$h(C_s)$ can be seen as an Eulerian multigraph with $s$ edges.
Since $M$ has four vertices of degree 3, any extension of $M$ to
such a multigraph requires adding at least 2 edges.
Thus, $s\ge e(M)+2$. Note that $M$ is formed by six paths corresponding 
to the edges of $K_4$. Moreover, $M$ has four cycles, each cycle consists of three paths, 
and each of the six paths appears in two of the cycles. It follows that
$2\,e(M)\ge4\,g(G)$. Therefore, $s\ge2\,g(G)+2$, yielding a contradic\-tion.
\end{proof}

Moreover, Part 2 of Theorem \ref{thm:C2} admits a qualitative strengthening:
It turns out that even \rclass2 contains only finitely many cycle graphs $C_s$.
In fact, a much stronger fact is true.

\begin{theorem}\label{thm:Rk-cycles}
  For each $k$, the class \rclass k contains only finitely many cycle graphs~$C_s$.
\end{theorem}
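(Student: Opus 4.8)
The plan is to prove the contrapositive: for every sufficiently large $s$ the property ``containing $C_s$ as a subgraph'' fails to be $\eqkwl$-invariant, by exhibiting, for each such $s$, a pair of $\eqkwl$-equivalent graphs $G_s$ and $H_s$ with $C_s\subseteq G_s$ but $C_s\not\subseteq H_s$. I would obtain these witnesses from the Cai--F\"urer--Immerman (CFI) construction \cite{CaiFI92}, following Dawar's treatment of Hamiltonicity \cite{Dawar98}. There are two features we must control independently: (i) the $\eqkwl$-equivalence of the pair, which has to hold for the \emph{fixed} dimension $k$ while the graphs grow, and (ii) the cycle structure, which must differ between $G_s$ and $H_s$ at the prescribed length $s$.

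For (i) I would fix, once and for all, a ``thick'' base graph of treewidth exceeding $k$, independent of how large $s$ is --- for instance a cyclic band $C_m\,\Box\,K_t$ with $t=t(k)$ chosen so that its treewidth is $>k$ for all $m$ --- and compare the untwisted CFI graph $G_s$ with the twisted CFI graph $H_s$. That $G_s\eqkwl H_s$ is the standard CFI phenomenon: Duplicator wins the bijective $(k+1)$-pebble game on $(G_s,H_s)$ because $k+1$ cops cannot catch a robber on a base graph of treewidth $>k$, so the single global twist can always be pushed out of the pebbled region. The essential point is that the width parameter $t$ guaranteeing indistinguishability depends only on $k$, not on the cyclic length $m$, so I may let $m\to\infty$ to make the graphs arbitrarily large while the pair stays $\eqkwl$-equivalent.

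For (ii) I would exploit the cyclic direction of the band: the untwisted graph $G_s$ admits a long cycle winding once around, whereas the twist obstructs any single cycle of that length, exactly as a M\"obius identification destroys a cylinder's boundary cycle --- the twist forces such a closed walk to be assembled only from an even number of gadget strands, which shifts the set of realizable winding-cycle lengths. To hit \emph{every} large $s$ rather than a single arithmetic progression, I would insert a tunable path into one gadget (equivalently, subdivide a bounded number of base edges away from the high-treewidth core), shifting the winding length by any small prescribed amount. Subdivision is harmless here because it preserves treewidth (for treewidth $\ge2$) and hence the pebble-game bound underlying (i).

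I expect the main obstacle to be precisely this simultaneous control. One must verify that the length-tuning realizes \emph{all} sufficiently large $s$ while (a) keeping the base treewidth above $k$, so that $G_s\eqkwl H_s$ survives every modification, and (b) preserving the dichotomy ``a cycle of length $s$ is present in $G_s$ and absent in $H_s$.'' The delicate combinatorial step is the accounting of which cycle lengths occur in the twisted versus the untwisted gadget graph --- in particular ruling out that $H_s$ accidentally acquires an $s$-cycle through a different winding, through a backtracking walk, or entirely inside the gadget interiors. Once the base family and its width $t(k)$ are pinned down, the remaining verifications are routine.
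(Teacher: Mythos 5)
There is a genuine gap, and it sits exactly where you flag it yourself: the claim that the twisted CFI graph over $C_m\,\Box\,K_t$ lacks a cycle of length $s$ while the untwisted one has it. Nothing in the standard CFI analysis supports this. The twisted and untwisted CFI graphs over a connected base are locally isomorphic (they become isomorphic after deleting any single gadget), have the same degree sequence, and the twist is a global parity condition on edge subsets, not a geometric identification; the M\"obius-band intuition does not transfer. A cycle in a CFI graph projects to a closed walk in the base, and deciding which walk lengths lift to genuine cycles in the twisted versus untwisted version---while also excluding $s$-cycles that live inside gadget interiors, backtrack, or wind multiple times---is precisely the kind of direct combinatorial control that is known to be hard for CFI gadgets. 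This is not a ``routine verification'' once $t(k)$ is fixed; it is the entire content of the theorem, and it is the reason Dawar's treatment of Hamiltonicity in \cite{Dawar98} does \emph{not} attempt a direct CFI construction but instead works through logical reductions. Your part (i), the $\eqkwl$-equivalence of the CFI pair over a base of treewidth exceeding $k$, is fine; part (ii) is unproven and quite possibly false for the construction you describe.

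The paper's proof avoids this obstacle entirely. It first composes known first-order reductions, $\probl{GI}\to\probl{Satisfiability}\to\probl{Hamiltonian Cycle}\to\probl{Long Path}$ (the last step being a new reduction that replaces ``delete a vertex'' by a first-order definable cloning-plus-path gadget), to conclude that for every $k$ there is a $t$ with $P_t\notin\rclass k$; the non-invariance of $\probl{GI}$ is the only place CFI is used, and all cycle/path structure is handled by the reductions rather than inside the CFI gadgets. It then converts a path witness into a cycle witness: given $G\eqkwl H$ with $P_t\subseteq G$, $P_t\not\subseteq H$, attach a disjoint path on $s-t$ vertices whose two endpoints are joined to all old vertices; an $s$-cycle in the resulting graph forces a $P_t$ in the original, so $C_s\notin\rclass k$ for large $s$. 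If you want to salvage your plan, the realistic route is to replace your step (ii) by a reduction of this kind rather than by a direct cycle-spectrum analysis of twisted CFI graphs.
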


\noindent
The proof of Theorem \ref{thm:Rk-cycles} follows a powerful approach suggested by Dawar in~\cite{Dawar98}
to prove that the graph property of containing a Hamiltonian cycle
is not $\eqkwl$-invariant for any $k$.
This fact alone immediately implies that, whatever $k$ is, \rclass k cannot contain
all cycle graphs~$C_s$. An additional effort is needed to show that no \rclass k
can contain infinitely many~$C_s$. Specifically,
Theorem \ref{thm:Rk-cycles} is a direct consequence of the following two facts.

\begin{lemma}
\mbox{}

\begin{enumerate}[\bf 1.]
\item 
No class \rclass k contains all path graphs, that is,
for very $k$ there is $t$ such that $P_t\notin\rclass k$.
\item 
If $P_t\notin\rclass k$, then $C_s\notin\rclass k$ for all $s>t$.
\end{enumerate}
\end{lemma}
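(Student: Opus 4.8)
The plan is to handle the two parts separately: Part 1 produces a single non‑member path length, and Part 2 amplifies it to cofinitely many non‑member cycle lengths.

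\emph{Part 1.} First note that the statement is equivalent to the assertion that the \emph{length of a longest path} is not an $\eqkwl$-invariant parameter: if every $P_t$ lay in $\rclass k$, then $\eqkwl$-equivalent graphs would contain paths of exactly the same lengths, and conversely. So it suffices, for each $k$, to exhibit $G\eqkwl H$ with $G$ having a longer longest path than $H$; if these lengths are $n>L$, then any $t$ with $L<t\le n$ gives $G\supseteq P_t$ and $H\not\supseteq P_t$, i.e.\ $P_t\notin\rclass k$. Here I would follow Dawar's CFI construction over a base graph of treewidth exceeding $k$: the untwisted graph $G$ admits a spanning (Hamiltonian) path, while a global parity obstruction in the twisted graph $H$ forces every path to be short, yet the pebble/counting game cannot localise the twist, so $G\eqkwl H$. \emph{The main obstacle here} is that Dawar's argument is tuned to block Hamiltonian \emph{cycles}; one must arrange the gadgets so that the twist blocks all sufficiently long \emph{paths}, ideally leaving a gap of at least two between the longest‑path lengths of $G$ and $H$, which makes both parities of non‑member length available for Part 2.

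\emph{Part 2.} The clean core is a single coning step. For a graph $Z$ let $Z^{+}$ be $Z$ with one universal vertex $v$ added. This operation preserves $\eqkwl$-equivalence, and crucially $Z^{+}\supseteq C_s$ if and only if $Z\supseteq P_{s-1}$: a cycle through $v$ consists of $v$ together with a path on $s-1$ vertices of $Z$, while a cycle avoiding $v$ already yields $Z\supseteq C_s\supseteq P_{s-1}$; since $v$ has degree $2$ on any cycle, no spurious cycles arise. Hence $P_{s-1}\notin\rclass k$ implies $C_s\notin\rclass k$. To supply the needed path lengths I would use a second canonical operation: let $\widehat Z$ be $Z$ with a pendant leaf attached to every vertex. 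This also preserves $\eqkwl$-equivalence and satisfies $Z\supseteq P_r\iff\widehat Z\supseteq P_{r+2}$, because a path in $\widehat Z$ can use leaves only at its two ends. Thus $P_t\notin\rclass k$ yields $P_t,P_{t+2},P_{t+4},\dots\notin\rclass k$, and coning then gives $C_s\notin\rclass k$ for every $s>t$ with $s-t$ \emph{odd}.

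\emph{The hard part is the even offsets.} Building a $C_s$ with $s-t$ even requires a closing gadget that contributes an even number of vertices to the cycle, which forces \emph{two} universal attachment points; but two universal vertices allow \emph{any two disjoint} paths of $Z$ to close up into a cycle. Concretely, attaching a path on $s-t$ vertices with both ends made universal produces, besides the intended $C_s$ coming from a genuine $P_t$, spurious $s$-cycles formed from two disjoint $Z$-paths of total size $s-2$. Such spurious cycles require a path of length at least $t$ precisely when $s>2t$, so in that range the gadget is safe and delivers $C_s\notin\rclass k$; for $t<s\le 2t$ with $s-t$ even they are a real threat, and the parity behaviour above shows they cannot be circumvented by coning from a single‑parity seed. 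I would resolve this either by invoking the second‑parity seed $P_{t+1}\notin\rclass k$ supplied by the gap‑$\ge 2$ construction of Part 1 (after which upward‑by‑two and coning cover all $s>t$), or by taking a witness whose negative graph has only short, ``single‑threaded'' longest paths. In any case the safe regime already gives $C_s\notin\rclass k$ for all $s>2t$, so $\rclass k$ contains only finitely many cycle graphs, which is exactly what Theorem~\ref{thm:Rk-cycles} needs.
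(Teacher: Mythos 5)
The genuine gap is in Part 1: you never actually produce, for a given $k$, a pair $G\eqkwl H$ with different longest-path lengths. Your reduction of Part 1 to the non-invariance of the longest-path parameter is fine, but the substance is deferred to ``follow Dawar's CFI construction \dots{} so that the twist blocks all sufficiently long paths,'' and you yourself flag that Dawar's argument is tuned to Hamiltonian \emph{cycles}. That flagged obstacle is precisely the missing proof: there is no reason offered why the twisted CFI companion should lack long paths, and re-engineering the gadgets to achieve this (let alone with your additional ``gap of at least two'' between the longest-path lengths, which your Part 2 also leans on) is not routine. The paper avoids this entirely by staying at the level of \emph{first-order reductions}: non-$\eqkwl$-invariance of \probl{GI} (from Cai--F\"urer--Immerman) is transported along known FO reductions \probl{GI} $\to$ \probl{Satisfiability} $\to$ \probl{Hamiltonian Cycle}, and then along one new, explicitly constructed FO reduction from \probl{Hamiltonian Cycle} to \probl{Long Path} (clone every vertex, join each vertex to its clone by a path through four new vertices, and hang two pendant vertices on that path; $G$ has a Hamiltonian cycle iff the $8n$-vertex graph $G'$ has a path of length $6n+2$). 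That explicit reduction is the ingredient your sketch lacks, and without it Part 1 is unproved.

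Your Part 2, by contrast, is in good shape and in one respect more careful than the paper. The single-apex cone giving $Z^{+}\supseteq C_s\iff Z\supseteq P_{s-1}$ and the leaf-doubling giving $Z\supseteq P_r\iff\widehat Z\supseteq P_{r+2}$ (the latter is the same operation as in Lemma~\ref{lem:rules}) are both correct and yield $C_s\notin\rclass k$ for all odd offsets $s-t$. The paper instead uses, for every $s>t$, the two-apex gadget (attach $P_{s-t}$ and join both of its ends to all of $H$) and asserts in one line that $H_s$ contains no $C_s$; your ``spurious cycle'' objection --- a cycle through both apexes that decomposes into two disjoint $H$-paths totalling $s-2$ vertices, neither containing $P_t$ when $s\le 2t$ --- is exactly the case that one-liner glosses over, and your observation that the gadget is safe for $s>2t$ is the correct repair. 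Since $s>2t$ already gives that $\rclass k$ contains only finitely many cycles, both your argument and the paper's deliver Theorem~\ref{thm:Rk-cycles}; the remaining even offsets in $(t,2t]$ are left open by you (conditionally on the unproven two-parity seed) and, arguably, by the paper as well.
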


\begin{proof}
\textit{1.}
We begin with description of the main idea of Dawar's method.
The Graph Isomorphism problem (\probl{GI}) is the recognition problem
for the set of all pairs of isomorphic graphs.
We can encode \probl{GI} as a class of relational structures
over vocabulary $\langle V_1,V_2,E\rangle$, where $V_1$ and $V_2$ are unary relations
describing two vertex sets and $E$ is a binary adjacency relation (over $V_1\cup V_2$).
Then \probl{GI} consists of those structures where $V_1$ and $V_2$ are disjoint
and the graphs $(V_1,E)$ and $(V_2,E)$ are isomorphic. The starting
point of the method is observing that \probl{GI} is not
$\eqkwl$-invariant for any $k$. This follows from the seminal work
by Cai, Fürer, and Immerman \cite{CaiFI92}, who constructed, for each $k$,
a pair of non-isomorphic graphs $G$ and $H$ such that $G\eqkwl H$: Indeed,
$G+G\in\text{\probl{GI}}$ and $G+H\notin\text{\probl{GI}}$,
and $G+G\eqkwl G+H$.

Suppose now that we have two classes of relational structures $\mathcal{C}_1$
and $\mathcal{C}_2$ and know that $\mathcal{C}_1$ is not $\eqkwl$-invariant for any $k$.
We can derive the same fact for $\mathcal{C}_2$ by showing a \emph{first-order
reduction} from $\mathcal{C}_1$ to $\mathcal{C}_2$, that is, a function
$f$ such that $f(A)\in\mathcal{C}_2$ iff $A\in\mathcal{C}_1$ for any structure $A$
in the vocabulary of $\mathcal{C}_1$, where
$f(A)$ is a structure in the vocabulary of $\mathcal{C}_2$ whose relations
are relations in the universe of $A$ and are definable by first-order
formulas over the vocabulary of $\mathcal{C}_1$; see \cite{dcbook} for details.

There is a first-order reduction from \probl{GI} to the \probl{Satisfiability}
problem (where for CNFs we assume a standard encoding as relational structures);
see, e.g.,~\cite{Toran13}.
A first-order reduction from \probl{Satisfiability} to \probl{Hamiltonian Cycle}
is described by Dahlhaus~\cite{Dahlhaus84}. 
We now describe a first-order reduction from \probl{Hamiltonian Cycle} to 
the problem \probl{Long Path}, which we define as the problem of recognizing
whether a given $N$-vertex graph contains a path of length at least $\frac34\,N+2$.

To this end, we modify a standard reduction from \probl{Hamiltonian Cycle} to \probl{Hamiltonian Path}
(which itself is not first-order as it requires selection of a single vertex from the
vertex set of a given graph). Specifically, suppose we are given a graph $G$
with $n$ vertices. We expand $G$ to a graph $G'$ with $8n$ vertices as follows.
For each vertex $v$ of $G$, we create its clone $v'$ with the same adjacency
to the other vertices of $G$ (and their clones). Next, we connect $v$ and $v'$
by a path $v v_1 v_2 v_3 v_4 v'$ via four new vertices $v_1,v_2,v_3,v_4$. 
Finally, for each $v\in V(G)$, we add a new neighbor $u$ to $v_2$ and
 a new neighbor $u'$ to $v_3$. In the resulting graph $G'$, $u$ and $u'$ have degree 1. 
It remains to notice that $G$ has a Hamiltonian cycle if and only if
$G'$ has a path of length $6n+2$.
The vertex set and the adjacency relation of $G'$ can easily be defined
by first order formulas in terms of the vertex set and the adjacency relation of~$G$.

Composing the aforementioned reductions, we obtain a first-order
reduction from \probl{GI} to \probl{Long Path} and conclude
that the last problem is not $\eqkwl$-invariant for any $k$.
It remains to note that the existence of a fixed $k$ for which
$\rclass k$ contains all path graphs, would imply the $\eqkwl$-invariance
of \probl{Long Path} for this~$k$.

\textit{2.}
Consider a pair of graphs $G,H$ certifying that $P_t\notin\rclass k$, that is,
$G\eqkwl H$ and $G$ contains $P_t$ while $H$ does not.
Without loss of generality, we can suppose that $G$ and $H$ have no isolated vertices.
Like in the proof of Part 2 of Theorem \ref{thm:C2}, we construct the graph $G_s$ by adding
a vertex-disjoint path $P_{s-t}$ to $G$ connecting both 
end vertices of this path to all vertices of $G$. The graph $H_s$ is obtained similarly from $H$.
Note that $G_s\eqkwl H_s$ and that $G_s$ contains $C_s$ while $H_s$ does not.
\end{proof}

It is easy to see that $K_3\in\rclass2$ (this is also a formal consequence of
Part 2 of Theorem \ref{thm:C2}).  Using the pair $G,H$
consisting of the $4\times4$-rook's graph and the Shrikhande graph,
Fürer~\cite{Fuerer17} proved that the complete graph $K_4$ is not in
\rclass2.  By padding $G$ and $H$ with new $s-4$ universal vertices,
we see that \rclass2 contains $K_s$ if and only if $s\le3$. Fürer's
result on the non-membership of $K_4$ in \rclass2 admits the following
generalization.

\begin{theorem}\label{thm:one4clique}
No graph containing a unique 4-clique can be in $\rclass 2$.
\end{theorem}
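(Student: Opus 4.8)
The plan is to reduce the statement about an arbitrary graph $F$ with a unique $4$-clique to Fürer's result that $K_4\notin\rclass2$, by reusing the Shrikhande/rook's-graph pair as a building block. Suppose $F$ contains exactly one $4$-clique, sitting on vertices $\{a,b,c,d\}$, and let $R$ be the rest of $F$, i.e.\ the part attached to this clique. The idea is to ``plant'' copies of $F$ into one of the two \WL2-equivalent graphs $G,H$ in such a way that the only possible location for the clique $K_4$ is a designated gadget, and then to arrange that $G$ (built from the rook's graph) contains such a gadget while $H$ (built from Shrikhande) does not—exploiting the fact established by Fürer that these two graphs differ precisely in whether a $K_4$ is present.

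First I would make precise why the uniqueness of the $4$-clique is the crucial hypothesis: it forces any embedding of $F$ into a host graph to map $\{a,b,c,d\}$ onto an actual $4$-clique of the host. So if I construct host graphs $G'$ and $H'$ whose only triangles/cliques of size $4$ come from a controlled gadget, then $F\subseteq G'$ iff that gadget realizes a $K_4$, and similarly for $H'$. This is where I expect to use the vertex-colored version of \WL2 that the excerpt flags as needed ``only once, namely in the proof of Theorem \ref{thm:one4clique}'': coloring lets me attach the remainder $R$ of $F$ rigidly to the four clique-vertices so that the embedding of $F$ is essentially unique up to the clique, and the equivalence $G'\eqkkwl2 H'$ is preserved because color-preserving gadgets added identically to two \WL2-equivalent colored graphs keep them \WL2-equivalent (the colored analogue of Lemma \ref{lem:AABB} and the padding arguments used in Theorem \ref{thm:C2}).

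The concrete construction I would carry out: start from the rook's graph $G$ and the Shrikhande graph $H$, which are \WL2-equivalent. Fürer's argument shows $G$ contains a $K_4$ (four pairwise-adjacent vertices) while $H$ does not; more usefully, one can identify in $G$ a specific colored configuration of four vertices that are pairwise adjacent and that has no counterpart in $H$. I would then build $G'$ from $G$ (and $H'$ from $H$) by attaching, to every candidate $4$-set in a color-consistent way, a gadget encoding the ``attachment'' part $R$ of $F$, using fresh colors so that the only subgraph isomorphic to $F$ must place its unique clique on one of the honest $4$-cliques. Since $G$ has such a clique and $H$ does not, $G'$ will contain $F$ and $H'$ will not, while $G'\eqkkwl2 H'$ by the colored padding argument. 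This certifies $F\notin\rclass2$.

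The hard part will be engineering the gadget so that two competing constraints are simultaneously met: on one hand the attachment encoding $R$ must be rigid enough that the unique clique of $F$ is forced onto a genuine $K_4$ of the host (not accidentally realizable on a non-clique using $R$'s own structure), and on the other hand the gadgets attached to $G$ and to $H$ must be literally identical in their color-refinement behavior so that \WL2-equivalence is inherited from $G\eqkkwl2 H$. Balancing rigidity of the subgraph embedding against preservation of \WL2-indistinguishability is the delicate point; the uniqueness hypothesis on the $4$-clique is exactly what makes the embedding-side of this balance tractable, since it eliminates all but one structural location for the obstruction, letting me concentrate the entire difference between $G'$ and $H'$ into the presence or absence of a single $K_4$.
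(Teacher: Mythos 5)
Your high-level plan is the same as the paper's: start from the \WL2-equivalent pair rook's graph/Shrikhande graph, pass to a vertex-colored version in which one graph has a ``rainbow'' $4$-clique and the other has no $4$-clique at all, attach the remainder $F'=F-\{\text{clique}\}$ of $F$ by colors, and use uniqueness of the $4$-clique to rule out stray copies of $F$. However, the proposal stops exactly where the work begins, and two of the steps as you phrase them would not go through.

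First, you never say how to obtain the colored pair. You cannot simply ``identify in $G$ a specific colored configuration of four pairwise adjacent vertices with no counterpart in $H$'': any coloring of the rook's graph that singles out a $4$-clique is not matched by any coloring of the Shrikhande graph in a way that preserves \WL2-equivalence, since equivalence of colored graphs forces the color classes to correspond. The paper's solution is a blow-up $R\mapsto R^*$ (after Fellows et al.): replace each vertex by four clones colored $1,2,3,4$, with clones $v_i,u_j$ adjacent iff $uv\in E(R)$ and $i\ne j$. One then has to \emph{prove} both that $R^*$ has a $4$-clique iff $R$ does and that $R\eqkkwl2 S$ implies $R^*\eqkkwl2 S^*$; the latter requires a simulation argument in the $3$-pebble bijection game and is not a formal consequence of anything stated earlier. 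This lemma is the heart of the matter and is absent from your proposal.

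Second, ``attaching, to every candidate $4$-set, a gadget'' is the wrong move: the two host graphs have different families of $4$-sets (one has $4$-cliques, the other does not), so per-$4$-set attachment destroys \WL2-equivalence immediately. The correct construction attaches a \emph{single} copy of $F'$, joining a gadget vertex $v$ to \emph{all} host vertices of color $i$ whenever $iv\in E(F)$. With this construction, uniqueness of the $4$-clique in $F$ is used not to constrain where an embedding of $F$ lands, but to show the Shrikhande-side host $H'$ contains \emph{no $K_4$ at all}: a $4$-clique using a mixed edge would project (via the colors, read as the names of the clique vertices of $F$) to a second $4$-clique in $F$. Finally, your appeal to ``the colored analogue of Lemma~\ref{lem:AABB}'' does not apply, because $G'$ is not a disjoint union of $G$ and the gadget --- the gadget is joined to entire color classes --- so $G'\eqkkwl2 H'$ needs its own bijection-game argument (Duplicator plays her strategy for $G,H$ on the host part and the identity on the gadget part, and one must verify partial isomorphism across the host--gadget edges). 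You correctly identified the tension between rigidity of the gadget and preservation of equivalence, but resolving that tension is the entire content of the proof, and it is not resolved here.
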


Given a graph $R$, we define a corresponding vertex-colored graph
$R^*$, whose vertices are colored using four colors $1,2,3,4$, as
follows:
\begin{itemize}
\item Each vertex $v$ of $R$ is replaced by four clones
  $v_1,v_2,v_3,v_4$, where $v_i$ has color $i$;
\item For vertices $v$ and $u$ of $R$, their clones $v_i$ and $u_j$
  are adjacent in $R^*$ if and only if $u$ and $v$ are adjacent in $R$
  and $i\ne j$.
\end{itemize}

This transformation, which we require for the proof of
Theorem~\ref{thm:one4clique}, is based on a reduction from the
parametrized $k$-CLIQUE problem to its ``colorful version'' by Fellows
et al.~\cite[Lemma~1]{FellowsHRV09}.

\begin{lemma}\label{lem:fellows}\hfill
  \begin{enumerate}[\bf 1.]
  \item 
$R$ contains a 4-clique if and only if $R^*$ contains a 4-clique.
Moreover, the vertices of any 4-clique in $R^*$ have pairwise different colors.
\item 
If $R \eqkkwl 2 S$, then $R^* \eqkkwl 2 S^*$.
  \end{enumerate}
\end{lemma}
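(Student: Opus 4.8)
Part 1 is a direct combinatorial check that I would dispatch first. For the forward implication, given a $4$-clique on $\{a,b,c,d\}$ in $R$, the clones $a_1,b_2,c_3,d_4$ form a $4$-clique in $R^*$, since any two of them carry distinct colors and project to adjacent vertices of $R$. For the converse and the ``moreover'' clause, I would record two elementary facts: two clones $v_i,u_i$ sharing a color are never adjacent (adjacency requires $i\neq j$), and two clones $v_i,v_j$ of the same vertex are never adjacent (adjacency requires $v\sim u$, impossible for $v=u$ since $R$ has no loops). Hence the four vertices of any $4$-clique of $R^*$ must carry four distinct colors and be clones of four distinct vertices $a,b,c,d$ of $R$; as each pair of clique vertices is adjacent, $\{a,b,c,d\}$ is a $4$-clique of $R$.

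For Part~2 the plan is to show that the $2$-WL coloring of $R^*$ factors, through a \emph{graph-independent} function, over the $2$-WL coloring of $R$ together with the color pair of the two clones. Precisely, I would prove by induction on the round number $r$ that there is a universal function $\Psi_r$ (the same for every input graph) with
$$
\alg 2 r {R^*}{(v,i),(u,j)} = \Psi_r\big(i,\,j,\,\alg 2 r R{v,u}\big)
$$
for all $v,u\in V(R)$ and all $i,j\in\{1,2,3,4\}$. The base case $r=0$ is immediate: the initial color of $((v,i),(u,j))$ records the vertex colors $i,j$, the equality type (a loop exactly when $v=u$ and $i=j$), and the adjacency type, which is ``edge'' precisely when $v\sim u$ in $R$ \emph{and} $i\neq j$; all of this is determined by $i,j$ and the initial color of $(v,u)$ in $R$ (which already separates loop, edge, and nonedge).

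For the inductive step I would expand the refinement rule~\refeq{refine-2} for $R^*$, whose defining multiset ranges over all clones $(w,\ell)$ with $w\in V(R)$, $\ell\in\{1,2,3,4\}$. Substituting the inductive hypothesis rewrites the generic term as $\Psi_r(i,\ell,\alg 2 r R{v,w})\mid\Psi_r(\ell,j,\alg 2 r R{w,u})$, and I would then split the multiset according to the four values of $\ell$. For each fixed $\ell$ the resulting sub-multiset depends only on $(i,j,\ell)$ and on the multiset $\mset{(\alg 2 r R{v,w},\alg 2 r R{w,u})}_{w\in V(R)}$; but this last multiset is exactly the refinement ingredient that, together with $\alg 2 r R{v,u}$, determines $\alg 2 {r+1} R{v,u}$. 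Thus the whole refined color of $((v,i),(u,j))$ in $R^*$ is a universal function of $i,j$ and $\alg 2 {r+1} R{v,u}$, which defines $\Psi_{r+1}$. The crucial point is that the clone-color index $\ell$ only reindexes the multiset and never couples to the structure of $R$, so no graph-specific information leaks into $\Psi_{r+1}$.

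Finally I would transfer the hypothesis $R\eqkkwl2 S$ by counting color classes. Because $\Psi_r$ is universal, for every round $r$ and every color $\sigma$,
$$
\#\{\,(x,y)\in V(R^*)^2 : \alg 2 r {R^*}{x,y}=\sigma\,\}=\sum_{i,j}\ \sum_{\tau:\,\Psi_r(i,j,\tau)=\sigma}\ \#\{\,(v,u)\in V(R)^2 : \alg 2 r R{v,u}=\tau\,\},
$$
and the identical formula holds for $S^*$ with the same $\Psi_r$. Since $R\eqkkwl2 S$ means that, for every $r$ and every $\tau$, the color classes of $R$ and $S$ have equal sizes, the color classes of $R^*$ and $S^*$ have equal sizes too, giving $R^*\eqkkwl2 S^*$. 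I expect the inductive step --- specifically isolating the clone-color index $\ell$ so as to certify that $\Psi_{r+1}$ stays graph-independent --- to be the only real technical point, while Part~1 and the concluding count are routine.
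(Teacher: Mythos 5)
Your Part 1 matches what the paper leaves as ``easy'': the two observations that same-colored clones are never adjacent and that clones of a common vertex are never adjacent are exactly what is needed, and your argument is complete. For Part 2 you take a genuinely different route from the paper. The paper invokes Hella's characterization of $\eqkkwl2$ via the 3-pebble bijection game: Duplicator wins on $R^*$ and $S^*$ by projecting Spoiler's moves through the clone map $\lambda$, consulting her winning strategy on $R$ and $S$, and lifting each bijection $f$ to the color-preserving clone bijection $f^*$. You instead prove by induction on the round number that the \WL2 color of a pair of clones is a graph-independent function $\Psi_r$ of the two clone colors and the \WL2 color of the underlying pair, and then transfer palette equality by counting. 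Your induction is sound: the base case is determined by $i,j$ together with the loop/edge/nonedge type of $(v,u)$, and in the refinement step the clone index $\ell$ of the intermediate vertex merely partitions the multiset into four blocks, each determined by $(i,j,\ell)$ and the multiset $\mset{(\alg 2rR{v,w},\alg 2rR{w,u})}_{w\in V(R)}$, which is recoverable from $\alg 2{r+1}R{v,u}$ under the paper's ``appropriate encoding'' convention; since $\alg 2rR{v,u}$ is also a prefix of $\alg 2{r+1}R{v,u}$, the function $\Psi_{r+1}$ is well defined and universal. Your approach is more elementary --- it needs no pebble-game machinery --- and it visibly generalizes to \kWL for every $k$. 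What the game argument buys the paper is reuse and flexibility: essentially the same Duplicator simulation is deployed again in the proof of Theorem~\ref{thm:one4clique} for the graphs $G'$ and $H'$, whose construction is not a uniform product over $V(R)$ and where a direct color-factoring induction would be considerably more delicate.
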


\begin{proof}
  Part 1 is easy. To prove Part 2, we use the fact \cite{Hella96} that $G \eqkkwl 2 H$ if and only if
Duplicator has a winning strategy in the 3-pebble \emph{Hella's bijection game} on $G$ and $H$. 

Like the Ehrenfeucht-Fra{\"\i}ss{\'e} game that we used in the proof of Theorem
\ref{thm:g}, the bijection game is played by two players, Spoiler and
Duplicator, to whom we will refer as \emph{he} and \emph{she}
respectively. Let $p_1,p_2,p_3$ be the three distinct pebbles. There
are two copies of each pebble $p_i$. In one round of the game, Spoiler
puts one of the pebbles $p_i$ on a vertex in $G$ and its copy on a
vertex in $H$. When $p_i$ is on the board, $x_i$ denotes the vertex
pebbled by $p_i$ in $G$, and $y_i$ denotes the vertex pebbled by the
copy of $p_i$ in $H$. The pebbles can change their positions during
the game and, thus, the values of $x_i$ and $y_i$ can be different in
different rounds.  More specifically, a round is played as follows:
\begin{itemize}
\item 
Spoiler chooses $i\in\{1,2,3\}$;
\item 
Duplicator responds with a bijection $f\function{V(G)}{V(H)}$ having the property that
$f(x_j)=y_j$ for all $j\ne i$ such that $p_j$ is on the board;
\item Spoiler chooses a vertex $x$ in $G$ and puts $p_i$ on $x$ and
  its copy on $f(x)$ (this move reassigns $x_i$ to vertex $x$ and
  $y_i$ to vertex $f(x)$).
\end{itemize}

Duplicator's objective is to keep the map $x_i\mapsto y_i$ a partial
isomorphism during the play. Spoiler wins if the Duplicator fails. If
$G$ and $H$ are vertex-colored graphs, then the Duplicator has to keep
the map $x_i\mapsto y_i$ a color-preserving partial isomorphism. The
description of the bijection game is complete.

The assumption $R \eqkkwl 2 S$ implies that Duplicator has a winning
strategy in the 3-pebble bijection game on $R$ and $S$.  She can
transform this strategy to the game on graphs $R^*$ and $S^*$.  Define
a projection map $\lambda\function{V(R^*)\cup V(S^*)}{V(R)\cup V(S)}$
as follows: If a vertex $w\in V(R^*)\cup V(S^*)$ is a clone of a
vertex $u\in V(R)\cup V(S)$, then $\lambda(w)=u$.  Duplicator
simulates a round of the game on $R$ and $S$ by assuming that
\begin{itemize}
\item 
Spoiler chooses the pebble with index $i\in\{1,2,3\}$ in the
  simulated game on $R$ and $S$ whenever he does it in the real game
  on $R^*$ and~$S^*$;
\item 
Spoiler chooses the vertex $\lambda(w)$ in $R$ whenever he
  chooses a vertex $w$ in~$R^*$.
\end{itemize}

Whenever Duplicator's strategy in the simulated game on $R$ and $S$
yields a bijection $f\function{V(R)}{V(S)}$, in the real game on $R^*$
and $S^*$ Duplicator responds with the bijection
$f^*\function{V(R^*)}{V(S^*)}$ taking each clone of a vertex $v\in
V(R)$ to the clone of $f(v)$ that has the same color. This completes
description of Duplicator's strategy for the game on $R^*$ and~$S^*$.

Note that, whenever $x_i\in V(R^*)$ and $y_i\in V(S^*)$ are pebbled by
$p_i$, then $\lambda(x_i)\in V(R)$ and $\lambda(y_i)\in V(S)$ are
pebbled by $p_i$ in the simulated game.  This, along with the facts
that Duplicator always succeeds in the simulated game and $f^*$ always
preserves the vertex colors, readily implies that Duplicator succeeds
in each round of the game on $R^*$ and $S^*$. Thus, she has a winning
strategy in this game, and we conclude that $R^* \eqkkwl 2 S^*$.
\end{proof}

\begin{proof}[Proof of Theorem \ref{thm:one4clique}]
Let $R$ be the $4\times4$-rook's graph and $S$ be the Shrikhande graph.
Recall that $R$ contains a 4-clique, while $S$ does not.
Consider now $G=R^*$ and $H=S^*$. By Lemma \ref{lem:fellows},
$G$ contains a 4-clique, $H$ does not, and $G \eqkkwl 2 H$.
Let $c_G\function{V(G)}{\{1,2,3,4\}}$ and $c_H\function{V(H)}{\{1,2,3,4\}}$
denote the vertex colorings of $G$ and $H$ respectively.
 
Now, let $F$ be a graph that contains exactly one $K_4$.  Suppose that
$V(F)=\Set{1,\dots,l}$ and $F[\Set{1,2,3,4}]\cong K_4$.  Denote
$F'=F[\Set{5,\dots,l}]$. We define a graph $G'$ as $V(G')=V(G)\cup
V(F')$ and
$$
E(G')=E(G)\cup E(F') \cup 
\setdef{\{u,v\}}{u\in V(G),\,v\in V(F'),\,\{c_G(u),v\}\in E(F)}.   
$$
In other words, each of the vertices 1, 2, 3, and 4 of $F$ is cloned to 16
copies with the same adjacency to the other vertices.  Further, the
set of the 64 clones, whose names 1, 2, 3, 4 are now regarded as
colors, is endowed with edges to create a copy of $G$. The graph $H'$
is defined similarly.

Note that $H'$ does not contain $F$ or even any copy of $K_4$.  Indeed,
$K_4$ appears neither in $H$ nor in $F'$, and any copy $K$ of $K_4$ in
$H'$ with an edge between $V(H)$ and $V(F')$ would give rise to one
more copy of $K_4$ in $F$ (note that $K$ can use only differently
colored vertices from $H$ because each color class of $H$ is an
independent set).  On the other hand, $G'$ contains $F$ as a
subgraph. Indeed, $G$ contains a $4$-clique with colors $1,2,3,4$,
which completes the $F'$ fragment of $G'$ to a copy of~$F$.

It remains to prove that $G' \eqkkwl 2 H'$.  It suffices to prove that
Duplicator has a winning strategy in the 3-pebble bijection game on
$G'$ and $H'$.  Since $G \eqkkwl 2 H$, Duplicator has a winning
strategy in the 3-pebble bijection game on $G$ and $H$.  She can win
the game on $G'$ and $H'$ by simulating the game on $G$ and $H$ as
follows.  She assumes that

\begin{itemize}
\item Spoiler chooses the pebble with index $i\in\{1,2,3\}$ in the
  simulated game on $G$ and $H$ whenever he does it in the real game
  on $G'$ and~$H'$;
\item Spoiler chooses a vertex $x$ in $G$ whenever he chooses this
  vertex in $G'$ (recall that $V(G')=V(G)\cup V(F')$).
\end{itemize}

Whenever Duplicator's strategy in the simulated game on $G$ and $H$
yields a bijection $f\function{V(G)}{V(H)}$, in the real game on $G'$
and $H'$ Duplicator responds with the bijection
$f'\function{V(G')}{V(H')}$ that coincides with $f$ on $V(G)$ and is
the identity map on $V(F')$.  The bijection $f'$ does not change if
Spoiler chooses a vertex $x$ in $V(F')\subset V(G')$.

In order to check that this strategy is winning for Duplicator,
consider the vertices pebbled in $G'$ and $H'$ by $p_i$ and $p_j$ for
any $i,j\in\{1,2,3\}$. Without loss of generality, suppose that $i=1$
and $j=2$. According to our notation, $p_1$ occupies vertices $x_1\in
V(G')$ and $y_1\in V(H')$, and $p_2$ occupies vertices $x_2\in V(G')$
and $y_2\in V(H')$. Duplicator's strategy ensures that $x_i\in V(F')$
exactly when $y_i\in V(F')$ for each $i=1,2$. If both $x_1$ and $x_2$
are in $V(G)$, then both $y_1$ and $y_2$ are in $V(H)$ and are
adjacent if and only if $x_1$ and $x_2$ are adjacent. The last
condition is true because the vertices in $V(G)\cup V(H)$ are pebbled
according to Duplicator's winning strategy in the game on $G$ and $H$.
If both $x_1$ and $x_2$ are in $V(F')$, then $y_1$ and $y_2$ is the
identical vertex pair in the graph $F$, and the adjacency relation is
preserved by trivial reasons.  Finally, suppose that $x_1\in V(G)$
while $x_2\in V(F')$ and, hence, $y_1\in V(H)$ and $y_2\in V(F')$.
Since $x_1$ and $y_1$ were pebbled according to Duplicator's winning
strategy in the game on $G$ and $H$, they have the same
color. Moreover, $x_2$ and $y_2$ are identical vertices in $F$.  It
follows by the construction of $G'$ and $H'$ that $y_1$ and $y_2$ are
adjacent in $H'$ if and only if $x_1$ and $x_2$ are adjacent in~$G'$.
\end{proof}

\section{Concluding discussion}

An intriguing open problem is whether Corollary \ref{cor:C2} yields a
complete description of the class \cclass k. Our Theorem \ref{thm:CR1}
gives an affirmative answer in the one-dimensional case.  Moreover,
this theorem gives a complete description of the class \rclass1. The
class \rclass2 remains a mystery. For example, it contains either
finitely many matching graphs $sK_2$ or all of them, and we currently
do not know which of these is true. In other words, is the matching
number preserved by $\eqkkwl2$-equivalence? Note that non-isomorphic
strongly regular graphs with the same parameters cannot yield
counterexamples to this. The \emph{Brouwer-Haemers conjecture} states
that every connected strongly regular graph is Hamiltonian except the
Petersen graph, and Pyber \cite{Pyber14} has shown there are at most
finitely many exceptions to this conjecture. Since the Petersen graph
has a perfect matching, it is therefore quite plausible that every
connected strongly regular graph has an (almost) perfect matching.

By Corollary \ref{cor:C2}, the subgraph count $\sub FG$ is
\kWL-invariant for $k=\htw F$. Interestingly, the parameter $\htw F$ appears in a result by
Curticapean, Dell, and Marx \cite{CurticapeanDM17} who show that $\sub
FG$ is computable in time $e(F)^{O(e(F))}\cdot v(G)^{\htw F +1}$.
An interesting area is to explore connections between
\kWL-invariance and algorithmics, which are hinted by this apparent
coincidence. 

Which \emph{induced} subgraphs and their counts are $\kWL$-invariant
for different $k$ deserves study. We note that the induced subgraph
counts have been studied in the context of finite model theory by
Kreutzer and Schweikardt \cite{KreutzerS14}.


\appendix

\section{Proof of Lemma \ref{lem:F-free-am}}\label{app}

We prove Lemma \ref{lem:F-free-am} by splitting it into Lemmas \ref{lem:P32am}, \ref{lem:2P3am},
and \ref{lem:P322am} below. The proof is based on an explicit description
of the class of $F$-free graphs for each $F\in\{P_3+P_2,\,P_3+2P_2,\,2P_3\}$.

\subsection{Forbidden forests}

Let $\forb F$ denote the class of all graphs that do not have subgraphs isomorphic to~$F$.
Describing $\forb F$ \emph{explicitly} is a hard task in general, even if $F$ is a simple
pattern like a matching graph $sP_2$. Nevertheless, we will need explicit characterization
of $\forb F$ in several simple cases.
As the simplest fact, note that 
$$
\forb{2P_2}=\Set{K_{1,s}+tK_1,\,K_3+tK_1}_{s\ge1,\,t\ge0}.
$$
For a characterization of $\forb{3P_2}$, recall some standard graph-theoretic concepts.

The \emph{join} of graphs $G$ and $H$, denoted by $G*H$,
is obtained from the disjoint union of $G$ and $H$
by adding all possible edges between a vertex of $G$ and a vertex of~$H$.

The \emph{line graph} $L(H)$ of a graph $G$ has $E(H)$ as the set of vertices,
and $e$ and $e'$ from $E(H)$ are adjacent in $L(H)$ if and only if they share a vertex in $H$.
A \emph{clique cover} of size $k$ of a graph $G$ is a set of cliques $C_1,\ldots,C_k$ in $G$
such that $V(G)=\bigcup_{i=1}^kC_i$. Note that cliques in $L(H)$ are exactly star or
triangle subgraphs of~$H$. It follows that $H$ is in $\forb{2P_2}$ exactly when
$L(H)$ is a complete graph or, in other words, has a clique cover of size~1.
This admits an extension to the 3-matching pattern.

\begin{lemma}\label{lem:3P2}
  \begin{enumerate}[\bf 1.]
  \item 
Let $v(H)\ge6$. Then $H$ is in $\forb{3P_2}$ exactly when
$L(H)$ has a clique cover of size at most~2.
\item
Up to adding isolated vertices, 
$H\in \forb{3P_2}$ exactly in these cases:
\begin{enumerate}[\bf i.]
\item 
$v(H)\le5$,
\item
$H$ is a subgraph of one of the following graphs:
\begin{itemize}
\item 
$2K_3$,
\item 
$K_1*(K_3+sK_1)$, $s\ge0$,
\item 
$K_2*sK_1$, $s\ge1$.
\end{itemize}
\end{enumerate}
  \end{enumerate}
\end{lemma}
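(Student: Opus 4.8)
The guiding observation is that a matching in $H$ is exactly an independent set in the line graph $L(H)$, so $H\in\forb{3P_2}$ is equivalent to $\alpha(L(H))\le 2$, where $\alpha$ is the independence number. The lemma thus reduces to comparing $\alpha(L(H))$ with the clique cover number $\bar\chi(L(H))$ (the least number of cliques covering $V(L(H))=E(H)$), using the recalled fact that every clique of $L(H)$ is a star or a triangle of $H$. One inequality is for free: an independent set meets each clique of a cover at most once, so $\alpha(L(H))\le\bar\chi(L(H))$ always holds; hence a clique cover of size $\le 2$ immediately forces $\alpha(L(H))\le 2$, i.e. $H\in\forb{3P_2}$. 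The content of Part 1 is therefore the reverse implication $\alpha(L(H))\le 2\Rightarrow\bar\chi(L(H))\le 2$ under the hypothesis $v(H)\ge 6$.

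\textbf{Part 1, the substantial direction.} I would argue through perfection of the line graph. Since $\alpha(L(H))=\nu(H)\le 2$, every cycle of $H$ contains a matching of half its length, so all cycles have length at most $5$, and the only odd cycle of length $\ge 5$ that $H$ could contain is $C_5$. By the classical characterization of line-perfect graphs, $L(H)$ is perfect unless $H$ contains an odd cycle of length at least $5$; so once $H$ is $C_5$-free, $L(H)$ is perfect and $\bar\chi(L(H))=\alpha(L(H))\le 2$. It remains to exclude $H\supseteq C_5$, and this is exactly where $v(H)\ge 6$ enters: if $H$ contained a $C_5$ on vertices $1,\dots,5$, then any edge from a sixth vertex $w$ to some $i$ would combine with the perfect matching of the path $C_5-i$, and any edge between two vertices outside the $C_5$ would combine with a $2$-matching of the $C_5$, each time producing a forbidden $3$-matching; hence every vertex outside the $C_5$ is isolated, so $H$ has at most five non-isolated vertices, contradicting $v(H)\ge 6$. (One can avoid invoking perfection and build the two covering cliques directly from a maximum matching $ab,cd$: every edge meets $\{a,b,c,d\}$, the remaining vertices form an independent set attached only to $\{a,b,c,d\}$, and a short case analysis of the attachment pattern yields the cover.)

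\textbf{Part 2.} I would deduce the explicit list from Part 1. If $v(H)\le 5$ then $H$ cannot contain the six-vertex graph $3P_2$ at all, so every such $H$ lies in $\forb{3P_2}$ unconditionally; this is case (i). If $v(H)\ge 6$, Part 1 supplies a cover of $E(H)$ by two cliques $Q_1,Q_2$ of $L(H)$, each a star or a triangle of $H$, and I split into three cases. If both are stars with centres $u,v$, then $\{u,v\}$ is a vertex cover and $H$ is a subgraph of $K_2*sK_1$. If one is a star centred at $u$ and the other is a triangle $xyz$, then every remaining vertex is adjacent only to $u$, so $H$ is a subgraph of $K_1*(K_3+sK_1)$. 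If both are triangles, they must be vertex-disjoint (two overlapping triangles span at most five vertices, contradicting $v(H)\ge 6$), giving $H\subseteq 2K_3$. Conversely, each listed graph has matching number $2$, so it and all its subgraphs lie in $\forb{3P_2}$, which closes the characterization.

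\textbf{Main obstacle.} The delicate point is the reverse direction of Part 1 together with the precise role of the hypothesis $v(H)\ge 6$: it serves solely to exclude the single exceptional graph $C_5$, whose line graph $L(C_5)=C_5$ satisfies $\alpha=2<3=\bar\chi$ and is the unique $3P_2$-free graph (up to isolated vertices) whose line graph is not $2$-clique-coverable. Pinning down this exception and the attendant isolated-vertex bookkeeping is the real work; by comparison the three-case translation in Part 2 is routine.
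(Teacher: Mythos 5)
Your proof is correct in substance and reaches the same classification, but the hard direction of Part 1 is done by a genuinely different route. The paper is entirely elementary: it fixes a $2$-matching $u_1v_1,u_2v_2$, notes that the remaining vertices form an independent set with tightly constrained attachments to $\{u_1,v_1,u_2,v_2\}$, and runs a three-case analysis (two triangles / a triangle and a star / two stars) that exhibits the two covering cliques explicitly --- this is exactly the direct construction you sketch in parentheses but do not carry out. You instead invoke Trotter's characterization of line-perfect graphs plus the weak perfect graph theorem: since $\nu(H)\le 2$ kills all cycles of length at least $6$, excluding $C_5$ makes $L(H)$ perfect and gives $\bar\chi(L(H))=\alpha(L(H))=\nu(H)\le 2$. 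Your route is slicker and isolates cleanly why the hypothesis $v(H)\ge 6$ is needed (solely to forbid $C_5$), at the price of importing two classical theorems where the paper stays self-contained; the paper's explicit cover also feeds directly into Part 2, since the stars and triangles it produces are precisely what the list in Part 2.ii is read off from --- your Part 2 performs the same translation. Two caveats. First, your exclusion of $C_5$ (``every vertex outside the $C_5$ is isolated, contradicting $v(H)\ge6$'') tacitly assumes $H$ has no isolated vertices; the paper's Case 3 makes the same tacit appeal (it is a standing convention stated elsewhere in the appendix), so this is not a gap relative to the paper, though it deserves to be said explicitly. Second, your closing claim that $C_5$ is the \emph{unique} $3P_2$-free graph without isolated vertices whose line graph is not $2$-clique-coverable is false: the paper's own five-vertex example $\overline{P_3+2K_1}$ is another one. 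This does not affect your proof, which only needs that $C_5$ is the unique obstruction once there are at least six non-isolated vertices.
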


\noindent
Members of the last two families are depicted in Figure~\ref{fig:3P2}.
Note that $K_2*sK_1$ is the \emph{complete split} graph with the clique part of size 2.

\begin{figure}[t]
  \centering
\begin{tikzpicture}[every node/.style={circle,draw,inner sep=2pt,fill=black},scale=.8]
  \begin{scope}
    \path (0,0) node (a0) {}
      (0,1) node (a1) {} edge (a0)
   (-1,-.6) node (a2) {} edge (a0) edge (a1)
    (1,-.6) node (a3) {} edge (a0) edge (a1) edge (a2)
      (0,2) node (b1) {} edge (a1)
     (-1,2) node (b2) {} edge (a1)
      (1,2) node (b3) {} edge (a1);
\node[draw=none,fill=none,below] at (0,0) {$K_1*(K_3+3K_1)$};
  \end{scope}

  \begin{scope}[xshift=50mm,yshift=-6mm]
    \path (-1,0) node (a1) {}
           (1,0) node (a2) {} edge (a1)
          (-2,2) node (b1) {} edge (a1) edge (a2)
          (-1,2) node (b2) {} edge (a1) edge (a2)
           (0,2) node (b3) {} edge (a1) edge (a2)
           (1,2) node (b4) {} edge (a1) edge (a2)
           (2,2) node (b5) {} edge (a1) edge (a2);
\node[draw=none,fill=none,below] at (0,0) {$K_2*5K_1$};
  \end{scope}
\end{tikzpicture}

\vspace{-10mm}

  \caption{Examples of $3P_2$-free graphs.} 
  \label{fig:3P2}
\end{figure}
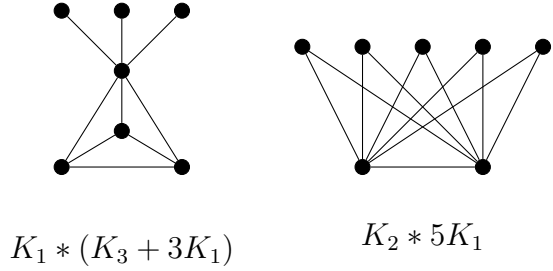

Note that Lemma \ref{lem:3P2} does not hold true without the assumption $v(H)\ge6$.
As an example, consider the complement of $P_3+2K_1$. This graph does not contain any
3-matching, while its line graph has clique cover number~3.

\begin{proof}
\textit{1.}  
In one direction, suppose that $E(H)=E_1\cup E_2$, where both $E_1$ and $E_2$ are
cliques in $L(H)$. Among any three edges $e_1,e_2,e_3$ of $H$, at least two belong to one of these cliques.
By this reason, $e_1,e_2,e_3$ cannot form a $3P_2$ subgraph of~$H$.

For the other direction, assume that $H$ does not contain any $3P_2$ subgraph.
If $H$ does not contain even any $2P_2$ subgraph, then we are done because,
as it was already mentioned, $H$ has a clique cover of size 1 in this case.
Assume, therefore, that $H$ contains two non-adjacent edges $e_1=u_1v_1$ and $e_2=u_2v_2$.
Call any vertex in $V(H)\setminus\{u_1,v_1,u_2,v_2\}$ \emph{external}.
Since $\{e_1,e_2\}$ cannot be extended to a 3-matching subgraph,
we can state the following facts.

\begin{enumerate}[(A)]
\item 
Every two external vertices in $H$ are non-adjacent.
\item 
If $u_i$ has an external neighbor $x$, then $v_i$ has no external neighbor
possibly except $x$. Symmetrically, the same holds true for $u_i$ and $v_i$ swapped.
\end{enumerate}

\noindent
We split our further analysis into three cases.

\Case1{There are external vertices $x_1$ and $x_2$ such that $x_1u_1v_1$ and $x_2u_2v_2$ are triangles.}
Claims (A) and (B) imply that $H$ has no other edge. Thus,
$H=2K_3$, and $E(H)$ is covered by two triangle cliques.

\Case2{There is a triangle $x_1u_1v_1$ and no triangle $x_2u_2v_2$.}
By Claim (B), none of the vertices $x_1$, $u_1$, and $v_1$
has an external neighbor. One of the vertices $u_2$ and $v_2$, say $u_2$ must have
at least one external neighbor $x$. If there are also other external vertices,
all of them are adjacent to $u_2$ by Claim (B).
The edges $u_1v_2$ and $v_1v_2$ are impossible in $H$ because they would form
a 3-matching together with $x_1v_1,u_2x$ and $x_1u_1,u_2x$ respectively.
Thus, $H$ can only look as shown in Figure~\ref{fig:proof-3P2}(a).
We see that $E(H)$ is covered by the triangle $x_1u_1v_1$ and the neighborhood star of~$u_2$.

\begin{figure}[t]
  \centering
\begin{tikzpicture}[every node/.style={circle,draw,inner sep=2pt,fill=black},scale=1.5]

  \begin{scope}
    \path (0,0) node (v1) {}
      (0,1) node (u1) {} edge (v1)
      (1,0) node (v2) {}
      (1,1) node (u2) {} edge (v2) edge[dashed] (u1) edge[dashed] (v1)
    (.4,.7) node (x1) {} edge (u1) edge (v1) edge[dashed] (u2)
  (1.5,1.5) node (x)  {} edge (u2)
    (1,1.5) node (y1) {} edge[dashed] (u2)
   (.5,1.5) node (y2) {} edge[dashed] (u2);
\node[draw=none,fill=none,left] at (v1) {$v_1$};
\node[draw=none,fill=none,left] at (u1) {$u_1$};
\node[draw=none,fill=none,right] at (v2) {$v_2$};
\node[draw=none,fill=none,below right] at (u2) {$u_2$};
\node[draw=none,fill=none,left] at (x1) {$x_1$};
\node[draw=none,fill=none,right] at (x) {$x$};
\node[draw=none,fill=none] at (-.6,0) {(a)};
  \end{scope}

  \begin{scope}[xshift=25mm]
    \path (0,0) node (v1) {}
      (0,1) node (u1) {} edge (v1)
      (1,0) node (v2) {} edge[dashed] (u1) edge[dashed] (v1)
      (1,1) node (u2) {} edge (v2) edge[dashed] (u1) edge[dashed] (v1)
  (1.5,1.5) node (x)  {} edge (u2)
    (1,1.5) node (y1) {} edge (u2)
   (.5,1.5) node (y2) {} edge[dashed] (u2);
\node[draw=none,fill=none,left] at (v1) {$v_1$};
\node[draw=none,fill=none,left] at (u1) {$u_1$};
\node[draw=none,fill=none,right] at (v2) {$v_2$};
\node[draw=none,fill=none,below right] at (u2) {$u_2$};
\node[draw=none,fill=none] at (-.6,0) {(b)};
  \end{scope}

  \begin{scope}[xshift=53mm]
    \path (0,0) node (v1) {}
      (0,1) node (u1) {} edge (v1)
      (1,0) node (v2) {} edge[dashed] (u1)
      (1,1) node (u2) {} edge (v2) edge[dashed] (u1) edge[dashed] (v1)
  (1.5,1.5) node (x2) {} edge (u2) edge[dashed] (u1)
  (-.5,1.5) node (x1) {} edge (u1) edge[dashed] (u2)
   (.5,1.5) node (y)  {} edge[dashed] (u1) edge[dashed] (u2);
\node[draw=none,fill=none,left] at (v1) {$v_1$};
\node[draw=none,fill=none,left] at (u1) {$u_1$};
\node[draw=none,fill=none,right] at (v2) {$v_2$};
\node[draw=none,fill=none,below right] at (u2) {$u_2$};
\node[draw=none,fill=none,left] at (x1) {$x_1$};
\node[draw=none,fill=none,right] at (x2) {$x_2$};
\node[draw=none,fill=none] at (-.6,0) {(c)};
  \end{scope}
\end{tikzpicture}
  \caption{Proof of Lemma \ref{lem:3P2}.} 
  \label{fig:proof-3P2}
\end{figure}

\Case3{There is no external vertex $x$ such that $xu_1v_1$ or $xu_2v_2$ is a triangle.}
If one of the edges $e_1$ and $e_2$ has no external neighbor, then all edges
between $e_1$ and $e_2$ are possible and, by Claims (A) and (B),
$H$ looks as shown in Figure~\ref{fig:proof-3P2}(b).
Thus, also in this case $E(H)$ is covered by a star and a triangle (or by a star and a small
star $K_{1,t}$ with $t=1,2$).

If both $e_1$ and $e_2$ have external neighbors, then the assumption $v(H)\ge6$
implies that there is an external neighbor $x_1$ for $e_1$ and there is
an external neighbor $x_2\ne x_1$ for $e_2$. Without loss of generality,
suppose that $x_i$ is adjacent to $u_i$ for both $i=1,2$. Then $v_1$
and $v_2$ cannot be adjacent for else $H$ would contain a $3P_2$ formed by
$x_1u_1$, $x_2u_2$, and $v_1v_2$. By Claim (B), any other external
vertex is adjacent to $u_1$ or to $u_2$, or to both of them.
Thus, $H$ looks as in Figure~\ref{fig:proof-3P2}(c),
and $E(H)$ is covered by the neighborhood stars of $u_1$ and~$u_2$.

\textit{2.}  This part follows from Part 1. Let $v(H)\ge6$.
If $E(H)$ is covered by two triangles, then $H$ is a subgraph of $2K_3$.
If $E(H)$ is covered by a triangle and a star, then $H$ is a subgraph of $K_1*(K_3+sK_1)$.
Finally, if $E(H)$ is covered by two stars, then $H$ is a subgraph of $K_2*sK_1$.
\end{proof}

\noindent
As usually, $\Delta(G)$ denotes the maximum degree of a vertex in the graph~$G$.

\begin{lemma}\label{lem:forb}
Let $F\in\Set{P_3+P_2,\,P_3+2P_2,\,2P_3}$.
Up to adding isolated vertices, the classes $\forb F$ consist of the following graphs.

\begin{enumerate}[\bf 1.]
\item 
$H\in \forb{P_3+P_2}$ exactly in these cases:
\begin{enumerate}[\bf i.]
\item 
$v(H)\le4$,
\item
$\Delta(H)=1$, that is, $H=sK_2$,
\item 
$H=K_{1,s}$.
\end{enumerate}
\item 
$H\in\forb{P_3+2P_2}$ exactly in these cases:
\begin{enumerate}[\bf i.]
\item 
$v(H)\le6$,
\item 
$\Delta(H)=1$,
\item 
$H$ is a subgraph of one of the following graphs:
\begin{itemize}
\item 
$K_1*(K_3+sK_1)$, $s\ge0$,
\item 
$K_2*sK_1$, $s\ge1$.
\end{itemize}
\end{enumerate}
\item 
$H\in\forb{2P_3}$ exactly in these cases:
\begin{enumerate}[\bf i.]
\item 
$H=H_0+sK_2$, $s\ge0$, where $v(H_0)\le5$,
\item 
$H=N+sK_2$, $s\ge0$, where $N$ is the 6-vertex net graph
shown in Figure~\ref{fig:2P3},
\item 
$H$ is a subgraph of the graph $K_1*sK_2$ for some $s\ge1$.
\end{enumerate}
\end{enumerate}
\end{lemma}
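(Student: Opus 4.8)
The plan is to prove each of the three parts by establishing both inclusions. The \emph{containment} direction---that every graph of the listed form is $F$-free---is routine, and I would dispatch it by a matching/degree count: $K_1*(K_3+sK_1)$ and $K_2*sK_1$ both have matching number at most $2$, hence contain no $3P_2$ and a fortiori no $P_3+2P_2$ (note $3P_2\subseteq P_3+2P_2$); $K_1*sK_2$ has the property that deleting its apex leaves a matching, so it cannot host two vertex-disjoint $P_3$'s; and any graph with $\Delta(H)\le1$ contains no $P_3$ at all. The substance lies in the \emph{reverse} direction, the structural classification of an arbitrary $F$-free $H$, which I may assume has no isolated vertices. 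Since each pattern is disconnected, I would phrase $F$-freeness through a ``$P_3$ plus remainder'' lens: $H$ contains $P_3+tP_2$ precisely when some $P_3$ on a vertex set $S$ satisfies that $H-S$ has a matching of size $t$; while $2P_3$-freeness says that the family of vertex sets of $P_3$-subgraphs is \emph{pairwise intersecting}.

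For $P_3+P_2$ (Part~1): if $\Delta(H)\le1$ we are in case (ii); otherwise fix a $P_3$ on $S=\{a,b,c\}$ and observe that $H-S$ must be edgeless, which forces a single nontrivial component all of whose edges meet $S$, and this is quickly seen to be a star once $v(H)\ge5$ (the small leftovers are absorbed by case (i), $v(H)\le4$). For $P_3+2P_2$ (Part~2) I would first invoke Lemma~\ref{lem:3P2}: the $3P_2$-free graphs are exactly the graphs of matching number at most $2$, and these are already classified there, yielding the families in (iii). It then remains to show that any $H$ with $v(H)\ge7$ that has a $P_3$ and a $3$-matching must contain $P_3+2P_2$; here I pick a $P_3$ on $S$, use that $H-S$ (on at least four vertices) has matching number at most $1$ and so lies in $\forb{2P_2}$ (a star or triangle plus isolated vertices), and play this back against every choice of $S$ to reach a contradiction. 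Large matchings ($\Delta=1$) and the range $v(H)\le6$ are covered by the explicit cases (i) and (ii).

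Part~3 is where the work concentrates, and I expect it to be the main obstacle. If every $P_3$ of $H$ runs through one common vertex $u$, then $H-u$ has no $P_3$, hence is a matching, so $H$ is a subgraph of $K_1*sK_2$, giving case (iii). The difficulty is the genuinely non-Helly situation: the $P_3$-subgraphs may be pairwise intersecting yet admit \emph{no} common vertex, exactly as in the net graph $N$, whose three outer paths $a'ab$, $b'bc$, $c'ca$ meet pairwise but share no single vertex. The hard part will be proving that this configuration is essentially rigid---that a nontrivial $2P_3$-free component with no universal $P_3$-transversal must, after stripping off a disjoint matching $sK_2$ and bounding its size ($v(H_0)\le5$), be the net $N$ itself or fall under case (i). I would carry this out by a careful case analysis on how many pendant and triangle vertices can be attached before a second vertex-disjoint $P_3$ appears, using the pairwise-intersection constraint to pin down the admissible adjacencies and verifying at each branch that exceeding the stated bounds creates a $2P_3$. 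Managing this branching, and cleanly separating the net exception from the small-$H_0$ family, is the principal difficulty of the lemma.
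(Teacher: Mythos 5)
Your containment direction is fine (the matching-number arguments for $K_1*(K_3+sK_1)$, $K_2*sK_1$ and $K_1*sK_2$ are correct, as is the observation that $3P_2$-freeness implies $(P_3+2P_2)$-freeness), and your ``pairwise-intersecting $P_3$'s'' framing of $2P_3$-freeness is a legitimate alternative organization to the paper's (which instead reduces to connected $H$, rules out all cycles of length $\ge4$ when $v(H)\ge6$, and then splits on whether $H$ has a triangle or is a tree, analyzing the tree case by diameter). The problem is that in each part you defer precisely the step that constitutes the proof. In Part~3 the entire classification of a connected $2P_3$-free graph whose $P_3$'s have no common vertex is left as ``a careful case analysis'' to be managed later; this is where essentially all of the paper's work on this lemma is spent, and your dichotomy is not as clean as stated: $C_4$, $C_5$, and various other graphs on at most $5$ vertices also live in the ``pairwise intersecting, no common vertex'' bucket and must be separated from the net by an argument, not just by fiat of case (i). In Part~2 the claim that a $(P_3+2P_2)$-free graph on at least $7$ vertices containing a $P_3$ cannot contain a $3$-matching --- the step that licenses the appeal to Lemma~\ref{lem:3P2} --- is stated with a strategy (``play this back against every choice of $S$'') but never executed; the paper handles it by first decomposing into connected components (three or more components force $H=sK_2$; two components reduce to Part~1 and to $2P_2$-freeness) and only then invoking $3P_2$-freeness for connected $H$.

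Even in Part~1, ``quickly seen to be a star'' conceals a real step: fixing one $P_3$ on $S=\{a,b,c\}$ and knowing $H-S$ is edgeless does not force a star. For instance, the $5$-vertex graph with edges $ab$, $bc$, $ad$, $be$ has every edge meeting $S$ and $H-S$ edgeless, yet it is not a star; one only finds the forbidden $P_3+P_2$ by switching to the path $c\,b\,e$ together with the edge $ad$. So the argument must quantify over all choices of the $P_3$, which is exactly the kind of bookkeeping the paper avoids by instead locating a vertex of degree $\ge3$ and arguing about its neighborhood. In short: the skeleton is sound and compatible with (though organized differently from) the paper's proof, but as written this is a plan rather than a proof --- the case analyses that carry the entire burden of Parts 2 and 3 are acknowledged but absent.
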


\noindent
Note that Part 2.iii includes all $3P_2$-free graphs
with at least 7 vertices; see Figure~\ref{fig:3P2}.
The graphs in Part 3.ii--iii are shown in Figure~\ref{fig:2P3}.
Note also that $K_1*sK_2$ are known as \emph{friendship graphs},
and they are a part of the more general class of \emph{windmill graphs}.

\begin{figure}[t]
  \centering
\begin{tikzpicture}[every node/.style={circle,draw,inner sep=2pt,fill=black},scale=.7]
  \begin{scope}
    \path (90:1cm) node (a1) {}
         (210:1cm) node (a2) {} edge (a1)
         (330:1cm) node (a3) {} edge (a1) edge (a2)
          (90:2cm) node (b1) {} edge (a1)
         (210:2cm) node (b2) {} edge (a2)
         (330:2cm) node (b3) {} edge (a3);
\node[draw=none,fill=none] at (0,-2.5) {net graph};
  \end{scope}

  \begin{scope}[xshift=65mm]
    \path (0,0) node (a0) {}
       (75:2cm) node (a1) {} edge (a0)
      (105:2cm) node (b1) {} edge (a0) edge (a1)
      (195:2cm) node (a2) {} edge (a0)
      (225:2cm) node (b2) {} edge (a0) edge (a2)
      (315:2cm) node (a3) {} edge (a0)
      (345:2cm) node (b3) {} edge (a0) edge (a3);
\node[draw=none,fill=none] at (0,-2.5) {windmill graph $K_1*3K_2$};
  \end{scope}
\end{tikzpicture}

\vspace{-15mm}

  \caption{Examples of $2P_3$-free graphs.} 
  \label{fig:2P3}
\end{figure}
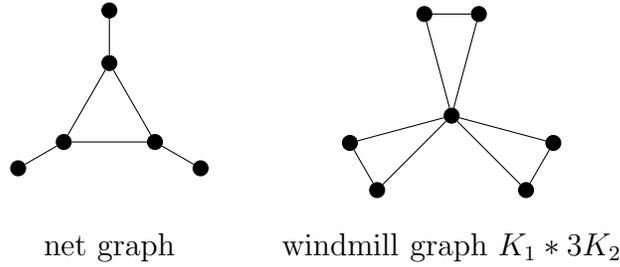

\begin{proof}
\textit{1.}
Any graph satisfying Conditions i--iii obviously does not contain $P_3+P_2$.
For the other direction, let $H$ be $(P_3+P_2)$-free.
Suppose that $\Delta(H)\ge2$. Then $H$ must be connected (recall that
we assume that $H$ has no isolated vertex). Furthermore, suppose that
$H$ has at least 5 vertices. If $\Delta(H)=2$, then $H$ is a path or a cycle
with at least 5 vertices, but all of them contain $P_3+P_2$.
It follows that $\Delta(H)\ge3$, that is, $H$ contains a 3-star $K_{1,3}$. 
Call any vertex outside this star subgraph \emph{external}.
If an external vertex is adjacent to a leaf of the 3-star, 
this clearly results in a $P_3+P_2$ subgraph. We conclude that all external vertices 
are adjacent to the central vertex $x$ of the 3-star.
Thus, $V(H)=N(x)\cup\{x\}$. Since $v(H)\ge5$, the vertex $x$ has at least 4 neighbors.
As a consequence, no two neighbors of $x$ can be adjacent, as this would yield a $P_3+P_2$.
We conclude that $H=K_{1,s}$ for some $s\ge4$.

\textit{2.}
If $H$ satisfies at least one of Conditions i--ii, it obviously does not
contain any subgraph $P_3+2P_2$. If $H$ satisfies Condition iii, then it
does not contain even $3P_2$. For the other direction, suppose
that $H$ is $(P_3+2P_2)$-free. 

If $H$ has three or more connected components, then obviously $H=sK_2$ for $s\ge3$ 
(recall the assumption that $H$ has no isolated vertex). 

Consider now the case that $H$ has exactly two connected components.
If one of them is $P_2$, then the other must (and can) be an arbitrary
connected $P_3+P_2$-free graph. By Part 1 of the lemma, the second component
has at most 4 vertices or is a star. In the former case, $v(H)\le6$.
In the latter case, $H$ is a subgraph of $K_1*(K_3+sK_1)$ for some $s$.
If none of the connected components of $H$ is $P_2$, then both of them
contain a $P_3$ and, therefore, both must be $2P_2$-free.
Since every $2P_2$-free graph is a star or a triangle, this leaves three
possibilities for $H$. If $H=2K_3$, then $v(H)\le6$.
If $H=K_3+K_{1,s}$, then $H$ is a subgraph of $K_1*(K_3+sK_1)$.
Finally, if $H=K_{1,s}+K_{1,t}$, then $H$ is a subgraph of $K_2*(s+t)K_1$.

It remains to consider the case that $H$ is connected.
Suppose that $v(H)\ge7$. Then the condition that $H$ does not
contain any $P_3+2P_2$ implies that $H$ does not contain even any
$3P_2$ subgraph. By Lemma \ref{lem:3P2}, $H$ is a subgraph of
$K_1*(K_3+sK_1)$ or $K_2*sK_1$.

\textit{3.}
Like in the previous cases, it suffices to prove the theorem in one direction.
Let $H$ be $2P_3$-free. If $H$ is disconnected, all but one connected components
must be $P_2$. As a single non-$P_2$ component, an arbitrary connected $2P_3$-free
graph is allowed. Since the class of all graphs satisfying Conditions i--iii
is closed under addition of isolated edges, it is enough to prove the claim in
the case of a connected~$H$.

Note first that $H$ contains no cycle $C_n$ for $n\ge6$ because such a cycle
contains a $2P_3$. Suppose that $v(H)\ge6$. Then $H$ contains also neither
$C_5$ nor $C_4$. Indeed, if $H$ contains a $C_5$, then $H$ must contain also
a subgraph 
\begin{tikzpicture}[every node/.style={circle,draw,inner sep=.5pt,fill=Black},scale=.25]
\path (0:1cm) node (a1) {}
     (72:1cm) node (a2) {} edge (a1) 
   (72*2:1cm) node (a3) {} edge (a2)
   (72*3:1cm) node (a4) {} edge (a3)
   (72*4:1cm) node (a5) {} edge (a4) edge (a1)
        (2,0) node (a0) {} edge (a1);
\end{tikzpicture},
which contains a $2P_3$. If $H$ contains a $C_4$, then $H$ must contain one of subgraphs
$$
\begin{tikzpicture}[every node/.style={circle,draw,inner sep=.5pt,fill=Black},scale=.4]
\path (0,0) node (a1) {}
      (1,0) node (a2) {} edge (a1) 
      (1,1) node (a3) {} edge (a2)
      (0,1) node (a4) {} edge (a3) edge (a1) 
      (2,1) node (a5) {} edge (a3)
      (2,0) node (a6) {} edge (a5);
\end{tikzpicture},
\quad
\begin{tikzpicture}[every node/.style={circle,draw,inner sep=.5pt,fill=Black},scale=.4]
\path (0,0) node (a1) {}
      (1,0) node (a2) {} edge (a1) 
      (1,1) node (a3) {} edge (a2)
      (0,1) node (a4) {} edge (a3) edge (a1) 
      (2,1) node (a5) {} edge (a3)
      (2,0) node (a6) {} edge (a3);
\end{tikzpicture},
\quad
\begin{tikzpicture}[every node/.style={circle,draw,inner sep=.5pt,fill=Black},scale=.4]
\path (0,0) node (a1) {}
      (1,0) node (a2) {} edge (a1) 
      (1,1) node (a3) {} edge (a2)
      (0,1) node (a4) {} edge (a3) edge (a1) 
      (2,1) node (a5) {} edge (a3)
      (2,0) node (a6) {} edge (a2);
\end{tikzpicture},
\quad
\begin{tikzpicture}[every node/.style={circle,draw,inner sep=.5pt,fill=Black},scale=.4]
\path (0,0) node (a1) {}
      (1,0) node (a2) {} edge (a1) 
      (1,1) node (a3) {} edge (a2)
      (0,1) node (a4) {} edge (a3) edge (a1) 
      (2,1) node (a5) {} edge (a3)
     (-1,0) node (a6) {} edge (a1);
\end{tikzpicture},
$$
and all of them contain a copy of $2P_3$. 

Suppose that $H$ contains a 3-cycle $xyz$ and call any further vertex of $H$ \emph{external}.
An external vertex can be adjacent to at most one of the vertices $x$, $y$, and $z$
for else $H$ would contain a $C_4$. Assume first that two vertices of the 3-cycle, say,
$x$ and $y$ have external neighbors $x'$ and $y'$ respectively, which must be distinct.
Since $v(H)\ge6$, there must be yet another external vertex $z'$. The only possibility
avoiding appearance of a $2P_3$ is that $z'$ is adjacent to $z$, and no other
adjacencies and further vertices are possible. Thus, in this case $v(H)=6$ and $H$ is the net graph.

Assume now that only one of the vertices of the 3-cycle, say $x$, has external neighbors.
The distance from any external vertex $v$ to $x$ is at most 2 for else a copy of $2P_3$ appears.
If this distance is equal to 2, denote the common external neighbor of $x$ and $v$ by $v'$
and note that $v'$ has degree 2 in $H$ and $v$ has degree 1.
If an external vertex $u$ does not appear in such 3-path $xv'v$, then
it is adjacent only to $x$ or, possibly, also to one vertex $u'$ of the same kind.
Then $u$ has degree 1 in $H$ in the former case and degree 2 in the latter case.
It follows that $H$ is a subgraph of 
a windmill graph $K_1*sK_2$.

It remains to consider the case that $H$ is a tree.
The diameter of $H$ is at most 4 for else $H$ would contain a $P_6$
and, hence, a $2P_3$. If the diameter is equal to 4, then $H$
contains a $P_5$ subgraph. Let $x$ be the middle vertex along this copy
of $P_5$. Note that none of the four other vertices cannot have any further neighbor
in $H$. Moreover, any branch of $H$ from $x$ can be $P_2$ or $P_3$ and nothing else.
It follows that $H$ is a subgraph of a windmill graph $K_1*sK_2$.

Suppose now that the diameter of a tree $H$ is equal to 3.
Let $x_1x_2x_3x_4$ be a copy of $P_4$ in $H$. 
Call any other vertex in $H$ \emph{external}.
Since $v(H)\ge6$, there are at least two external vertices.
None of them can be adjacent to $x_1$ or $x_4$ because then
the diameter would be larger. It is also impossible that one
external vertex is adjacent to $x_2$ and another to $x_3$
because then $H$ would contain a $2P_3$. 
Without loss of generality, assume that $x_3$ has no external
vertex. Due to the assumption on the diameter of $H$,
all the external vertices are adjacent to $x_2$.
This simple tree is obviously a subgraph of a windmill graph.

It remains to note that the trees of diameter 2 are exactly the stars
and that a single tree of diameter 1 is~$P_2$.
\end{proof}

We conclude this subsection with a straightforward characterization
of a class of graphs appearing in Lemma~\ref{lem:3P2}.2.ii and
Lemma~\ref{lem:forb}.2.iii.
The \emph{vertex cover number} $\tau(G)$ is
equal to the minimum size of a vertex cover in a graph~$G$.  

\begin{lemma}\label{lem:tau2}
  $\tau(H)\le2$ if and only if $H$ is a subgraph
of the complete split graph $K_2*sK_1$ for some $s\ge1$.
\end{lemma}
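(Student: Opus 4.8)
The plan is to prove both implications directly from the definition of the vertex cover number, exploiting that ``subgraph'' here means not-necessarily-induced.

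For the easy direction I would first observe that in $K_2*sK_1$ the two vertices of the $K_2$ part form a vertex cover of size $2$: every edge of $K_2*sK_1$ either joins these two vertices to each other or joins one of them to an independent vertex, so each edge is incident to the clique part. Since a vertex cover of a graph is also a vertex cover of each of its subgraphs, any $H$ that embeds as a subgraph of some $K_2*sK_1$ satisfies $\tau(H)\le2$.

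For the converse, suppose $\tau(H)\le2$ and fix a vertex cover $S$ of $H$ with $|S|\le2$; if $|S|<2$ I would simply enlarge $S$ by arbitrary further vertices of $H$ to obtain a cover $S=\{u,v\}$ of size exactly $2$ (and if $v(H)<2$ the claim is immediate). The key structural observation is that, since every edge of $H$ meets $S$, the set $V(H)\setminus\{u,v\}$ is independent, and each of its vertices is adjacent in $H$ only to $u$, only to $v$, or to both. I would then define an embedding of $H$ into $K_2*sK_1$ with $s=\max\{1,\,v(H)-2\}$ by sending $u$ and $v$ to the two clique vertices $a,b$ and sending the remaining vertices of $H$ injectively to distinct independent vertices. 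Every edge of $H$ incident to $u$ (respectively to $v$) then maps to an edge of $K_2*sK_1$, because every independent vertex of $K_2*sK_1$ is adjacent to both $a$ and $b$; and the possibly-present edge $uv$ maps to the edge $ab$. Hence $H$ is isomorphic to a subgraph of $K_2*sK_1$, as required.

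This argument involves no genuine difficulty, so I do not expect a real obstacle; the only points needing a little care are the degenerate cases $|S|\le1$ and the fact that the edge $uv$ need not be present in $H$, both of which are harmless precisely because subgraph containment is not required to be induced.
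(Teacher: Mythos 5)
Your proof is correct and complete; the paper states Lemma \ref{lem:tau2} without proof, explicitly calling it a ``straightforward characterization,'' and your two-direction argument (the clique pair of $K_2*sK_1$ covers all edges and covers persist to subgraphs; conversely a cover $\{u,v\}$ forces the rest to be independent, giving the embedding) is exactly the intended routine verification. The degenerate cases you flag ($|S|\le 1$, absence of the edge $uv$) are handled properly, so there is nothing to add.
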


\subsection{Amenability}

In \cite{ArvindKRV17}, we defined the concept of a graph being amenable
to color refinement. Specifically, we call a graph $H$ \emph{amenable}
if \WL1 distinguishes $H$ from any other graph $G$ that is not isomorphic to $H$.
In other words, a graph is amenable if it is identifiable by \WL1 up to isomorphism.
In logical terms, a graph is amenable exactly if it is definable in the two-variable
first-order logic with counting quantifiers. 
We now show that, with just a few exceptions, every $F$-free graph for each $F$ 
from the preceding subsection is amenable.

Efficiently verifiable amenability criteria are obtained in \cite{ArvindKRV17}
and \cite{KSS15} but we do not use these powerful tools here 
as more simple and self-contained arguments are sufficient for our purposes. 
We will use the following auxiliary facts.

\begin{lemma}\label{lem:am-forests}
  \begin{enumerate}[\bf 1.]
  \item 
Every forest is amenable.
\item 
Let $K$ be a forest. Then the disjoint union $H+K$ is amenable
if and only if $H$ is amenable.
  \end{enumerate}
\end{lemma}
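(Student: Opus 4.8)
The plan is to isolate a single engine from which both parts follow, namely the following \emph{factoring} statement: if $K$ is a forest and $G\eqkkwl1 A+K$ for an arbitrary graph $A$, then $G\cong A'+K$ for some graph $A'$ with $A'\eqkkwl1 A$. Granting this, Part~1 is the case $A=\emptyset$ (then $A'\eqkkwl1\emptyset$ forces $A'=\emptyset$, so $G\cong K$ and every forest is amenable), and the forward implication of Part~2 is immediate: if $H$ is amenable and $G\eqkkwl1 H+K$, the factoring gives $G\cong A'+K$ with $A'\eqkkwl1 H$, whence $A'\cong H$ and $G\cong H+K$. The reverse implication of Part~2 needs no forest structure: from $G\eqkkwl1 H$ we get $G+K\eqkkwl1 H+K$ by Lemma~\ref{lem:AABB}, amenability of $H+K$ yields $G+K\cong H+K$, and disjoint-union cancellation gives $G\cong H$. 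Cancellation itself is elementary, since for every connected graph $M$ the number of components of $X+K$ isomorphic to $M$ equals the number in $X$ plus the number in $K$, so $G+K\cong H+K$ forces $G$ and $H$ to have the same multiset of components.

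I would prove the factoring statement by induction on $v(K)$. If $K$ has no edges it is a union of isolated vertices, and one matches the degree-$0$ cells of $G$ and $A+K$: deleting the surplus isolated vertices from both sides preserves $\eqkkwl1$ (the degree-$0$ row and column of the degree matrix vanish), so the remainder is $\eqkkwl1 A$ and $G$ equals that remainder plus the requisite isolated vertices. If $K$ has an edge it has a leaf $u$; let $L$ be the colour cell of $u$ in the stable partition of $A+K$. Since colour determines degree, every vertex of $L$ is a leaf, and by equitability all of them attach to a single cell $Y$ (with $Y=L$ in the degenerate case where $L$ induces a matching, i.e.\ splits off $K_2$-components). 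Deleting $L$ from $G$ and from $A+K$ removes only edges incident to $L$, so the induced partitions stay equitable with equal cell sizes and equal degree matrices; as a common equitable partition certifies $\eqkkwl1$ (the fractional-isomorphism form of Lemma~\ref{lem:degree}), we obtain $G-L\eqkkwl1 (A+K)-L=A^\ast+K^\ast$, where $K^\ast$ is a forest with $v(K^\ast)<v(K)$. The induction hypothesis then supplies $G-L\cong (A^\ast)'+K^\ast$ with $(A^\ast)'\eqkkwl1 A^\ast$.

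It remains to put $L$ back. Any isomorphism $G-L\cong(A^\ast)'+K^\ast$ respects the stable colouring and hence maps the cell $Y$ onto $Y$; since in $G$ the vertices of $L$ are pendants attached uniformly (with $d(Y,L)$ per vertex) to $Y$, reattaching the same pendants on the $(A^\ast)'+K^\ast$ side reconstructs $G$. The $Y$-vertices lying in the $K^\ast$-component receive exactly the leaves that had been removed from $K$, so that component becomes $K$; the $Y$-vertices lying in $(A^\ast)'$ receive the remaining pendants, producing a graph $A'$. Because attaching a fixed number of pendant vertices to a colour cell turns a common equitable partition of $(A^\ast)'$ and $A^\ast$ into one of $A'$ and $A$, we get $A'\eqkkwl1 A$, and therefore $G\cong A'+K$, as required.

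The main obstacle is exactly this bookkeeping in the inductive step. Although a leaf of $K$ is guaranteed to exist, its colour cell $L$ (and its neighbour cell $Y$) may mix vertices of $A$ with vertices of $K$, so one cannot peel $K$ ``by itself''; the point of phrasing the engine as a factoring statement, rather than directly as ``$H$ amenable $\Rightarrow H+K$ amenable'', is precisely that it survives this mixing, since after deletion the remainder need only stay $\eqkkwl1$ to the possibly shrunken non-forest part and amenability of $H$ is invoked only once, at the very end. I expect the care needed to verify that the reattached pendants split consistently across the two components of $(A^\ast)'+K^\ast$, together with the degenerate $K_2$-component case $Y=L$, to be the most delicate point of the argument.
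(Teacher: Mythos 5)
The paper does not actually prove this lemma: Part 1 is quoted from \cite[Theorem 2.5]{RamanaSU94} and Part 2 from \cite[Section 5]{ArvindKRV17}, so your self-contained argument has to stand entirely on its own. Your base case, the easy direction of Part 2 (cancellation plus Lemma \ref{lem:AABB}), and the derivation of both parts from the factoring statement are fine; the gap is in the reattachment step of the induction, exactly where you yourself sense trouble. You assert that ``any isomorphism $G-L\cong(A^\ast)'+K^\ast$ respects the stable colouring and hence maps the cell $Y$ onto $Y$.'' This is false: after deleting $L$, the partition induced on $(A+K)-L$ and on $G-L$ is merely a common \emph{equitable} partition, and the \emph{stable} partition of the smaller graphs can be strictly coarser. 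For instance, with $A=\emptyset$ and $K=P_4+P_2$, taking $L$ to be the two ends of the $P_4$ leaves $K^\ast=2P_2$, whose stable partition merges $Y$ (the two middle vertices of the $P_4$) with the surviving $P_2$; an isomorphism handed to you by the induction hypothesis is only guaranteed to respect this coarser partition and may carry $Y_G$ onto the wrong edge. In that example the two choices happen to give isomorphic results, but you have not shown this in general, and the statement you actually need --- that \emph{some} isomorphism sends $Y_G$ to the correct set --- is essentially equivalent to the amenability claim itself; the induction hypothesis, as you have formulated it, does not supply it.

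A second, related defect: $Y\cap V(A^\ast)$ is a cell of an equitable partition of $A^\ast$ that may be strictly finer than the stable partition of $A^\ast$, whereas $(A^\ast)'$ is pinned down only up to $\eqkkwl1$-equivalence, which controls the stable partition alone. Finer equitable partitions need not transfer across $\eqkkwl1$: the graph $C_6$ admits an equitable bipartition into two independent triples, but the equivalent graph $2C_3$ admits no such partition. Hence ``the $Y$-cell of $(A^\ast)'$'' to which you propose to reattach pendants need not exist, let alone be unique. To repair the argument you would have to strengthen the induction hypothesis so that the isomorphism it produces respects the full induced equitable partition (which is the hard part), or else follow the route of the cited sources: first show that color refinement separates tree components from components containing cycles, so that $G$ splits as a forest part plus a rest matching the corresponding split of $A+K$; the forest parts are then handled by Part 1, and the rest never interacts with $K$ at all.
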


\noindent
Part 1 of Lemma \ref{lem:am-forests} follows from \cite[Theorem 2.5]{RamanaSU94};
see also \cite[Corollary 5.1]{ArvindKRV17}. A proof of Part 2 can be found in \cite[Section 5]{ArvindKRV17}.

A straightforward inspection shows that every graph with at most 4 vertices is amenable.
The following fact is, therefore, a straightforward consequence of Lemmas \ref{lem:forb}.1 and~\ref{lem:am-forests}.1.

\begin{lemma}\label{lem:P32am}
  Every $(P_3+P_2)$-free graph is amenable.
\end{lemma}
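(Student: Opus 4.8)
The plan is to read off the structure of a $(P_3+P_2)$-free graph from Lemma~\ref{lem:forb}.1 and then verify amenability case by case, using only the two auxiliary facts already at hand: that every forest is amenable (Lemma~\ref{lem:am-forests}.1) and that every graph on at most four vertices is amenable (the inspection remark preceding the statement). Concretely, let $H$ be an arbitrary $(P_3+P_2)$-free graph and write $H=H_0+tK_1$, where $H_0$ contains no isolated vertex. Since deleting isolated vertices cannot create a $P_3+P_2$, the graph $H_0$ is again $(P_3+P_2)$-free, so by Lemma~\ref{lem:forb}.1 it falls into one of the three listed cases.

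In the two cases $H_0=sK_2$ (arising when $\Delta(H_0)=1$) and $H_0=K_{1,s}$, the graph $H_0$ is a forest, and adjoining the $t$ isolated vertices keeps it a forest; hence $H=H_0+tK_1$ is a forest and is amenable by Lemma~\ref{lem:am-forests}.1. In the remaining case $v(H_0)\le4$, the core $H_0$ is amenable by inspection. To conclude that $H=H_0+tK_1$ is still amenable, I would invoke Lemma~\ref{lem:am-forests}.2 with the forest $K=tK_1$: since $tK_1$ is a forest, $H_0+tK_1$ is amenable exactly when $H_0$ is. This covers all three cases, so every $(P_3+P_2)$-free graph is amenable.

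There is essentially no obstacle specific to this lemma: all the real work has already been discharged, namely the explicit description of $\forb{P_3+P_2}$ (Lemma~\ref{lem:forb}.1) and the amenability of forests together with its stability under disjoint union with a forest (Lemma~\ref{lem:am-forests}). The only point requiring a little care is the treatment of isolated vertices, which the characterization suppresses: one must re-attach them and observe that an independent set $tK_1$ is itself a forest, so that either Lemma~\ref{lem:am-forests}.1 (when $H_0$ is already a forest) or Lemma~\ref{lem:am-forests}.2 (when $H_0$ has at most four vertices but need not be a forest) applies.
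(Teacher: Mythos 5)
Your proposal is correct and follows essentially the same route as the paper: combine the explicit description of $\forb{P_3+P_2}$ from Lemma~\ref{lem:forb}.1 with the amenability of forests (Lemma~\ref{lem:am-forests}.1), the amenability of all graphs on at most four vertices by inspection, and Lemma~\ref{lem:am-forests}.2 to re-attach isolated vertices. Your explicit handling of the isolated vertices is slightly more careful than the paper's one-line argument, but the substance is identical.
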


\noindent
Below we examine amenability of $F$-free graphs for $F\in\{3P_2,\,P_3+2P_2,\,2P_3\}$.
As the simplest application of Lemma \ref{lem:am-forests}.2,
while proving the amenability, one can always assume that the graph under
consideration has no isolated vertex.

\begin{lemma}\label{lem:vc2am}
  If $\tau(H)\le2$, then $H$ is amenable.
\end{lemma}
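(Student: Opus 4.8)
The plan is to reduce $H$ to an explicit normal form and then show directly that color refinement reconstructs it up to isomorphism. First I would discard isolated vertices: since $sK_1$ is a forest, Lemma~\ref{lem:am-forests}.2 lets us assume $H$ has no isolated vertex without affecting amenability. Since $\tau(H)\le2$, the graph $H$ has a vertex cover $\{a,b\}$, and the remaining vertices form an independent set, each adjacent to $a$ only, to $b$ only, or to both (cf.\ Lemma~\ref{lem:tau2}). Writing $A,B,C$ for these three classes, $p,q,r$ for their sizes, and $\epsilon\in\{0,1\}$ for the presence of the edge $ab$, the graph $H$ is completely determined by $(p,q,r,\epsilon)$. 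If $H$ is acyclic (which happens exactly when $r\le1$ and not $(r=1,\epsilon=1)$), then $H$ is a forest and is amenable by Lemma~\ref{lem:am-forests}.1. So I may assume $H$ is non-acyclic, i.e.\ $r\ge2$ or $(r=1,\epsilon=1)$; note that $H$ is then connected, since every cycle must pass through $\{a,b\}$ and every vertex is adjacent to $a$ or $b$.

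The crucial structural feature of this normal form is that every edge of $H$ is incident to $\{a,b\}$; in particular $A\cup B\cup C$ induces no edges. I would next read off the stable partition $\Pa_H$ in the main case where $\deg a\ge3$ and $\deg b\ge3$, so that $a,b$ are precisely the vertices of degree at least $3$. When $p\ne q$ (equivalently $\deg a\ne\deg b$) the degrees already separate $a$ from $b$, one refinement round separates $A$ from $B$ through the colors of their unique neighbors, and $C$ remains the class of degree-$2$ vertices, whence $\Pa_H=\{\{a\},\{b\},A,B,C\}$ (empty classes omitted). When $p=q$ the swap $a\leftrightarrow b$ is an automorphism, so $a,b$ share a cell and $A,B$ merge, giving $\Pa_H=\{\{a,b\},\,A\cup B,\,C\}$. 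In either case every non-singleton cell is an independent set all of whose neighbors lie in a cell of size at most $2$, and the only cell that can carry an internal edge is the two-element cell $\{a,b\}$, whose induced graph has $\epsilon$ edges.

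The reconstruction step then runs as follows. Suppose $G\eqkkwl1 H$. By the equality of stable color multisets together with Lemma~\ref{lem:degree}, the stable partition of $G$ has the same cells (identified by color), the same cell sizes, and $D_G=D_H$. For any two distinct cells $X,Y$ the bipartite graph $G[X,Y]$ is forced up to isomorphism by the entry $d(X,Y)$: it is empty when $d(X,Y)=0$, complete when $d(X,Y)=|Y|$ (which covers every edge emanating from a singleton cell as well as the edges from $C$ to $\{a,b\}$), and in the remaining case (the edges between the size-$2$ cell $\{a,b\}$ and the class $A\cup B$) the large side has degree $1$, so the bipartite graph is again unique up to a permutation of $A\cup B$. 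Likewise each cell induces a unique graph: the empty graph on every independent-set cell, and the $\epsilon$-regular graph on the two-element cell. Hence the data $(\text{cells},\text{sizes},D_H)$ admits a unique realization up to isomorphism, which must be $H$ itself, and therefore $G\cong H$; note that this argument does not presuppose that $G$ is connected.

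The one delicate point, and the main obstacle, is the degenerate coincidence of degrees, which is exactly what I deferred above. If $p=0$ then $\deg a=r+\epsilon$ may equal $2$, so $a$ starts in the same degree class as $C$ (and symmetrically for $q=0$). Here I would check that a single refinement round still separates $a$ from $C$, because every vertex of $C$ has a neighbor of degree $\deg b$ while both neighbors of $a$ have degree $2$ --- unless $\deg b=2$ as well. The latter forces $p=q=0$ and $\deg a=\deg b=2$, i.e.\ $H\in\{K_3,C_4\}$; these are the unique regular graphs of their order and degree and so are trivially amenable. The only other coincidence is $r=1,\epsilon=1,p=0$, where $a$ and the single vertex of $C$ are genuinely interchangeable and form a two-element cell with one internal edge (a triangle with pendant leaves at $b$), which already fits the framework of the third paragraph. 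Verifying this short finite list of exceptions, together with the symmetric cases in $q$, completes the proof.
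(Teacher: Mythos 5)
Your proof is correct and follows essentially the same route as the paper's: the same decomposition of $V(H)$ into the cover $\{a,b\}$, the private neighborhoods and the common neighborhood, the same case split on whether the two private neighborhoods have equal size, and the same cell-by-cell reconstruction of any \WL1-equivalent graph via Lemma~\ref{lem:degree}. If anything, you are more explicit than the paper about the degenerate coincidences where $\deg a$ or $\deg b$ drops to $2$ (e.g.\ $K_3$, $C_4$, and the triangle with pendant leaves), which the paper's stated stable partitions quietly gloss over.
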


The subgraph of a graph $G$
induced by a subset of vertices $X\subseteq V(G)$ is denoted by
$G[X]$. For two disjoint vertex subsets $X$ and $Y$, we denote by
$G[X,Y]$ the bipartite graph with vertex classes $X$ and $Y$ and all
edges of $G$ with one vertex in $X$ and the other in $Y$.

\begin{proof}
Let $\{u,v\}$ be a vertex cover of $H$. The set $V(H)\setminus\{u,v\}$ consists of
three parts: the common neighborhood of $u$ and $v$, the neighbors solely of $u$,
and the neighbors solely of $v$; see Figure~\ref{fig:proof-vc2am}(a).
Denote the first part by $C$ and the last two parts by $A$ and $B$ respectively.

\begin{figure}[t]
  \centering
\begin{tikzpicture}[every node/.style={circle,draw,inner sep=2pt,fill=black},scale=.6]

  \begin{scope}
    \path (0,0) node (u) {}
      (2,0) node (v) {} edge[dashed] (u)
     (-1,0) node (u0) {} edge (u)
     (-1,1) node (u1) {} edge (u)
     (-1,2) node (u2) {} edge (u)
      (3,0) node (v0) {} edge (v)
      (3,1) node (v1) {} edge (v)
     (1,-1) node (w1) {} edge (u) edge (v) 
      (1,1) node (w2) {} edge (u) edge (v) 
      (1,2) node (w3) {} edge (u) edge (v);
\node[draw=none,fill=none] at ([shift={(0,-.5)}]u) {$u$};
\node[draw=none,fill=none] at ([shift={(0,-.5)}]v) {$v$};
\node[draw=none,fill=none] at (-1,3) {$A$};
\node[draw=none,fill=none] at (0,3) {$D$};
\node[draw=none,fill=none] at (1,3) {$C$};
\node[draw=none,fill=none] at (2,3) {$E$};
\node[draw=none,fill=none] at (3,3) {$B$};
\colclass{u0,u2}
\colclass{v0,v1}
\colclass{w1,w3}
\colclass{u}
\colclass{v}
\node[draw=none,fill=none] at (-2.5,.5) {(a)};
  \end{scope}

  \begin{scope}[xshift=70mm,yshift=5mm]
    \path (0,0) node (u) {}
      (2,0) node (v) {} edge[dashed] (u)
  (-.5,1.5) node (u1) {} edge (u)
    (0,1.5) node (u2) {} edge (u)
   (.5,1.5) node (u3) {} edge (u)
  (1.5,1.5) node (v1) {} edge (v)
    (2,1.5) node (v2) {} edge (v)
  (2.5,1.5) node (v3) {} edge (v)
 (-.5,-1.5) node (w1) {} edge (u) edge (v) 
   (1,-1.5) node (w2) {} edge (u) edge (v) 
 (2.5,-1.5) node (w3) {} edge (u) edge (v);
\node[draw=none,fill=none] at (4,1.5) {$A\cup B$};
\node[draw=none,fill=none] at (4,0) {$D\cup E$};
\node[draw=none,fill=none] at (4,-1.5) {$C$};
\node[draw=none,fill=none,below left] at ([shift={(-.3,0)}]u) {$u$};
\node[draw=none,fill=none,below right] at ([shift={(.3,0)}]v) {$v$};
\colclass{u,v}
\colclass{u1,v3}
\colclass{w1,w3}
\node[draw=none,fill=none] at (-2,0) {(b)};
  \end{scope}

\end{tikzpicture}
  \caption{Proof of Lemma \ref{lem:vc2am}.} 
  \label{fig:proof-vc2am}
\end{figure}

If $H$ is a forest, we are done by Lemma \ref{lem:am-forests}.1.
Suppose, therefore, that $H$ has a cycle, that is, $|C|\ge2$ or $|C|=1$
and $u$ and $v$ are adjacent. Thus, both $\deg u\ge2$ and $\deg v\ge2$,
which means that $A\cup B$ is exactly the set of all pendant\footnote{%
We call a vertex $v$ \emph{pendant} if $\deg v=1$.}
vertices in~$H$.

If $|A|\ne|B|$, then the stable partition $\Pa_H$ consists of the cells 
$A$, $B$, $C$, $D=\{u\}$, and $E=\{v\}$; see Figure~\ref{fig:proof-vc2am}(a).
In degenerate cases, $A$ or $B$
can be an empty set. A key observation is that the graph $H[X]$ for every cell $X\in\Pa_H$
is empty.
Moreover, for every two cells $X,Y\in\Pa_H$, the bipartite graph $H[X,Y]$
is either complete or empty. Suppose that $G\eqkkwl1H$. Let $A',B',C',D',E'$
be the cells of $G$ with the same stabilized colors as $A,B,C,D,E$ respectively.
We have $|X|=|X'|$ for each $X\in\Pa_H$ and its counterpart $X'\in\Pa_G$.
Moreover, Lemma \ref{lem:degree} implies that every $G[X']$ is empty and
$G[X',Y']$ is complete or empty in full accordance with $H[X,Y]$.
It follows that any bijection $f\function{V(H)}{V(G)}$ taking each cell $X$
to its counterpart $X'$ is an isomorphism from $H$ to $G$.
A similar argument will repeatedly be used throughout this subsection.

If $|A|=|B|$, then $\Pa_H$ consists of three cells 
$A\cup B$, $C$, $D\cup E$, where $C$ can be empty; see Figure~\ref{fig:proof-vc2am}(b).
Note that both $H[A\cup B]$ and $H[C]$ are empty, while
$H[D\cup E]$ is either complete or empty depending on 
adjacency of $u$ and $v$. Moreover, $H[C,D\cup E]$ is complete,
$H[C,A\cup B]$ is empty, and $H[D\cup E,A\cup B]\cong2K_{1,|A|}$,
where the two stars are centered at $u$ and $v$.
Denote the corresponding cells in a \WL1-indistinguishable graph $G$
by $(A\cup B)'$, $C'$, and $(D\cup E)'$. Lemma \ref{lem:degree} implies,
in particular, that $d((D\cup E)',(A\cup B)')=d(D\cup E,A\cup B)=|A|$
and $d((A\cup B)',(D\cup E)')=d(A\cup B,D\cup E)=1$. This determines
the graph $G[(D\cup E)',(A\cup B)']$ up to isomorphism, namely
$G[(D\cup E)',(A\cup B)']\cong2K_{1,|A|}\cong H[D\cup E,A\cup B]$.
Similarly to the preceding case, consider a cell-preserving bijection
$f\function{V(H)}{V(G)}$, assuming additionally that $f$ is
an isomorphism from $H[D\cup E,A\cup B]$ to $G[(D\cup E)',(A\cup B)']$.
As easily seen, $f$ is an isomorphism from the whole graph $H$ to~$G$.
\end{proof}

\begin{lemma}\label{lem:3P2am}
  Every $3P_2$-free graph $H$ is amenable unless $H=2C_3$.
\end{lemma}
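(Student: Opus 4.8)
The plan is to read off $H$ from the explicit description of $\forb{3P_2}$ in Lemma~\ref{lem:3P2} and to settle each resulting family using the amenability tools already in place. First I would remove isolated vertices: by Lemma~\ref{lem:am-forests}.2 we may assume $H$ has none. By Lemma~\ref{lem:3P2}.2 such an $H$ either has at most five vertices or is a subgraph of one of $2K_3$, $K_1*(K_3+sK_1)$, or $K_2*sK_1$. Three of these are disposed of at once. Every graph on at most five vertices is amenable (the smallest $\eqkkwl1$-equivalent nonisomorphic pairs, such as $C_6$ and $2C_3$, have six vertices). Every subgraph of $K_2*sK_1$ satisfies $\tau(H)\le2$ by Lemma~\ref{lem:tau2} and is therefore amenable by Lemma~\ref{lem:vc2am}. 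Finally, a six-vertex subgraph of $2K_3$ without isolated vertices is $2K_3$, $K_3+P_3$, or $2P_3$: the last two are amenable by Lemma~\ref{lem:am-forests} (a forest, respectively a triangle plus a forest), while $2K_3=2C_3$ is the claimed exception---indeed its whole vertex set is a single cell carrying a $2$-regular graph, which $C_6$ realises as well.

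It remains to handle a subgraph $H$ of $K_1*(K_3+sK_1)$; write $x$ for the apex, $\{a,b,c\}$ for the triangle, and call the remaining vertices \emph{pendants} (each of degree one, attached to $x$). If $H$ omits some triangle edge, then the triangle induces a subgraph of $C_3$ covered by a single vertex, so $\{x,\cdot\}$ is a vertex cover, $\tau(H)\le2$, and $H$ is already covered above. Hence I may assume $H$ contains the full triangle $abc$. If $H$ has no pendant then $v(H)\le4$ and $H$ lies in the small-graph case, so I may assume at least one pendant is present. Then $x$ is the only vertex adjacent to a degree-one vertex, so $\{x\}$ is a singleton cell of $\Pa_H$; the pendants form one independent cell, and the triangle vertices split, according to their adjacency to $x$, into at most two cells, each inducing a complete graph ($K_3$, or an edge $K_2$ together with a singleton).

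For this partition one checks that every cell induces a complete or empty graph and that the bipartite graph between any two cells is complete or empty, which is precisely the situation of the first case in the proof of Lemma~\ref{lem:vc2am}. Thus for any $G\eqkkwl1 H$, Lemma~\ref{lem:degree} supplies cells of $G$ with the same sizes and degree matrix; each induced cell-graph and each induced bipartite graph is then forced to match that of $H$, and any cell-preserving bijection $V(H)\to V(G)$ is an isomorphism, giving $G\cong H$. I expect the only genuine work to be organisational: confirming that the two reductions (discarding the case of a non-full triangle through $\tau(H)\le2$, and the pendant-free case through $v(H)\le4$) are exhaustive, and computing $\Pa_H$ correctly for the few shapes of the triangle-containing core. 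The conceptual safeguard throughout is that once $x$ and the pendants are split off, no cell exceeds three vertices, so no cell admits two nonisomorphic regular realisations; this is exactly what fails for the single exceptional graph $2C_3$, whose six vertices form one cell.
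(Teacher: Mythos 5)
Your proof is correct and follows essentially the same route as the paper: the classification of Lemma~\ref{lem:3P2}.2, the vertex-cover tools of Lemmas~\ref{lem:tau2} and~\ref{lem:vc2am} for the $K_2*sK_1$ family, and a direct computation of the stable partition of subgraphs of $K_1*(K_3+sK_1)$ followed by the complete-or-empty cell argument via Lemma~\ref{lem:degree}. The only differences are cosmetic: you reduce the missing-triangle-edge case to $\tau(H)\le2$ where the paper instead folds degree-one triangle vertices into the pendant cell, and you treat the $2K_3$ subcase (the source of the exception $2C_3$) explicitly, which the paper leaves implicit.
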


\begin{proof}
  We use the description of the class of $3P_2$-free graph
provided by Lemma \ref{lem:3P2}.2. An easy direct inspection reveals
that all graphs with at most 5 vertices are amenable.
In view of Lemmas \ref{lem:tau2} and \ref{lem:vc2am},
it remains to consider the case that $H$ is a subgraph of 
the graph $K_1*(K_3+sK_1)$ for some $s\ge2$; see Figure~\ref{fig:3P2}.

Denote the set of pendant vertices of $K_1*(K_3+sK_1)$ by $A$.
We suppose that at least two vertices from $A$ are in $H$ for else $v(H)\le5$.
 Denote the non-pendant vertices
of $K_1*(K_3+sK_1)$ by $x_0$, $x_1$, $x_2$, and $x_3$, where $x_0$ is the common
neighbor of all pendant vertices.
By our general assumption, $H$ has no isolated vertex, which implies that $x_0$ belongs to $H$
and all vertices in $A$ are adjacent to $x_0$ in $H$.
Assume first that none of the vertices $x_1$, $x_2$, and $x_3$ is pendant in $H$.
Then $\Pa_H$ includes the cells $A$ and $\{x_0\}$.
Note that any other cell in $\Pa_H$ contains at most 3 vertices.
As easily seen, for any $X,Y\in\Pa_H$ the graph $H[X]$ is complete or empty and
 $H[X,Y]$ is a complete or empty bipartite graph.
Due to this fact, any cell-respecting bijection from $V(H)$
to the vertex set of an \WL1-indistinguishable graph $G$ provides
an isomorphism from $H$ to~$G$.

If any of $x_1$, $x_2$, and $x_3$ is pendant in $H$, then it joins the cell $A$,
and the previous argument applies.
\end{proof}

\begin{lemma}\label{lem:P322am}
  Every $(P_3+2P_2)$-free graph $H$ is amenable unless $H=2C_3$ or $H=C_6$.
\end{lemma}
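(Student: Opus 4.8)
The plan is to run the structural description of $(P_3+2P_2)$-free graphs supplied by Lemma~\ref{lem:forb}.2 through the amenability results already available for the relevant subfamilies, leaving only a finite collection of small graphs to be settled by hand. As in the preceding proofs, I would first invoke Lemma~\ref{lem:am-forests}.2 to delete isolated vertices, so that $H$ may be assumed to have none; it then suffices to treat the three cases of Lemma~\ref{lem:forb}.2 in turn.

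If $\Delta(H)=1$, then $H=sK_2$ is a forest and hence amenable by Lemma~\ref{lem:am-forests}.1. If $H$ is a subgraph of $K_2*sK_1$, then $\tau(H)\le2$ by Lemma~\ref{lem:tau2}, and Lemma~\ref{lem:vc2am} yields amenability directly. If $H$ is a subgraph of $K_1*(K_3+sK_1)$, then $H$ is $3P_2$-free, since $K_1*(K_3+sK_1)$ is $3P_2$-free (cf.\ Lemma~\ref{lem:3P2}) and $3P_2$-freeness passes to subgraphs; Lemma~\ref{lem:3P2am} then gives amenability, with the only possible exception $H=2C_3$, which is among the two graphs we are permitted to omit. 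In particular every $(P_3+2P_2)$-free graph on more than six vertices is covered, because for $v(H)>6$ the first case of Lemma~\ref{lem:forb}.2 cannot apply.

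This leaves precisely the first case of Lemma~\ref{lem:forb}.2, namely $v(H)\le6$, and this is the step I expect to be the crux. Here there is no structural leverage: the pattern $P_3+2P_2$ has seven vertices, so \emph{every} graph on at most six vertices is $(P_3+2P_2)$-free and must be examined on its own. Graphs on at most five vertices are all amenable (a finite inspection, already used in the proof of Lemma~\ref{lem:3P2am}), so the work concentrates on the six-vertex graphs of maximum degree at least two. For each such $H$ I would compute the stable partition $\Pa_H$ and apply the template of Lemmas~\ref{lem:vc2am} and~\ref{lem:3P2am}: whenever every cell of $\Pa_H$ induces a complete or empty graph and every pair of cells induces a complete or empty bipartite graph, Lemma~\ref{lem:degree} forces any cell-respecting bijection onto a \WL1-equivalent graph $G$ to be an isomorphism, whence $H$ is amenable.

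The main obstacle is thus to isolate exactly which six-vertex graphs defeat this template. The difficulty is that the template breaks down when $\Pa_H$ is too coarse to determine the internal structure of a cell, which is the case for degree-homogeneous graphs; this is why the dangerous instances are the regular ones, whose stable partition collapses. I would verify by a direct enumeration of the relevant six-vertex graphs that the argument succeeds in every case except $2C_3$ and $C_6$, so that these two emerge as the claimed exceptions. Carrying out this finite enumeration carefully, rather than any single clever idea, is where the real effort lies.
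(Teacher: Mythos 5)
Your proposal is correct and follows essentially the same route as the paper: decompose via Lemma~\ref{lem:forb}.2, dispose of $sK_2$ by Lemma~\ref{lem:am-forests}, handle Part~iii via $3P_2$-freeness and Lemma~\ref{lem:3P2am} (your detour through Lemmas~\ref{lem:tau2} and~\ref{lem:vc2am} for the $K_2*sK_1$ subcase is what Lemma~\ref{lem:3P2am} does internally anyway), and settle the $v(H)\le6$ case by a finite inspection that produces exactly the exceptions $2C_3$ and $C_6$. The paper compresses that last step into a one-line ``direct inspection'' with a footnote, so your extra commentary on how to organize the enumeration is harmless elaboration rather than a divergence.
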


\begin{proof}
  We use the description of the class $\forb{P_3+2P_2}$
provided by Lemma \ref{lem:forb}.2. A direct inspection shows
that all graphs with at most 6 vertices except $2C_3$ and $C_6$ are amenable.\footnote{%
This also follows easily from the characterization of the amenability in~\cite{ArvindKRV17}.}
Each graph $sK_2$, like any forest, is amenable. Every graph in Part iii of Lemma \ref{lem:forb}.2
is even $3P_2$-free and, therefore, amenable by Lemma~\ref{lem:3P2am}.
\end{proof}

\begin{lemma}\label{lem:2P3am}
  Every $2P_3$-free graph $H$ is amenable.
\end{lemma}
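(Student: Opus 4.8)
The plan is to run through the structural description of $\forb{2P_3}$ supplied by Lemma~\ref{lem:forb}.3 and verify amenability case by case, peeling off forest parts with Lemma~\ref{lem:am-forests}. Since $sK_2$ is a forest, Lemma~\ref{lem:am-forests}.2 reduces cases (i) and (ii) to the amenability of $H_0$ (with $v(H_0)\le5$) and of the net graph $N$, respectively; the same lemma lets us discard isolated vertices throughout. For case (i) every graph on at most five vertices is amenable by direct inspection (as already used in the proof of Lemma~\ref{lem:3P2am}), so nothing remains to prove there. Case (iii) asks for the amenability of every subgraph of a windmill graph $K_1*sK_2$; here the components not containing the central vertex are single edges, hence by Lemma~\ref{lem:am-forests}.2 we may assume $H$ is connected and contains the centre $x_0$.

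For the net graph $N$ I would argue directly from the stable partition. Color refinement splits $V(N)$ into the cell $T$ of the three triangle vertices (each of degree~$3$) and the cell $P$ of the three pendant vertices (each of degree~$1$), and this partition is already equitable. The induced graph $N[T]$ is $2$-regular on three vertices, hence $K_3$; $N[P]$ is empty; and $N[T,P]$ is biregular of degrees $(1,1)$, hence a perfect matching. If $G\eqkkwl1 N$ then $D_G=D_N$ by Lemma~\ref{lem:degree}, so $G$ has corresponding cells $T',P'$ of the same sizes with $G[T']\cong K_3$, $G[P']$ empty, and $G[T',P']$ again $(1,1)$-biregular, i.e.\ a perfect matching. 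Since $K_3$ and a perfect matching are each unique up to isomorphism, any cell-preserving bijection carrying $N[T,P]$ to $G[T',P']$ is an isomorphism, so $N$ is amenable and therefore so is $N+sK_2$.

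The main case is (iii). If $\deg x_0\le2$ then $v(H)\le5$ and we finish by inspection, so assume $\deg x_0\ge3$; then $x_0$ is the unique vertex of degree exceeding~$2$ and forms a singleton cell. Examining how each petal $\{x_0,a_i,b_i\}$ can survive in the connected subgraph, every outer vertex falls into exactly one of four types, giving the stable cells: $T$ (vertices of a retained triangle), $M$ (the middle of a retained path $x_0\,a_i\,b_i$), $Q$ (pendant neighbours of $x_0$), and $R$ (path ends); these are separated after one refinement round by their neighbour degrees. Every block is then complete, empty, or a matching: $H[T]$ is $1$-regular, hence a perfect matching; $H[x_0,T]$, $H[x_0,M]$, $H[x_0,Q]$ are complete bipartite; $H[M,R]$ is $(1,1)$-biregular, hence a perfect matching; and all remaining blocks are empty. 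As in the net case, for any $G\eqkkwl1 H$ the equality $D_G=D_H$ forces each block of $G$ to be isomorphic to the corresponding block of $H$. Crucially the two matching blocks $H[T]$ and $H[M,R]$ sit on disjoint cells and are joined to the rest only through complete or empty blocks, so the alignment constraints decouple: a cell-preserving bijection that is a matching-isomorphism on $T$ and on $M\cup R$ (and arbitrary on $Q$) is a global isomorphism $H\to G$, and $H$ is amenable.

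The only real obstacle is that, unlike in the earlier amenability lemmas, these stable partitions contain bipartite blocks that are neither complete nor empty (the matchings inside $T$, between $T$ and $P$ for the net, and between $M$ and $R$ for the windmill). The ``complete or empty'' template does not apply verbatim; what rescues it is the elementary observation that a biregular bipartite graph of degrees $(1,1)$---and likewise a $1$-regular graph---is a perfect matching, hence determined up to isomorphism by its size. The one point demanding care is that several matching blocks must be aligned \emph{simultaneously}; this succeeds precisely because they occupy disjoint cells linked only by complete or empty blocks, as in the bookkeeping sketched above.
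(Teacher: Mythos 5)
Your proof is correct and follows essentially the same route as the paper's: the case split from Lemma~\ref{lem:forb}.3, reduction via Lemma~\ref{lem:am-forests}, and, for subgraphs of windmill graphs, the same stable partition (your cells $T,M,Q,R,\{x_0\}$ are the paper's $A,C,B,D,E$) together with the same cell-preserving bijection argument that aligns the two matching blocks while all other blocks are complete or empty. The only cosmetic difference is that you handle the net graph by a direct degree-matrix argument, whereas the paper invokes the inspection that every graph on at most six vertices other than $2C_3$ and $C_6$ is amenable.
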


\begin{proof}
  We use the description of the class $\forb{2P_3}$
provided by Lemma \ref{lem:forb}.3.
If $H$ is as in Part i or ii, then its amenability follows from Lemma \ref{lem:am-forests}.2
and the aforementioned fact that all graphs with at most 6 vertices except $2C_3$ and $C_6$ are amenable.
Suppose, therefore, that $H$ is a subgraph of some windmill graph.

If $H$ contains no triangle, it is acyclic and amenable by Lemma \ref{lem:am-forests}.1.
If $H$ contains $C_3$ as a connected component, then $H=C_3+sK_2$, and this graph
is amenable by Lemma \ref{lem:am-forests}.2. In any other case, $H$ contains a triangle
with one vertex of degree at least 3. This is the only vertex of degree at least 3
in $H$; let us denote it by $u$. The stable partition of $H$ looks as shown
in Figure~\ref{fig:proof-2P3am}.
Specifically, $\Pa_H=\{A,B,C,D,E\}$, where $A$ consists of those neighbors of $u$
which are adjacent to another neighbor of $u$, $B$ of the pendant neighbors
of $u$, $C$ of the remaining  neighbors of $u$, $D$ of all non-neighbors of $u$
and, finally, $E=\{u\}$ ($B$, $C$, and $D$ can be empty).

\begin{figure}[t]
  \centering
\begin{tikzpicture}[every node/.style={circle,draw,inner sep=2pt,fill=black},scale=.8]
     \path (0,0) node (u) {}
      (-.25,1.5) node (b1) {} edge (u)
       (.25,1.5) node (b2) {} edge (u)
      (-1.2,1.5) node (a1) {} edge (u)
      (-1.7,1.5) node (a2) {} edge (u) edge (a1)
      (-2.3,1.5) node (a3) {} edge (u)
      (-2.8,1.5) node (a4) {} edge (u) edge (a3)
       (1.2,1.5) node (c1) {} edge (u)
       (1.7,1.5) node (c2) {} edge (u) 
       (2.2,1.5) node (c3) {} edge (u)
       (1.2,2.5) node (d1) {} edge (c1)
       (1.7,2.5) node (d2) {} edge (c2) 
       (2.2,2.5) node (d3) {} edge (c3);
\node[draw=none,fill=none,below left] at ([shift={(-.2,0)}]u) {$u$};
\node[draw=none,fill=none] at (-2,2.2) {$A$};
\node[draw=none,fill=none] at ([shift={(.7,0)}]d3) {$D$};
\node[draw=none,fill=none] at ([shift={(.7,0)}]c3) {$C$};
\node[draw=none,fill=none] at ([shift={(.7,0)}]u) {$E$};
\node[draw=none,fill=none] at (0,2.2) {$B$};

\colclass{a4,a1}
\colclass{b1,b2}
\colclass{c1,c3}
\colclass{u}
\colclass{d1,d3}

\end{tikzpicture}
  \caption{Proof of Lemma \ref{lem:2P3am}.} 
  \label{fig:proof-2P3am}
\end{figure}

Suppose that $G\eqkkwl1H$, and let $A',B',C',D',E'$ be the cells  
of $G$ with the same stabilized colors as $A,B,C,D,E$ respectively.
For any cell $X\ne A$, the graph $H[X]$ is empty.
By Lemma \ref{lem:degree}, $G[X']$ is also empty unless $X'=A'$.
Furthermore, $H[A]\cong tK_2$, where $t=|A|/2$, and hence $d(A)=1$.
By Lemma \ref{lem:degree}, $d(A')=1$ as well, which implies that
$G[A']\cong tK_2\cong H[A]$.

For any two cells $X,Y\in\Pa_H$, the bipartite graph $H[X,Y]$ is empty or
complete unless $\{X,Y\}=\{C,D\}$. By Lemma \ref{lem:degree},
the bipartite graph $G[X',Y']$ is empty or complete in the exact accordance with
$H[X,Y]$ unless $\{X',Y'\}=\{C',D'\}$. Furthermore, $H[C,D]\cong qK_2$,
where $q=|C|=|D|$, is a matching between $C$ and $D$. Therefore, $d(C,D)=d(D,C)=1$.
By Lemma \ref{lem:degree}, $d(C',D')=d(D',C')=1$, which implies that 
$G[C',D']$ is a matching between $C'$ and $D'$, and $G[C',D']\cong qK_2\cong H[C,D]$.

We construct an isomorphism $f$ from $H$ to $G$ as follows. First of all,
$f$ takes each cell $X\in\Pa_H$ onto the corresponding cell $X'\in\Pa_G$.
In particular, $f(u)=u'$, where $E'=\{u'\}$. On the cells $B$ and $C$,
the map $f$ is defined arbitrarily. On the cell $A$, the map $f$
is fixed to be an arbitrary isomorphism from $H[A]$ to $G[A']$.
Finally, $f$ is defined on $D$ so that the restriction of $f$ to $C\cup D$
is an isomorphism from $H[C,D]$ to~$G[C',D']$.
\end{proof}

\end{document}